\documentclass[11pt,a4paper,pdftex]{scrartcl}
\usepackage[utf8]{inputenc}  
\usepackage[T1]{fontenc}
\usepackage[english]{babel}
\usepackage{amsmath}
\usepackage{amsthm}
\usepackage{amsfonts}
\usepackage{amssymb}
\usepackage[pdftex]{graphicx}
\usepackage{booktabs}
\usepackage{url}
\usepackage{color}
\usepackage{enumerate}
\usepackage{dsfont}
\usepackage{tikz}

\newcommand{\R}{\mathbb{R}}
\newcommand{\N}{\mathbb{N}}

\newcommand{\diff}{\operatorname{d}\!}

\newcommand{\C}{\mathcal{C}}
\newcommand{\D}{\mathcal{D}}

\newcommand{\scalprod}[2]{\left\langle #1, #2 \right\rangle}
\newcommand{\norm}[1]{\left\| #1 \right\|}

\newcommand{\bN}{\mathbf{N}}

\DeclareMathOperator{\const}{Const}
\newcommand{\mcal}{\mathcal}

\newcommand{\trec}{t_{\operatorname{rec}}}

\usepackage[autostyle]{csquotes}

\usepackage[hidelinks]{hyperref}
\ifdefined\hypersetup
\hypersetup{
	pdfkeywords={}, linkcolor=black, citecolor=black, filecolor=black, urlcolor=black,
}
\fi
\usepackage{aliascnt}
\usepackage[capitalise]{cleveref}
\let\oldtheorem\newtheorem
\RenewDocumentCommand{\newtheorem}{s m o m O{}}{%
\IfBooleanTF{#1}%
{\oldtheorem{#2}{#4}}%
{\IfNoValueTF{#3}{\oldtheorem{#2}{#4}[#5]}%
{\newaliascnt{#2}{#3}%
\oldtheorem{#2}[#2]{#4}%
\aliascntresetthe{#2}}}}

\theoremstyle{plain}
\newtheorem{theorem}{Theorem}[section]
\newtheorem{proposition}[theorem]{Proposition}
\newtheorem{corollary}[theorem]{Corollary}
\newtheorem{lemma}[theorem]{Lemma}

\theoremstyle{definition}
\newtheorem{definition}[theorem]{Definition}

\theoremstyle{remark}

\newtheorem{remark}[theorem]{Remark}

\crefformat{theorem}{Thm.~#2#1#3}
\crefformat{lemma}{Lem.~#2#1#3}
\crefformat{definition}{Defn.~#2#1#3}
\crefformat{equation}{(#2#1#3)}
\crefformat{appendix}{App.~#2#1#3}
\crefformat{question}{Question~#2#1#3}
\crefformat{proposition}{Prop.~#2#1#3}

\begin{document}

\title{Blow-Up, Asymptotics, and Improbability of Collision Singularities in the n-Body Problem}
\author{Nathan Duignan\thanks{School of Mathematics and Statistics, University of Sydney, Sydney, Australia} \and Manuel Quaschner\thanks{Institut für Mathematik, Friedrich Schiller Universität Jena, 07737 Jena, Germany}}
\maketitle

\begin{abstract}
We study finite-time singularities in $n$-body systems with pair potentials that are homogeneous of degree $-\alpha$, where $0<\alpha<2$, that are not necessarily attractive and with bounded time-dependent external forces. 
We develop a systematic method to analyse a colliding subsystem in the presence of simultaneous collision or non-collision singularities elsewhere in the system.
We extend classical results of von Zeipel and Painlev\'e to such subsystems, establish bounds on their internal energy near total collision, and implement a McGehee blow-up that remains effective without conservation of energy. The blow-up yields the asymptotics of the cluster size, and in the case of attractive potentials, the convergence towards the set of central configurations and asymptotics of relative position, relative velocities, and internal angular momentum. 
As an application of the asymptotic results, we construct a simplified proof that the initial conditions leading to collision form a set of measure zero.
\end{abstract}

\section{Introduction}
\label{sec:Introduction}

The $n$-body problem is one of the classical problems in mathematical physics. The problem can be stated simply; given $n$ masses with initial conditions for the positions and velocity of each body, determine the consequent motion of the bodies under a given potential $V$. The most classical potential is the gravitational Newtonian potential. Despite centuries of dedicated research, there remains many open problems concerning the motion of the $n$ bodies. In this paper, we focus on the possible \emph{singular orbits} of the $n$-body problem, i.e. when solutions only exist for a finite time. 

We investigate the dynamics of $n$ bodies under pair potentials on $\R^d$ of the form
\[
    V(q) = \sum_{j<k} V_{jk}(q),\qquad  V_{jk}(q)=\frac{Z_{jk}}{\norm{q_j-q_k}^{\alpha}},
    \qquad j<k,\ Z_{jk}\in\R,\ 0<\alpha<2,
\]
where $q_j \in \R^d$ is the position of the $j$-th body. In addition, we allow for bounded time-dependent external forces $F_j(t)$ on each body. Consequently, if $p_j\in \R^d$ is the momentum of the $j$-th body and $m_j$ the mass, the equations of motion are 
\begin{equation}\label{eq:Motion}
	\dot{q}_j = \frac{p_j}{m_j},\qquad \dot{p}_j = -\frac{\partial V}{\partial q_j} + F_j(t),\qquad j = 1,\dots, n 
\end{equation}
Newtonian gravity is the case $\alpha=1$, $Z_{jk}=-m_jm_k$, with no external force. However, the general assumption that $Z_{jk}\in\R$ permits simultaneous proofs of results for more general systems such as the dynamics of point charges.

Finite-time singularities are a central obstruction to understanding the global dynamics of the $n$-body problem. Collision singularities occur when two or more bodies approach one another, while for $n>3$ there may also be non-collision singularities. Painlev\'e conjectured the existence of non-collision singularities in the Newtonian problem in 1897 \cite{painleve1897leccons}, Xia established their existence for $n \geq 5$ almost a century later \cite{xia1992existence}, and the four-body case was resolved more recently \cite{xue2020non}. The classical theory of finite-time singularities begins with Painlev\'e's theorem, which shows that the minimum mutual distance tends to zero at any finite-time singularity \cite{painleve1897leccons}, and von Zeipel's theorem, which relates bounded singular motion to collision \cite{von1908singularites,sperling1970real} (see also the exposition of McGehee \cite{mcgeheeZeipelsTheoremSingularities1986}).

The asymptotic study of collision singularities originates with Sundman and Chazy and was developed systematically by Wintner and Sperling \cite{sundmanMemoireProblemeTrois1913,chazyCertainesTrajectoiresProbleme1918,wintner1941analytical,sperling1970real,sperling1972collision}. For total collisions of all $n$ bodies under a Newtonian potential, this theory gives the sharp moment-of-inertia law
$ I(t)\sim A(T^+-t)^{4/3}$,
and consequently the corresponding rates for kinetic and potential energy, and that the normalised configuration approaches the set of central configurations. Sperling extended the asymptotic theory to partial and simultaneous collisions, where the subsystem energy and momenta are no longer conserved \cite{sperling1970real,sperling1972collision} (see also the work in \cite{gierzkiewiczNoInfiniteSpin2025}). McGehee provided a geometric approach to the study of collision asymptotics by blowing up the singularity to an invariant collision manifold and desingularising time \cite{mcgehee1974triple}. This geometric approach has since been developed in many settings \cite{saari1981manifolds,moeckel1983orbits,elbialy1990collision,duignanChazyTypeAsymptoticsHyperbolic2020,duignan2021c,moeckel2023total}. ElBialy in particular used such methods to obtain refined asymptotics for simultaneous Newtonian collision clusters \cite{elbialy1990collision}, including that each cluster approaches a central configuration. This extension is non-trivial; care must be taken as, for example, even simultaneous binary collisions need not possess the smooth regularisability enjoyed by an isolated binary collision \cite{simoRegularizationSimultaneousBinary1992,martinezDegreeDifferentiabilityRegularization2000,duignanC83regularisationSimultaneous2020,duignan2021c}.

Collision analysis has also been extended beyond the Newtonian potential. Early examples include Devaney's work on anisotropic Kepler and more general mechanical singularities \cite{devaneyCollisionOrbitsAnisotropic1978,devaneyBlowingSingularitiesClassical1982} and McGehee's study of inverse-power binary interactions \cite{mcgeheeDoubleCollisionsClassical1981}. The asymptotic results have also been developed for homogeneous, quasi-homogeneous and time-dependent singular potentials \cite{barutelloSingularitiesGeneralizedSolutions2008}, with related work on quasi-homogeneous total collisions in \cite{diacuCentralConfigurationsTotal2006}. These general theories, however, retain an attractive or `collision-coercive' leading potential, so that the singular potential has a definite sign and diverges on approach to collision. Non-attractive collision dynamics has been studied in several charged few-body models \cite{atelaChargedIsosceles3body1988,sanoClassicalCoulombThreeBody2005,MR1816897}, but these analyses rely on low dimension, symmetry, or total collision. Recently in \cite{fleischer2018improbability,fleischer2019improbability}, the classical theorems of Painlev\'e and von Zeipel have been extended to some non-attractive potentials, however, the results hold for the entire system (not subsystems) and without bounded external forces (which allows for conservation of energy thus some simplifications of proofs).

Thus there is no comparable general theory for either the qualitative dynamics or asymptotics of finite-time singular orbits that allows for arbitrary simultaneous collision clusters, possible simultaneous non-collision singularities, and truly non-attractive potentials. In this work, we develop a general framework for the analysis of finite-time singularities that encompasses $-\alpha$-homogeneous potentials with arbitrary coefficients and with bounded external force.

Our starting point is to observe that a singular orbit admits a \emph{final cluster decomposition} \cite{quaschner2025improbability}. The final cluster decomposition groups together bodies who come close to one another (maybe indirectly) as $t$ approach the finite-time singularity $T^+$ (see \cref{def:FinalClusterDecomposition}). Crucially, a final cluster $C\subseteq\{1,\dots,n\}$ may, or may not, contain a collision singularity. Each final cluster evolves as a subsystem
\begin{equation}\label{eq:ClusterMotion}
    \dot q_j=\frac{p_j}{m_j},\qquad
    \dot p_j=-\partial_{q_j}V_C(q)+F_j^C(t),\qquad j\in C,
\end{equation}
where the effective force $F_j^C$ is bounded and $V_C(q)$ is the internal potential energy of the subsystem (see \cref{prop:clustersAreBoundedForce}). Only boundedness is required, so the forces need not have limits. Hence, this framework allows analysis of a colliding subsystem even when another cluster undergoes a simultaneous non-collision singularity, a regime not covered by treating the full orbit as a collision.

Through the final cluster decomposition, we recover, as special cases, the classical Painlev\'e and von Zeipel results as well as the collision asymptotics, while extending them to final clusters in systems for which the interaction need not be attractive and the potential need not diverge on approach to the collision set (see \cref{thm:vonZeipel,thm:PainleveSubSystems}). A key ingredient is a new proof of Sperling's subsystem results based on Graf cluster decompositions \cite{knauf2018mathematical}, that is valid for more general potentials. Ultimately, we use the generalised von Zeipel theorem to establish in \cref{thm:EnergyBound} a bound on the internal energy of any cluster undergoing collision.

From the bound on the internal energy, we are able to use a McGehee blow-up to obtain the collision asymptotics of any subsystem geometrically (see \cref{prop:OmegaLimitSet}). Importantly, it becomes apparent that for non-attractive potentials the normalised configuration of a collision orbit may approach a secondary collision as the singularity time is neared. Regardless, we demonstrate that the moment of inertia obtains the expected scaling law from classical theory (see \cref{cor:DynamicalEstimates}).

As an application of the asymptotic estimates derived in \cref{cor:DynamicalEstimates} of this paper, we use the Poincar\'e-surface approach to the improbability of collision singularities. Following the strategy of Fleischer and Knauf \cite{fleischer2018improbability,fleischer2019improbability}, we construct using \cref{prop:SetFiniteVolume,prop:DefinitionPoincareSurfaceOneSubsystem} a sequence of transverse hypersurfaces that every sufficiently late collision orbit must cross and show that the induced volumes of the surfaces tend to zero. The provided hypersurfaces are simpler than those in prior works. Importantly, the collision asymptotics obtained control the relative positions and momenta on the surfaces, guarantee eventual inward crossing, and provide the decay required for the volume estimate. Consequently, the set of initial conditions giving rise to the collision singularities considered here has phase-space measure zero (see \cref{cor:ImprobabilityCollisions}). 

\section*{Acknowledgment}
We thank the mathematical research institute MATRIX in Australia where the project was started during the workshop \enquote{Nijenhuis Geometry and Integrable Systems II}. M.Q was additionally supported through a MATRIX-Simons Young Scholar Award for a visit to the University of Sydney, during which the mathematical part of the project was finished.

\section{Set-up and Notation}
\subsection{The n-body problem with bounded force}

Consider $n$ interacting particles in $\R^d$. Let $q_i$ be the position in $\R^d$, $m_i$ the mass, and $p_i := m_i \dot{q}_i \in \R^d$ the conjugate momentum of the $i^{\text{th}}$ particle. We consider the motion of the $n$ particles under an interaction given by the potential
\begin{align*}
	V(q) := \sum_{1 \leq i < j \leq n} V_{ij}(q), \qquad V_{ij}  = \frac{Z_{ij}}{\norm{q_i-q_j}^{\alpha}},\qquad Z_{ij}\in \R,\, \alpha \in (0,2)
\end{align*}
with an additional time-dependent external force on each body $F_i$.  The gravitational $n$-body problem is given when $Z_{ij} = - m_i m_j$ and $F_i = 0$.

Denote by $\bN := \{1, \ldots, n\}$ the set of indices of the particles. The potential $V(q)$ is undefined on the set of collisions
\[ \Delta := \{ q = (q_1,\dots, q_n) \in \R^{nd} \,|\, q_i = q_j \text{ for some } i,j\in N, i\neq j \}. \]
Hence, the configuration space is $ M := \R^{nd} \setminus \Delta $ and the phase space is the cotangent bundle
\[ P := T^* M \simeq M \times \R^{nd}.  \]

The dynamics on the phase space $P$ is 
\begin{equation}\label{eqn:MainEquation}
	\begin{aligned}
		\dot{q}_i & = \frac{p_i}{m_i}, \\
		\dot{p}_i & =  F_i-\partial_{q_i} V(q) = F_i + \sum_{j \in \bN \setminus \{i\}}  \frac{\alpha Z_{ij}}{\norm{q_i-q_j}^{\alpha+2}}\left(q_i-q_j\right).
	\end{aligned}
\end{equation}

	\subsection{Cluster Decompositions and some Coordinates}

	To efficiently account for all types of simultaneous collision or non-collision singularities, we use a cluster notation similar to \cite[Definition 2.2, p.5]{knauf2018asymptotic}, compare as well \cite [Chapter 12.6, p.311]{knauf2018mathematical}. 

\begin{definition}[Set partition] \label{def:SetPartition}
	Let $\bN = \{1,\dots, n\}$ be the set of indices of the bodies. A \textbf{cluster decomposition} of $\bN$ is a set $\mathcal{C}:= \{C_1, \ldots, C_k\}$ of subsets $C_i \subseteq \bN$ such that:
	\begin{enumerate}
		\item Each subset $C_i$ is non-empty.
		\item For any $i\neq j$ the subsets $C_i, C_j$ are disjoint.
		\item The union of all $C_i$ is the entire set $\bN$.
	\end{enumerate}
	Each element $C\in\mcal{C}$ of a cluster decomposition is called a \textbf{cluster}.
\end{definition}
In the setting of the $n$-body problem, a cluster decomposition $\C$ corresponds to a division of the particles. In particular, a cluster decomposition could account for a grouping of particles that are close to collision. A particular prominent type of cluster in this work is that of an \emph{isolated} cluster.

\begin{definition}
	For a given solution $q(t)$ on some time domain $t\in(0,T^+)$, a cluster $C$ is said to be \emph{isolated} on $(0,T^+)$ provided that, there exists some $\delta > 0$ such that
	\[  \min_{i\in C,\, j\notin C}\inf_{t\in(0,T^+)} \norm{q_j(t)-q_i(t)} > \delta.  \] 
\end{definition}

Note that of course the whole set of particles is one isolated cluster. So a natural question is, how small an isolated cluster can be? The \textbf{final cluster decomposition} introduced in \cite{quaschner2025improbability} is exactly the splitting of the particles into the finest isolated clusters. Here we give an equivalent definition to that in \cite{quaschner2025improbability}.

\begin{definition}\label{def:FinalClusterDecomposition}
        Define a graph on the particle labels $N=\{1,\dots,n\}$ by joining $i$ and $j$ whenever
        \[ \liminf_{t\uparrow T^+}\|q_i(t)-q_j(t)\|=0.\]
        Then define the \textbf{final cluster decomposition} to be the connected components of this graph. Equivalently, $i$ and $j$ belong to the same final cluster if and only if there is a chain
        $i=i_0,i_1,\dots,i_\ell=j$ such that
        \[ \liminf_{t\uparrow T^+} \|q_{i_{r-1}}(t)-q_{i_r}(t)\|=0,\qquad r=1,\dots,\ell.\]
\end{definition}

Each cluster comes with useful coordinates that splits physical quantities into internal and external quantities.

\begin{definition}[Cluster coordinates] \label{def:ClusterCoordinates} \quad
	\begin{enumerate}[(i)]
		\item For a nonempty subset $C \subseteq \bN$ we define the \textbf{cluster mass} $m_C$ as total mass of the cluster
		      \begin{align*}
			      m_C & := \sum_{i \in C} m_i
		      \end{align*}
		\item For any nonempty subset $C \subset \bN$, we define the \textbf{center of mass of the cluster $C$} and the \textbf{momentum of the cluster $C$} by
			  \[ q_{C} := \frac{1}{m_{C}}\sum_{i \in C} m_i q_i,\qquad p_{C} :=  \sum_{i \in C} p_i.  \]
		\item The \textbf{internal coordinates} and \textbf{internal momenta} of a cluster are given by
			  \[q^I_{C,i} := q_i - q_C,\qquad p^I_{C,i} := p_i - \frac{m_i}{m_C} p_C, \]
		for $i\in C$.
	\end{enumerate}
\end{definition}

\begin{remark}
	Throughout we will often be dealing with a single cluster $C$ of a cluster decomposition. In this situation, we will use the notation $q^I_{i} := q^I_{C,i},\, p^I_{i} := p^I_{C,i}$ for clarity of expressions.
\end{remark}

We now give the notation for various cluster quantities.
\begin{definition}[System quantities] \label{def:System_quantities}
	Let $C \subseteq \{1,\dots,n\}$ be a cluster. 
	\begin{enumerate}[(i)]
		\item The \textbf{moment of inertia}, \textbf{external moment of inertia}, and \textbf{internal moment of inertia} of the cluster $C$ are respectively defined as
		      \begin{align*}
			    	J_C(q) &:=  \frac{1}{2}\sum_{i\in C} m_i \norm{q_i}^2 \\
				   	J_C^E(q) &:= \frac{1}{2} m_C \norm{q_C}^2 \\
					J_C^I(q) &:= \frac{1}{2}\sum_{i \in C} m_i \norm{q_i - q_C}^2 = \frac{1}{2}\sum_{i \in C} m_i \norm{q_{C,i}^I}^2.
		      \end{align*}
		\item The \textbf{kinetic energy}, \textbf{external kinetic energy}, and \textbf{internal kinetic energy} of the cluster $C$ are respectively defined as
		      \begin{align*}
			      	K_C(q,p)  &:=  \sum_{i\in C} \frac{\norm{p_i}^2}{2 m_i} \\
					K_C^E(q,p) & := \frac{\norm{p_C}^2}{2 m_C} \\
					K_{C}^{I}(q,p) & := \sum_{i \in C} \frac{1}{2 m_i}\norm{p_i- \frac{m_i}{m_C}p_C}^2 = \sum_{i\in C} \frac{1}{2 m_i}\norm{p^I_{C,i}}.
		      \end{align*}
		\item The \textbf{potential energy}, \textbf{energy}, and \textbf{internal energy} of the cluster are respectively defined as 
		\begin{align*}
			V_C(q) &:= \sum_{i,j\in C,\, i< j} V_{ij}(q) \\
			h_C(q,p) &:= K_C(p) + V_C(q), \\
			h_C^I(q,p) &:= K_C^I(p) + V_C(q).
		\end{align*}
        \item The \textbf{angular momentum}, \textbf{external angular momentum}, and \textbf{internal angular momentum} of the cluster $C$ are respectively defined as
        \begin{align*}
            L_C(q,p) &:= \sum_{i\in C} q_i \wedge p_i \\
            L_C^E(p,q) &:= q_{C} \wedge p_{C} \\
            L_C^{I}(p,q) &:= \sum_{i \in C} \left(q_i - q_C\right) \wedge \left(p_i- \frac{m_i}{m_C}p_C \right) = \sum_{i\in C} q_{C,i}^I\wedge p_{C,i}^I.
        \end{align*}
	\end{enumerate}
\end{definition}
\begin{remark}
	Often we will be dealing with a single cluster $C$. In which case, the $C$ index may be dropped from the notation when there is no ambiguity.  
\end{remark}

The following lemma is shown in \cite[p.7]{knauf2018asymptotic}, by viewing the internal and external coordinates as orthogonal projections in phase space. However, it also follows by a direct computation. 

\begin{lemma}[Split of system quantities for cluster decomposition] \label{lem:SplitSystemQuantitiesClusterDecomposition} \quad \\
	Let $\mathcal{C}$ be a cluster decomposition. For any $C\in \mcal{C}$ it holds that 
	\[ K_C = K_C^I + K_C^E,\quad J_C = J_C^I + J_C^E,\quad L_C = L_C^I + L_C^E \]
	Let $h$ be the total energy of the system. Then it holds that 
	\[ h_C = h_C^I + K_C^E,\qquad h = \sum_{C\in\C} h_C + h_{\C}^E \]
	where $h_{\C}^E$ is the contribution to the energy coming from potentials between clusters, namely,
	\begin{equation*}
		h_{\C}^E := \sum_{C_i,C_j\in \C,\, i<j} \sum_{(l,m)\in C_i\times C_j} V_{lm}(q)
	\end{equation*}
\end{lemma}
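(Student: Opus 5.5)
The plan is a direct computation built on one identity per quantity. Fix $C\in\mathcal{C}$ and write $q_i = q_C + q^I_i$ and $p_i = \frac{m_i}{m_C}p_C + p^I_i$ for $i\in C$. The key observation, used in every case, is that the internal quantities carry no total mass-weighted position and no total momentum:
\[ \sum_{i\in C} m_i q^I_i = \sum_{i\in C} m_i q_i - m_C q_C = 0,\qquad \sum_{i\in C} p^I_i = p_C - \frac{p_C}{m_C}\sum_{i\in C} m_i = 0, \]
which follow immediately from \cref{def:ClusterCoordinates}. Every splitting then reduces to expanding a quadratic or bilinear expression and checking that the cross terms vanish by one of these two identities.

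For $J_C$, I will expand $\frac12\sum_i m_i\norm{q_C+q^I_i}^2$. The square of $q_C$ gives $\frac12 m_C\norm{q_C}^2 = J^E_C$, the square of $q^I_i$ gives $J^I_C$, and the cross term $\scalprod{q_C}{\sum_i m_i q^I_i}$ vanishes. For $K_C$, I will expand $\sum_i \frac{1}{2m_i}\norm{\frac{m_i}{m_C}p_C + p^I_i}^2$. This produces $\sum_i \frac{m_i}{2m_C^2}\norm{p_C}^2 = K^E_C$ plus $K^I_C$, and the cross term $\frac{1}{m_C}\scalprod{p_C}{\sum_i p^I_i}$ vanishes. (The definition of $K^I_C$ as printed omits the square on $\norm{p^I_{C,i}}$; I will read it as squared, consistent with the first expression.) For $L_C$, bilinearity of $\wedge$ gives four terms:
\[ q_C\wedge p_C\cdot\tfrac{1}{m_C}\textstyle\sum m_i = L^E_C,\qquad \textstyle\sum_i q^I_i\wedge p^I_i = L^I_C, \]
together with the cross terms $q_C\wedge\sum_i p^I_i = 0$ and $\frac{1}{m_C}\big(\sum_i m_i q^I_i\big)\wedge p_C = 0$.

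The energy statements then follow from bookkeeping. First, $h_C = K_C + V_C = K^I_C + K^E_C + V_C = h^I_C + K^E_C$ by the kinetic splitting just proved. For the total energy, $h = \sum_{i\in\bN}\frac{\norm{p_i}^2}{2m_i} + \sum_{l<m}V_{lm}$. The kinetic sum splits over the disjoint clusters as $\sum_{C}K_C$. Each unordered pair $\{l,m\}$ either lies inside a single cluster, contributing to exactly one $V_C$, or straddles two distinct clusters $C_i, C_j$, contributing exactly once to $h^E_{\mathcal{C}}$. This uses only that $\mathcal{C}$ is a partition (\cref{def:SetPartition}).

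There is no genuine obstacle here. The only points needing care are the vanishing of the cross terms, where all the content lies, and the pair-counting in $h^E_{\mathcal{C}}$. In the latter the ordering convention $i<j$ on clusters, together with the fact that $V_{lm}$ is symmetric in $l,m$, ensures each cross pair is counted exactly once.
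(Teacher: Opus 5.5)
Your proposal is correct and is exactly the direct computation the paper says suffices. The paper gives no argument of its own beyond citing Knauf's view of internal and external coordinates as orthogonal projections in phase space, and your identities $\sum_{i\in C} m_i q^I_i = 0$ and $\sum_{i\in C} p^I_i = 0$ are precisely the orthogonality relations (for the mass-weighted and inverse-mass-weighted inner products, respectively) that make the cross terms vanish in that picture.
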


Throughout we will require many estimates and asymptotic properties of various cluster quantities. To easily notate various asymptotic claims, we use the following notation.
\begin{definition}\label{def:AsymptoticNotation}
Consider any two functions $f,g:(0,T) \to \R$ with some fixed $T\in(0,\infty]$.
\begin{enumerate}
	\item If $\lim_{t\to T} f(t)/ g(t) = 1$ then we write $f(t) \sim g(t)$.
	\item If there is a constant $\const > 0$ such that $f(t) \leq \const g(t)$ for every sufficiently large $t$, we write $f(t) \lesssim g(t)$. If in addition it is noteworthy to acknowledge the dependence of $\const$ on a parameter (say $m$) then we write $f(t) \lesssim_{m} g(t)$.
	\item If there exists $0 < c_1 < c_2 < \infty$ such that
    \[ c_1 g(t) \leq f(t) \leq c_2 g(t) \]
    for sufficiently large $t$, 
    then we write $f(t) \asymp g(t)$.
	\item If for every $\epsilon > 0$ it holds that 
	\[ f(t) \leq \epsilon g(t) \]
	for sufficiently large $t$, then we write $f(t) = o(g(t))$.
\end{enumerate}
\end{definition}

Finally, we establish a kind of stability property for the structure of the differential equations of $n$-body problems of bounded force when we consider only the dynamics of an isolated cluster.

\begin{proposition}\label{prop:clustersAreBoundedForce}
	Suppose that $q(t)$ is a solution on $t\in(0,T^+)$ to \cref{eqn:MainEquation} and that $C\subseteq \bN$ is an isolated cluster on $(0,T^+)$. Then, for all $i\in C$,
	\[ \dot{q}_i = \frac{p_i}{m_i},\qquad \dot{p}_i = -\partial_{q_i} V_C(q) + F^C_i  \]
	where $F^C_i(t)$ is bounded on $(0,T^+)$. Moreover,
	\[ \dot{q}^I_i = \frac{p^I_i}{m_i},\qquad \dot{p}^I_i = - \partial_{q^I} V_C(q_i^I) + \tilde{F}^C_i \]
	again with $\tilde{F}^C_i(t)$ bounded on $(0,T^+)$. 

	Additionally, if $T^{+}<\infty$ the center of mass of the cluster $C$ has a definite limit, i.e. $\lim\limits_{t \uparrow T^{+}} q_C(t)$ exists. This last statement does not depend on the $-\alpha$ homogeneity of the potential $V_C$, but only on translation invariance of the potential.
\end{proposition}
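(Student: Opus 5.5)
The plan is to split the force on each particle of $C$ into an internal part and a cross-cluster part, and to bound the cross-cluster part using isolation. For $i\in C$, \cref{eqn:MainEquation} gives
\[ \dot p_i = -\partial_{q_i}V_C(q) + F_i^C,\qquad F_i^C := F_i - \sum_{j\notin C}\partial_{q_i}V_{ij}(q). \]
Here I use that $\partial_{q_i}V_{jk}=0$ unless $i\in\{j,k\}$. Each term $\partial_{q_i}V_{ij}$ with $j\notin C$ has norm $\alpha|Z_{ij}|\,\norm{q_i-q_j}^{-\alpha-1}$. Isolation gives $\norm{q_i-q_j}>\delta$, so this is at most $\alpha|Z_{ij}|\delta^{-\alpha-1}$. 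The external force $F_i$ is bounded by hypothesis, so $F_i^C$ is bounded on $(0,T^+)$.

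For the internal equations, I differentiate the definitions in \cref{def:ClusterCoordinates}. The identity $\dot q_C = p_C/m_C$ gives $\dot q^I_i = p_i/m_i - p_C/m_C = p^I_i/m_i$.

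Next I compute $\dot p_C=\sum_{i\in C}\dot p_i$. Because $V_C$ is translation invariant, $\sum_{i\in C}\partial_{q_i}V_C=0$; equivalently, the pairwise forces inside $C$ cancel by antisymmetry. Hence
\[ \dot p_C=\sum_{i\in C}F_i^C, \]
which is bounded. Since $V_C$ depends only on the differences $q_i-q_j=q^I_i-q^I_j$, it can be viewed as a function of the $q^I_i$, and $\partial_{q_i}V_C(q)=\partial_{q^I_i}V_C(q^I)$. Then
\[ \dot p^I_i = -\partial_{q^I_i}V_C(q^I) + \tilde F^C_i,\qquad \tilde F^C_i := F^C_i - \tfrac{m_i}{m_C}\sum_{k\in C}F^C_k, \]
which is again bounded.

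For the limit of the centre of mass when $T^+<\infty$, the bound on $\dot p_C$ shows $p_C$ is Lipschitz on the bounded interval $(0,T^+)$. It is therefore bounded and even has a limit. Then $\dot q_C = p_C/m_C$ is bounded, so $q_C$ is Lipschitz on a finite interval, hence uniformly continuous. A Cauchy-criterion argument then shows $\lim_{t\uparrow T^+}q_C(t)$ exists. This step uses only that the internal forces cancel (translation invariance) together with the isolation bound on the cross terms, and not homogeneity.

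The only point needing care, and the one I expect to be the main obstacle, is the identification $\partial_{q_i}V_C(q)=\partial_{q^I_i}V_C(q^I)$. The internal coordinates are constrained by $\sum_{i\in C} m_i q^I_i=0$, so $V_C$ should be understood as the same pairwise formula evaluated on the $q^I$. With that convention it follows directly from translation invariance.
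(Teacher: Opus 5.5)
Your proposal is correct and follows the paper's route. You split off the cross-cluster forces, bound them by $\alpha|Z_{ij}|\delta^{-\alpha-1}$ using isolation, use cancellation of the internal forces to get $\dot p_C=\sum_{i\in C}F_i^C$ bounded, and conclude that $p_C$ and then $q_C$ converge as $t\uparrow T^+$. Your explicit handling of $\dot q_C = p_C/m_C$ and of the identification $\partial_{q_i}V_C(q)=\partial_{q^I_i}V_C(q^I)$ is slightly more careful than the paper's write-up.
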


\begin{proof}
	The proof is a straightforward calculation. For all $i\in C$ we have 
	\[ \dot{p}_i = -\partial_{q_i} V(q) + F_i = - \partial_{q_i} V_C(q) + \sum_{j\notin C}\alpha \frac{Z_{ij}}{\norm{q_i-q_j}^{\alpha+2}}(q_i-q_j) + F_i. \]
	As the cluster $C$ is isolated, we have $\norm{q_i-q_j} > \delta $ on $t\in (0,T^+)$, hence 
	\[\norm{\sum_{j\notin C}\alpha \frac{Z_{ij}}{\norm{q_i-q_j}^{\alpha+2}}(q_i-q_j)} \leq \frac{\const}{\delta^{\alpha+1}}.\]
	It follows that $\dot{p}_i = -\partial_{q_i} V_C(q) + F^C_i$ for a bounded force $F^C_i$ as claimed.

	Then, using the fact that 
	\begin{align*}
		\dot{q}_C &= \sum_{i\in C} m_i \dot{q}_i = \sum_{i\in C} p_i = p_C \\
		\dot{p}_C &= \sum_{i\in C}\dot{p}_i = \sum_{i\in C} -\partial_{q_i} V_C(q) + F^C_i = \sum_{i\in C} F^C_i
	\end{align*}
	we have
	\begin{align*}
		\dot{q}^I_i &= \dot{q}_i - \dot{q}_C = \frac{p_i}{m_i} - \frac{p_C}{m_C} = \frac{p^I_i}{m_i} 
	\end{align*}
	and
	\begin{align*}
		\dot{p}^I_i &= \dot{p}_i - \frac{m_i}{m_C}\dot{p}_C = -\partial_{q_i^I} V_C(q^I) + F^C_i - \frac{m_i}{m_C} \sum_{i\in C} F_i^C =  -\partial_{q_i^I} V_C(q^I) + \tilde{F}^C_i  
	\end{align*}
	with $\tilde{F}^C_i := F^C_i - \frac{m_i}{m_C} \sum_{i\in C} F_i^C$ bounded.
    
	By the calculation above, $\dot{p}_C$ is bounded, hence $p_C$ has a limit for $t \uparrow T^+$ and thus also $q_C$ converges.
\end{proof}

\section{Existence of limits and bounds of cluster quantities}
In this section we generalise many foundational results concerning the limiting behaviour of system quantities to corresponding cluster quantities. In particular, the theorems of Von Zeipel \cite{von1908singularites,sperling1970real} and Painlev\'e \cite{painleve1897leccons} are generalised, and a bound on change of internal energy $h^I_C$ of a cluster undergoing a collision singularity is proved.

\subsection{Preliminary Lemmas}
We begin by proving some preliminary lemmas giving estimates for the cluster quantities even in the presence of a non-collision or collision singularity. 

Begin with an arbitrary cluster of bodies $C \subseteq \bN$ with dynamics given by \cref{eq:Motion}. As shown in \cref{prop:clustersAreBoundedForce}, the dynamics in $\R^d$ is given by, for each $i\in C$,
\begin{equation}\label{eqn:ClusterEquation}
	\dot{q}_j = \frac{p_j}{m_j},\qquad \dot{p}_j = -\partial_{q_j} V_C(q) + F_j^C(t), 
\end{equation}
where $F_j^C(t)$ is bounded on $(0,T^+)$. In what follows, we set $F_j = F_j^C$ for ease of notation.

\begin{lemma}\label{lemma:hKEstimates}
Assume there exists a solution $(q_i(t),p_i(t))$ to \cref{eqn:ClusterEquation} on $t\in(0,T^+)$. 
Then there are constants depending only on the masses, the force bounds, the escape time $T^+$, and initial conditions such that, for all $t\in[0,T^+)$,
\begin{align}
|h_C^I(t)| &\leq |h_C^I(0)| + \const\int_0^t \sqrt{K_C^I(s)}\,ds, \\
\int_0^t K_C^I(s)\,ds &\leq \frac{2}{2-\alpha} \dot J_C^I(t) + \const. \label{eq:IntegralKineticEnergy}
\end{align}
\end{lemma}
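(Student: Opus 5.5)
We work throughout in the internal coordinates supplied by \cref{prop:clustersAreBoundedForce}. There $\dot q^I_i = p^I_i/m_i$ and $\dot p^I_i = -\partial_{q^I_i}V_C + \tilde F_i$, with each $\tilde F_i$ bounded on $(0,T^+)$. Both estimates then follow from two differential identities, one for $h^I_C$ and one for $J^I_C$. Each is integrated in time, and the forcing terms are controlled using only the boundedness of $\tilde F_i$ and the finiteness of $T^+$.

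For the energy bound, differentiate $h^I_C = K^I_C + V_C$ along the flow. The potential contribution $\sum_i \partial_{q^I_i}V_C\cdot p^I_i/m_i$ cancels against the corresponding part of $\dot K^I_C$, which leaves
\[ \dot h^I_C = \sum_{i\in C} \frac{\scalprod{p^I_i}{\tilde F_i}}{m_i}. \]
By Cauchy--Schwarz, $\sum_i \norm{p^I_i}/\sqrt{m_i} \le \sqrt{|C|}\,\bigl(\sum_i \norm{p^I_i}^2/m_i\bigr)^{1/2} = \sqrt{2|C|K^I_C}$. Hence $|\dot h^I_C| \le \const\sqrt{K^I_C}$, where the constant depends on the masses and the force bounds. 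Integrating from $0$ to $t$ gives the first inequality.

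For the second inequality, use a Lagrange--Jacobi identity. We have $\dot J^I_C = \sum_i \scalprod{q^I_i}{p^I_i}$ and
\[ \ddot J^I_C = 2K^I_C - \sum_i \scalprod{q^I_i}{\partial_{q^I_i}V_C} + \sum_i \scalprod{q^I_i}{\tilde F_i} = 2K^I_C + \alpha V_C + \sum_i\scalprod{q^I_i}{\tilde F_i}, \]
where the last step uses Euler's identity for the $-\alpha$-homogeneous potential $V_C$. Substituting $V_C = h^I_C - K^I_C$ gives
\[ \ddot J^I_C = (2-\alpha)K^I_C + \alpha h^I_C + \sum_i\scalprod{q^I_i}{\tilde F_i}. \]
Integrating from $0$ to $t$ and rearranging, we obtain
\[ (2-\alpha)\int_0^t K^I_C\,ds = \dot J^I_C(t) - \dot J^I_C(0) - \alpha\int_0^t h^I_C\,ds - \int_0^t \sum_i\scalprod{q^I_i}{\tilde F_i}\,ds. \]
It remains to bound the last two integrals by a constant plus a small multiple of $\int_0^t K^I_C$.

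This step is the main obstacle, because $h^I_C$ is only controlled through $\int\sqrt{K^I_C}$, and $\norm{q^I}$ through an integral of $\norm{p^I}$.

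For the positions, $\norm{q^I_i(s)} \le \norm{q^I_i(0)} + \int_0^s \norm{p^I_i}/m_i \le \const\bigl(1 + \int_0^s\sqrt{K^I_C}\bigr)$. By Cauchy--Schwarz, $\int_0^s\sqrt{K^I_C} \le \sqrt{T^+}\bigl(\int_0^s K^I_C\bigr)^{1/2}$.

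For the energy, the first inequality gives $|h^I_C(s)| \le \const\bigl(1 + (\int_0^s K^I_C)^{1/2}\bigr)$.

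Write $\Phi(t) := \int_0^t K^I_C\,ds$, which is nondecreasing. Both forcing integrals are then at most $\const\, T^+\bigl(1 + \Phi(t)^{1/2}\bigr)$. This yields
\[ (2-\alpha)\Phi(t) \le \dot J^I_C(t) + \const + \const\,\Phi(t)^{1/2}. \]
Applying Young's inequality in the form $\const\,\Phi^{1/2} \le \varepsilon\Phi + \const_\varepsilon$ and absorbing the $\varepsilon\Phi$ term gives $\Phi(t) \le c\,\dot J^I_C(t) + \const$ with $c = 1/(2-\alpha-\varepsilon)$.

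To obtain exactly the coefficient $\tfrac{2}{2-\alpha}$, choose $\varepsilon = (2-\alpha)/2$, so that $2-\alpha-\varepsilon = (2-\alpha)/2$. If $\dot J^I_C(t) < 0$, the resulting coefficient is not directly comparable to $\tfrac{2}{2-\alpha}$. In that case use instead that the left-hand side is nonnegative, which shows $\dot J^I_C$ is bounded below by $-\const$. Combined with the previous display, this handles that case and gives the stated inequality, with all constants depending only on the masses, the force bounds, $T^+$, and the initial data.
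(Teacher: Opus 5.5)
Your proof is correct and follows the paper's argument. It uses the same two identities for $\dot h^I_C$ and $\ddot J^I_C$, bounds the energy and forcing integrals by a sublinear function of $\int_0^t K^I_C$, and absorbs that term; the only differences are that you use Cauchy--Schwarz in time plus Young's inequality where the paper uses $\sqrt{x}\le\sqrt{D}+x/\sqrt{D}$, and that the paper controls the forcing through $\sqrt{J^I_C}$. Your final case distinction on the sign of $\dot J^I_C$ is unnecessary: with $\varepsilon=(2-\alpha)/2$, dividing by the positive number $\frac{2-\alpha}{2}$ already gives $\Phi(t)\le\frac{2}{2-\alpha}\dot J^I_C(t)+\const$ for either sign.
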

\begin{proof}
	A computation shows that 
	\begin{align*}
		\dot{h}_C^I &= \sum_{i \in C} \frac{\left\langle p_i^I, F_i\right\rangle}{m_i}.
	\end{align*}
	Using Cauchy-Schwarz, we have that 
	\[ |\dot{h}^I_C| \leq  \left( \sum_{i\in C}\frac{\norm{p_i^I}^2}{m_i}\right)^{1/2} \left(\sum_{i\in C}\frac{\norm{F_i}^2}{m_i}\right)^{1/2} \lesssim  \sqrt{2 K^I_C}\]
	 
	The first inequality of the lemma follows from integration of this inequality. 

	For the second inequality, we employ a standard trick. First, observe that 
	\begin{align*}
		\dot{J}^I_C &= \sum_{i\in C} \langle q_i^I, p_i^I \rangle \\ 
		\ddot{J}^I_C &= \sum_{i\in C} \frac{1}{m_i} \norm{p_i^I}^2 -  \left\langle q^I_i, \partial_{q^I_i} V(q^I_i)\right\rangle  +  \left\langle q_i^I, F_i - \frac{m_i}{m_C}\sum_{j\in C} F_j  \right\rangle \\
		 &= 2 K^I_C +\alpha V_C + \sum_{i\in C}\left\langle q_i^I, F_i  \right\rangle \\
		 &= (2-\alpha) K^I_C + \alpha h^I_C + \sum_{i\in C}\left\langle q_i^I, F_i  \right\rangle,
	\end{align*}
	where we have used the fact that $\sum_{i \in C} m_iq_i^I = 0$ to simplify $\ddot{J}^I_C$. Using again Cauchy-Schwarz we have
	\begin{align*}
		\left|\sum_{i\in C}\left\langle q_i^I, F_i  \right\rangle\right| &\leq \left( \sum_{i\in C} m_i\norm{q_i^I}^2 \right)^{1/2} \left(\sum_{i\in C} \frac{\norm{F_i }^2}{m_i}\right)^{1/2}  = \const \sqrt{J^I_C}.
	\end{align*}
	Rearranging the expression for $\ddot{J}^I_C$ for $K^I_C$ and using the estimate yields 
	\[ K_C^I \leq \frac{1}{2-\alpha}\left(\ddot{J}_C^I - \alpha h_C^I\right) + \const \sqrt{J^I_C}. \]
	Integrating up to some time $\tau < T^+$ yields
	\begin{align*}
		\int_0^\tau K^I_C(t)\, dt &\leq \frac{1}{2-\alpha}\left(\dot{J}^I_C(\tau) - \dot{J}^I_C(0) - \alpha \int_0^\tau h^I_C(t)\,dt\right) + \const \int_0^\tau \sqrt{J^I_C(t)}\,dt.
	\end{align*}
	Next, we look for an estimate of the integral over the energy and over the square root of the internal momement of inertia. From the first inequality of the lemma it follows that
	\begin{align*}
		\int_0^\tau |h^I_C(t)|\,dt &\leq  |h^I_C(0)|\tau + \const \int_0^\tau \int_0^t \sqrt{K^I_C(s)}\,ds \\
		&\leq |h^I_C(0)|\tau + \const\cdot \tau \int_0^\tau \sqrt{K^I_C(s)}\,ds.
	\end{align*}

	To bound the integral of the internal moment of inertia, first use Cauchy-Schwarz obtain 
	\[ |\dot{J}^I_C| = \sum_{i\in C} \langle q^I_i, p^I_i \rangle \lesssim \sqrt{K^I_C}{\sqrt{J^I_C}}. \]
	Rearranging yields 
	\[ \frac{d}{dt}\sqrt{J^I_C(t)} \lesssim \sqrt{K^I_C} \]
	Hence
	\[ \int_0^\tau \sqrt{J^I_C(t)}\,dt \leq \tau \sqrt{J^I_C(0)} + \const\cdot\tau \int_0^\tau \sqrt{K_C^I(s)}\, ds \]

	Combining the estimates for the integrals of energy and square root of the moment of inertia in the estimate for the integral of the internal kinetic energy yields 
	\[\int_0^\tau K_C^I(t)\,dt \leq \frac{1}{2-\alpha} \dot{J}^I_C(\tau) + \const + \const\tau \int_0^\tau \sqrt{K^I_C(t)}\,dt \] 

	Now, fix some value $D > 0$ yet to be determined. Observe that for any value $x > 0$, we have $\sqrt{x} \leq \sqrt{D} + \frac{x}{\sqrt{D}}$. Indeed, if $x \leq D$ then $\sqrt{x}\leq \sqrt{D}$, and if $x \geq D$ then $\sqrt{x} \leq x/\sqrt{D}$. Either way, $\sqrt{x} \leq \sqrt{D} + \frac{x}{\sqrt{D}}$. Apply this with $x = K^I_C(t)$ yields 
	\[ \sqrt{K_C^I(t)} \leq \sqrt{D} + \frac{K_C^I(t)}{\sqrt{D}} \]
	for any $D >0$. Then 
	\[ \int_0^\tau K_C^I(t)\,dt \leq \frac{1}{2-\alpha} \dot{J}^I_C(\tau) + \const + \const \tau \left( \tau \sqrt{D} + \frac{1}{\sqrt{D}}   \int_0^\tau K^I_C(t)\,dt\right) \]
	Finally, choosing $D$ large enough so that $\frac{\const\cdot T^+}{\sqrt{D}} < \frac{1}{2}$ and using $\tau < T^+$, we obtain 
	\[ \int_0^\tau K_C^I(t)\,dt \leq \frac{2}{2-\alpha} \dot{J}^I_C(\tau) + \const + \const \tau^2 \sqrt{D} = \frac{2}{2-\alpha} \dot{J}^I_C(\tau) + \const.\]
\end{proof}

We use the above lemma to show that for all clusters experiencing a singularity at $T^+$, the internal moment of inertia $J^I_C$ has a definite limit at the singularity time.
\begin{lemma} \label{lem:ExistenceLimitJCI}
	Let $T^+$ be the singularity time. Then $\lim_{t\uparrow T^+} J^I_C(t)$ exists in $[0,\infty]$.
\end{lemma}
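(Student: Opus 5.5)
The plan is to show that $J^I_C$ is eventually monotone, or more precisely that $J^I_C$ plus a bounded correction is monotone near $T^+$. The identity from the proof of \cref{lemma:hKEstimates} is the starting point:
\[ \ddot J^I_C = (2-\alpha)K^I_C + \alpha h^I_C + \sum_{i\in C}\langle q_i^I,F_i\rangle .\]
The idea is to bound the last two terms from below by a function of $t$ that is integrable up to $T^+$, which shows that $\dot J^I_C$ is bounded below by something integrable. Then $J^I_C(t) - \int_0^t g$ is nondecreasing for a suitable integrable $g$, and this forces a limit in $(-\infty,\infty]$, hence in $[0,\infty]$ since $J^I_C\ge 0$.

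The first key step is to control $h^I_C$. From the first inequality of \cref{lemma:hKEstimates}, together with Cauchy--Schwarz,
\[ |h^I_C(t)|\le \const + \const\sqrt{t}\Big(\int_0^t K^I_C\Big)^{1/2}, \]
and \cref{eq:IntegralKineticEnergy} bounds $\int_0^t K^I_C$ by $\frac{2}{2-\alpha}\dot J^I_C + \const$. Similarly, $\sqrt{J^I_C}\le \const + \const\int_0^t\sqrt{K^I_C}$, which is again controlled by $\big(\int_0^t K^I_C\big)^{1/2}$. Setting $X(t):=\int_0^t K^I_C\,ds\ge0$, I get
\[ \ddot J^I_C \ge (2-\alpha)\dot X - \const - \const\sqrt{X}. \]

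The second step splits into cases according to whether $X(t)$ stays bounded as $t\uparrow T^+$. Since $X$ is nondecreasing, it has a limit in $[0,\infty]$.

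If $X$ is bounded, then $\ddot J^I_C \ge -\const$. Hence $\dot J^I_C + \const\, t$ is nondecreasing, so $\dot J^I_C$ has a limit in $(-\infty,\infty]$. This makes $J^I_C$ eventually monotone (if the limit is nonzero), or makes $J^I_C + \const t^2$ monotone in general. In either case $J^I_C$ converges in $[0,\infty]$.

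If $X\to\infty$, then \cref{eq:IntegralKineticEnergy} gives $\dot J^I_C\ge \frac{2-\alpha}{2}(X-\const)\to\infty$, so $J^I_C$ is eventually increasing and has a limit in $(0,\infty]$.

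Rather than splitting cases, the cleanest route is to integrate the inequality: $\dot J^I_C(t)\ge(2-\alpha)X(t)-\const-\const\int_0^t\sqrt X$. Since $T^+<\infty$ and $\int_0^t\sqrt{X}\le T^+\sqrt{X(t)}$, the right-hand side is at least $(2-\alpha)X-\const\sqrt X-\const\ge -\const'$. Therefore $\dot J^I_C\ge -\const'$ on $(0,T^+)$, so $J^I_C(t)+\const' t$ is nondecreasing on a finite interval and has a limit in $(-\infty,\infty]$. This yields the limit of $J^I_C$ in $[0,\infty]$.

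The main obstacle I expect is the non-conservation of $h^I_C$. Its growth must be dominated by the positive kinetic term, and that domination is exactly what the square-root-versus-linear comparison in $X$ accomplishes. I will check that the constants coming from \cref{lemma:hKEstimates} are uniform on $[0,T^+)$, which relies on $T^+<\infty$.
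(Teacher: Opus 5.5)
Your proof is correct and rests on the same key fact as the paper's proof: a uniform lower bound $\dot J^I_C \geq -\const$ on $(0,T^+)$, with $T^+<\infty$. The paper reads this bound off \eqref{eq:IntegralKineticEnergy} using $\int_0^t K^I_C\geq 0$ and then rules out oscillation by a mean-value contradiction; your monotonicity of $J^I_C+\const' t$ is a slightly cleaner finish, but your detour through $X(t)$ and the case split is unnecessary, since the inequality $\dot J^I_C \geq \frac{2-\alpha}{2}(X-\const)$ you already quote gives $\dot J^I_C\geq -\frac{2-\alpha}{2}\const$ for all $t$.
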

\begin{proof}
	The proof is by contradiction. Assume that 
	\[ a:= \liminf_{t\uparrow T^+} J_C^I(t) < b < \limsup_{t\uparrow T^+} J_C^I(t).  \]
	These limits are only possible provided there are intervals $[t_1,t_2]$ with $t_1$ arbitrarily close to $T^+$ such that $J^I_C$ decreases from $b-\epsilon$ to $a+\epsilon$ for arbitrarily small $\epsilon$. By the mean value theorem, there exists $t_* \in (t_1,t_2)$ such that
	\[ \dot{J}^I_C(t_*) = -\frac{b-a + 2\epsilon}{t_2 - t_1}. \]
	The interval $[t_1, t_2]$ can be taken arbitrarily close to time $T^+$, hence, for any value $M_* > 0$, we can choose $\Delta t = t_2 - t_1$ to be arbitrarily small so that 
	\[ \dot{J}^I_C(t_*) \leq -M_*. \]
	But the inequality \[ \int_0^\tau K_C^I(t)\, dt \leq \frac{2}{2-\alpha} \dot{J}^I_C(\tau) + \const \] proven in \cref{lemma:hKEstimates}, together with the fact that $K^I_C(t)$ is always non-negative, shows that $\dot{J}^I_C$ is bounded from below. A clear contradiction. Hence, the limit $\lim_{t\uparrow T^+} J^I_C(t)$ must exist on $[0,\infty]$.
\end{proof}

From the above calculations, we obtain the following corollaries:

\begin{corollary} \quad
	\begin{enumerate}[(i)]
		\item If $\liminf_{t \uparrow T^{+}} \dot{J}_C^{I}(t) < \infty$, then $\int_0^{T^{+}} K_C^I(s)\,ds < \infty$.
		\item $\liminf_{t \uparrow T^{+}} \dot{J}_C^{I}(t) > -\infty$.
		\item If $\limsup_{t \uparrow T^{+}} \dot{J}_C^{I}(t) = \infty$ and $\limsup_{t \uparrow T^{+}} J_C^{I}(t) < \infty$, then  $\lim_{t \uparrow T^{+}} \dot{J}_C^{I}(t) = \infty$
	\end{enumerate}
\end{corollary}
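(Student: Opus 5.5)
All three parts follow from the integral inequality \cref{eq:IntegralKineticEnergy} of \cref{lemma:hKEstimates},
\[ \int_0^\tau K_C^I(s)\,ds \leq \frac{2}{2-\alpha}\dot J_C^I(\tau) + \const, \qquad \tau<T^+, \]
together with $K_C^I\geq 0$ and the identity $\ddot J_C^I = (2-\alpha)K_C^I + \alpha h_C^I + \sum_{i\in C}\langle q_i^I,F_i\rangle$ from its proof. I treat the three parts in order.

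For (i), I take a sequence $\tau_k\uparrow T^+$ with $\dot J_C^I(\tau_k)\to \liminf_{t\uparrow T^+}\dot J_C^I(t)<\infty$. Along this sequence the right-hand side of the inequality stays bounded. Since $K_C^I\geq0$, the map $\tau\mapsto\int_0^\tau K_C^I$ is monotone, so boundedness along $\tau_k$ already gives $\int_0^{T^+}K_C^I<\infty$.

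Part (ii) is immediate. The left-hand side of the inequality is nonnegative, so $\dot J_C^I(\tau)\geq -\frac{2-\alpha}{2}\const$ for every $\tau$. This is the same observation already used in the proof of \cref{lem:ExistenceLimitJCI}.

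Part (iii) is the only one needing an argument, and I expect it to be the main obstacle. Since $\dot J_C^I$ is bounded below by (ii), what must be excluded is $\liminf_{t\uparrow T^+}\dot J_C^I(t)<\infty$ while $\limsup_{t\uparrow T^+}\dot J_C^I(t)=\infty$. Suppose this happens. Then by (i), $\int_0^{T^+}K_C^I<\infty$, and by Cauchy--Schwarz on the finite interval also $\int_0^{T^+}\sqrt{K_C^I}<\infty$.

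Next I would integrate the identity for $\ddot J_C^I$. The first estimate of \cref{lemma:hKEstimates} then shows that $h_C^I$ is bounded on $(0,T^+)$. The hypothesis $\limsup J_C^I<\infty$, combined with \cref{lem:ExistenceLimitJCI}, makes $J_C^I$ bounded near $T^+$. Hence the force term $\sum_{i}\langle q_i^I,F_i\rangle$ is bounded by $\const\sqrt{J_C^I}$. Consequently every term on the right-hand side of the identity is integrable on $(0,T^+)$, so $\ddot J_C^I$ is integrable there. It follows that $\dot J_C^I$ converges to a finite limit as $t\uparrow T^+$, which contradicts $\limsup_{t\uparrow T^+}\dot J_C^I(t)=\infty$. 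Therefore $\liminf_{t\uparrow T^+}\dot J_C^I(t)=\infty$, i.e.\ $\lim_{t\uparrow T^+}\dot J_C^I(t)=\infty$.

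The delicate step is the integrability of the right-hand side, specifically that the force term is controlled. Here the bound $\limsup J_C^I<\infty$ is essential, since without it $\sqrt{J_C^I}$ need not be bounded near $T^+$.
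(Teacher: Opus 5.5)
Your proposal is correct. Parts (i) and (ii) are exactly the paper's argument: both are read off from \eqref{eq:IntegralKineticEnergy} using $K_C^I\geq 0$ and the monotonicity of $\tau\mapsto\int_0^\tau K_C^I$.

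For (iii) you take a different route. Both arguments first obtain $\int_0^{T^+}K_C^I<\infty$ and hence bounded $h_C^I$. The paper then uses a trapping argument. With $A$ a bound for $J_C^I$, it has $\ddot J_C^I\geq(2-\alpha)K_C^I-\const$ and $K_C^I\gtrsim(\dot J_C^I)^2/A$. So once $\dot J_C^I$ exceeds a threshold, $\ddot J_C^I>0$ and $\dot J_C^I$ can never drop back. Then $\limsup\dot J_C^I=\infty$ forces $\dot J_C^I\to\infty$, contradicting the assumed finite $\liminf$.

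You instead observe that every term of $\ddot J_C^I=(2-\alpha)K_C^I+\alpha h_C^I+\sum_{i}\langle q_i^I,F_i\rangle$ lies in $L^1(0,T^+)$. Hence $\dot J_C^I$ converges to a finite limit, contradicting $\limsup\dot J_C^I=\infty$.

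Your version is more elementary and has two advantages:
\begin{enumerate}[(a)]
\item It does not need the Cauchy--Schwarz lower bound $K_C^I\geq(\dot J_C^I)^2/(4J_C^I)$. The paper states this bound without the factor $\tfrac14$, which is harmless.
\item It gives a slightly stronger dichotomy: for bounded $J_C^I$, either $\dot J_C^I$ converges to a finite value or it tends to $+\infty$.
\end{enumerate}
The paper's version instead isolates the mechanism \enquote{large $\dot J_C^I$ implies $\ddot J_C^I>0$ thereafter}. The same device is reused with record times in the proof of \cref{thm:PainleveSubSystems}.

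One small remark: invoking \cref{lem:ExistenceLimitJCI} is unnecessary. The bound $\limsup_{t\uparrow T^+}J_C^I<\infty$, together with continuity on $[0,T^+)$, already gives a global bound on $J_C^I$.
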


\begin{proof}
	(i) and (ii) are immediate from \eqref{eq:IntegralKineticEnergy}. \\
	Proving (iii) is a little more involved. Assume to the contrary that $\liminf_{t \uparrow T^{+}} \dot{J}_C^{I}(t) < \infty$. Using (i) we obtain that $\int_0^{T^{+}} K_C^I(s)\,ds < \infty$ and hence also $h_C^{I}$ is bounded. Denote some global bound for $J_C^{I}(t)$ by $A>0$. \\
	Our idea is to prove that if $\dot{J}_C^{I}$ is large at some time, it will be growing monotonically and hence cannot decrease again. This contradicts the assumption that the $\liminf$ is finite. For this just note that
	\begin{align*}
		\ddot{J}^I_C &= (2-\alpha) K^I_C + \alpha h^I_C + \sum_{i\in C}\left\langle q_i^I, F_i  \right\rangle \geq (2-\alpha) K^I_C - \const, \\
		K^I_C &\geq \frac{\left(\dot{J}_C^{I}\right)^2}{J_C^{I}} \geq \frac{\left(\dot{J}_C^{I}\right)^2}{A}.
	\end{align*}
	So as soon as $\dot{J}^I_C(t_0)$ is sufficiently large, $	\ddot{J}^I_C>0$ and thus the claim follows.
\end{proof}

\begin{remark}
	Case (iii) can actually not occur. This an immediate consequence of the generalisation of the theorem of von Zeipel that we prove in \cref{thm:vonZeipel}.
\end{remark}

The above lemmas give useful bounds on the change of the energy and the integral over the kinetic energy for a cluster $C$. However, we cannot yet conclude that these quantities are bounded (which is even false for orbits that lead to non-collision singularities). To do this, we will need a generalisation of Von Zeipel's theorem and a theorem of Painlev\'e.

\subsection{Generalisation of Von Zeipel's Theorem}
    \label{sec:vonZeipel}
	As established in \cref{lem:ExistenceLimitJCI}, there exists a limit of the internal moment of intertia $J_C^I(t)$ as $t\to T^+$ for any cluster $C$. The limit of $J_C^I(t)$ can take values on $[0,\infty]$. There is a natural dichotomy that arises from the posible values of the limit. The first case is when $\lim\limits_{t \uparrow T^{+}} J_C^{I}(t)=\infty$ and corresponds to a so-called noncollision singularity. The second when $\lim\limits_{t \uparrow T^{+}} J_C^{I}(t) < \infty$. The purpose of this section is to show that, when the limit of $J^I_C(t)$ is finite, the motion of the bodies $q_i$ in the cluster $C$ is very well-behaved; their pairwise distances have a definitive limit as $t\to T^+$. Moreover, if $C$ is a cluster in the final cluster decomposition (see \cref{def:FinalClusterDecomposition}) containing more than one body, then the limit of $J^I_C(t)$ is 0 and $C$ admits a total collision. 
	This is essentially Von Zeipel's theorem, but in the more general setting of bounded forces. 

	We will prove the generalisation of von Zeipel's theorem under the following general assumptions.
	    \begin{remark}[More general assumptions for the proof of von Zeipel] \label{rem:GeneralAssumptions} 
			~
			\begin{enumerate}
            \item The motions of the particles is given by \begin{equation}
	           \dot{q}_j = \frac{p_j}{m_j},\quad \dot{p}_j = -\sum_{k\in C} \frac{\partial V_{jk}}{\partial q_j} + F_j(t), 
            \end{equation}
            where the external forces $F_j$ are bounded and for each $\epsilon>0$ the potentials $V_{jk}$ and their derivatives $\frac{\partial V_{jk}}{\partial q_j}$ are bounded on $R^{n}\setminus B_{\epsilon}(0)$ and translation invariant.
            \item The limit $\lim\limits_{t \uparrow T^{+}} J_C^{I}(t)$ exists in $[0,\infty)$ for this particular trajectory.
        \end{enumerate}
        Note that these assumptions are satisfied for our homogeneous potentials of degree $-\alpha$ by \cref{prop:clustersAreBoundedForce,lem:ExistenceLimitJCI}.
    \end{remark}

	\begin{theorem}[von Zeipel] \label{thm:vonZeipel} \quad \\
		Let $C$ be an isolated cluster near the finite time $T^+$. If $J_C^I(t)$ has a finite limit as $t \uparrow T^+$, then the internal configuration $q^I_C(t)$ has a finite limit. If $T^+$ is a singular time for the internal dynamics of $C$, then the limiting internal configuration belongs to the collision set. 
	\end{theorem}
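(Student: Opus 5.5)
We follow the classical McGehee-style proof of von Zeipel's theorem, run by induction on the number of bodies in the cluster and using only the assumptions of \cref{rem:GeneralAssumptions}. Since $J_C^I(t)$ has a finite limit, all internal positions $q_i^I(t)$, $i\in C$, stay bounded. The goal is to show that each $q^I_i(t)$ converges. We argue by induction on $|C|$; the case $|C|=1$ is trivial since $q^I_i\equiv 0$.

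\textbf{Step 1 (partition at the limit).} Let $A$ be the set of accumulation points of $q^I(t)$ as $t\uparrow T^+$. $A$ is compact and connected, because it is the limit set of a continuous bounded curve. Suppose first that $A$ contains a configuration $Q$ with no collisions. Near $Q$ the vector field is bounded, and by \cref{lemma:hKEstimates} the integral $\int K_C^I$ is controlled by $\dot J_C^I$ plus a constant. We aim to show that the trajectory then has finite length near $T^+$, so that it converges to $Q$ and $T^+$ is not singular for the internal dynamics.

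More generally, take an arbitrary limit point $Q\in A$ and partition $C$ into subclusters $C'$ defined by the coincidences $Q_i=Q_j$. Distinct subclusters are separated by some distance $\delta>0$ at $Q$.

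\textbf{Step 2 (separated subclusters behave like isolated clusters).} As long as $q^I(t)$ stays in a neighbourhood $U$ of $Q$, the subclusters remain mutually separated. Hence, by the argument of \cref{prop:clustersAreBoundedForce}, each $C'$ evolves with bounded external force. The centre of mass of $C'$ then has bounded acceleration, so its velocity $p_{C'}$ is bounded on $U$. The subcluster internal moments of inertia $J^I_{C'}$ are small inside $U$. By the induction hypothesis (each $C'$ is strictly smaller than $C$ when $Q$ is not a total collision), the internal configurations of the $C'$ converge, provided the trajectory eventually stays in $U$.

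\textbf{Step 3 (trapping; the main obstacle).} We must exclude the trajectory entering and leaving $U$ infinitely often. We write $q^I$ as the subcluster centres of mass plus the subcluster internal coordinates. The crossing time of the annulus between $U/2$ and $U$ is bounded below, because:
\begin{itemize}
\item the centres of mass move with bounded velocity on $U$;
\item the sizes $\sqrt{J^I_{C'}}$ cannot grow quickly, since $\dot J^I_{C'}$ is bounded below by \cref{lemma:hKEstimates} applied to $C'$, and infinitely many traversals would force $\int K^I_{C'}$ to diverge, contradicting \cref{eq:IntegralKineticEnergy}.
\end{itemize}
Infinitely many excursions would require infinitely many crossings within the finite remaining time, each taking time bounded below, which is impossible.

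Concretely, the bounded center-of-mass velocities give $\int \|\dot q_{C'}\|\,dt<\infty$ near $T^+$. So the centres of mass converge, and the shape part converges by induction, giving convergence of $q^I(t)$ to $Q$. If $Q$ is a total collision of $C$, then $q^I\to 0$ directly, since $J^I_C\to 0$ forces all $q_i^I\to 0$.

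\textbf{Step 4 (collision at a singular time).} Finally, suppose the limit $Q$ lies outside $\Delta$. Then the internal vector field, i.e.\ the potential gradient plus the bounded force, is bounded near $Q$. Hence $p^I$ is bounded near $T^+$, since its derivative is bounded. The solution therefore extends past $T^+$ by standard ODE theory, contradicting singularity. So the limit configuration lies in the collision set.

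The delicate step is Step 3. There, I would first try to use the lower bound on $\dot J^I_{C'}$ from \cref{lemma:hKEstimates} for each subcluster to prevent rapid expansion. The bounded-force estimate handles the relative motion of the centres of mass.
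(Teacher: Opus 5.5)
Step 4 and the total-collision case are correct. The gap is Step 3, which carries the entire theorem (it is the content of \cref{prop:G4empty}), and none of its three justifications holds.

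(i) On $U$ the inter-subcluster forces are bounded, so $\dot p_{C'}$ is bounded there. This controls how much $p_{C'}$ changes during one visit to $U$, but not its value at each re-entry: outside $U$, bodies of different subclusters may approach one another and exchange arbitrarily large momentum. So ``the centres of mass move with bounded velocity on $U$'' is not uniform over infinitely many visits.

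(ii) A lower bound on $\dot J^I_{C'}$ prevents fast \emph{contraction}, not fast expansion. Expansion is governed by $|\dot J^I_{C'}|\lesssim\sqrt{J^I_{C'}K^I_{C'}}$, and $K^I_{C'}$ has no a priori bound precisely because $C'$ contains near-collisions. A pair that comes arbitrarily close and then separates (a $G_4$ pair, the very thing to be excluded) can do so with arbitrarily large relative velocity.

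(iii) \eqref{eq:IntegralKineticEnergy} bounds $\int_0^{T^+}K^I\,dt$ only if $\liminf\dot J^I<\infty$. For $C'$ the lemma does not even apply, because $C'$ is isolated only while the orbit stays in $U$, which is what you are trying to prove. For $C$ itself, note what such a bound would give: if $\int_0^{T^+}K^I_C\,dt<\infty$ were known, then $\int_0^{T^+}\norm{\dot q^I_i}\,dt\le\sqrt{T^+}\bigl(\tfrac{2}{m_i}\int_0^{T^+}K^I_C\,dt\bigr)^{1/2}<\infty$, and $q^I$ would converge at once, with no induction at all. The hard case is therefore exactly $\dot J^I_C\to+\infty$ with $J^I_C$ convergent. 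This is the scenario of case (iii) of the corollary following \cref{lem:ExistenceLimitJCI}, which the paper excludes only as a consequence of this theorem.

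Since the induction in Step 2 presupposes the trapping of Step 3, nothing in the proposal actually carries the argument.

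The missing idea is to play the convergence of $J^I_C$ against the near-convexity of the external moment of inertia $J^E$ of separated subclusters. Since $J^I_C=J^E+\sum_{C'}J^I_{C'}$, if a subcluster expands by a fixed amount late in time while $J^I_C$ is almost constant, then $J^E$ must drop by a comparable amount. If this happens at both ends of an excursion, $J^E$ has an interior bump of fixed height on an arbitrarily short time interval. On that interval the subclusters are separated, so $\ddot J^E=2K^E+\sum_{C'}\scalprod{q_{C'}}{\dot p_{C'}}\ge -B_0$. At the maximum this gives $K^E\le B_0/2$. The bounded forces propagate this to a uniform bound on $|\dot J^E|$ over the whole interval, so the bump cannot form in the available time.

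For this to work, every excursion must begin and end with an \emph{expansion} of some subcluster, never with subclusters merging (where the force bound is lost). A neighbourhood of an arbitrary limit point $Q$ does not guarantee this. The paper obtains it by taking the recurrent Graf decomposition with the fewest clusters, stabilised as $\delta\to 0$, and comparing two scales $\delta_1\ll\delta_2$, so that the Graf weights produce the fixed gap $\tfrac12\delta_2^{k_0}$ in \eqref{eq:JCtiBound}. Oscillation of clusters that stay separated ($G_3$) is then handled separately in \cref{lem:G4empty}, where bounded forces give convergence of the centres of mass.
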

	
	Before we prove the theorem, we give a corollary stating some dichotomy for clusters in the final cluster decomposition.

	\begin{definition}[Collision singularity] \quad \\
		For an isolated cluster $C$ with $\left|C\right|\geq 2$ we say that the particles of $C$ have a collision singularity at time $T^{+}$ if, for all $i,j \in C$ with $i \neq j$, the limits $\lim\limits_{t \uparrow T^{+}} \norm{q_i(t) - q_j(t)}$ exist, and at least for one pair $i\neq j \in C$ we have $\lim\limits_{t\uparrow T^{+}} \norm{q_i(t) - q_j(t)} = 0$. If for all $i,j\in C$ it holds that $\lim_{t\uparrow T^+} \norm{q_i(t) - q_j(t)} = 0$ then $C$ is said to have a total collision singularity.
	\end{definition}

    \begin{corollary}[von Zeipel for final clusters]\label{cor:vonZeipelFinalClusters}\quad \\
        Under the assumptions of \cref{thm:vonZeipel}, for each cluster $C$ in the \textbf{final} cluster decomposition, either:
		\begin{enumerate}
			\item $\lim\limits_{t \uparrow T^{+}} J_C^{I}(t) = 0$ and there is a total collision of the cluster $\lim\limits_{t \uparrow T^{+}} q_i^{I} = 0$, or,
			\item $\lim\limits_{t \uparrow T^{+}} J_C^{I}(t) = \infty$ and there is not a collision singularity (the cluster is said to have a non-collision singularity).
		\end{enumerate}
    \end{corollary}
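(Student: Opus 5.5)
We prove \cref{cor:vonZeipelFinalClusters} from \cref{lem:ExistenceLimitJCI} and \cref{thm:vonZeipel}. Let $C$ be a final cluster. We may assume $|C|\geq 2$; if $|C|=1$ then $J^I_C\equiv 0$ and $q^I_i\equiv 0$, so the first alternative holds trivially. A final cluster is isolated near $T^+$. Indeed, if $i\in C$ and $j\notin C$, then $\liminf_{t\uparrow T^+}\norm{q_i-q_j}>0$ by \cref{def:FinalClusterDecomposition}, and the finitely many pairwise distances are positive and continuous on $(0,T^+)$. Hence they are bounded below on a final time interval $(T^+-\eta,T^+)$, and the problem can be restarted there. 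So \cref{prop:clustersAreBoundedForce} applies, and by \cref{lem:ExistenceLimitJCI} the limit $J_* := \lim_{t\uparrow T^+} J^I_C(t)\in[0,\infty]$ exists.

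\textbf{Case $J_*=\infty$.} Then $\sum_{i\in C} m_i\norm{q^I_i}^2\to\infty$, so some internal distance is unbounded. Suppose, for contradiction, that $C$ had a collision singularity. Then every $\lim\norm{q_i-q_j}$ would exist and be finite, which would force $J^I_C$ to stay bounded. (Here we use the identity $J^I_C = \frac{1}{2m_C}\sum_{i<j} m_im_j\norm{q_i-q_j}^2$.) This contradicts $J_*=\infty$, so the cluster has a non-collision singularity, which is the second alternative.

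\textbf{Case $J_*<\infty$.} By \cref{thm:vonZeipel}, the internal configuration converges: $q^I_i(t)\to a_i$ for each $i\in C$. Consequently every pairwise distance $\norm{q_i-q_j}=\norm{q^I_i-q^I_j}$ converges to $\norm{a_i-a_j}$, and the $\liminf$ in \cref{def:FinalClusterDecomposition} is then a genuine limit. If $i,j$ are joined by an edge of the graph, then $\norm{a_i-a_j}=\lim\norm{q_i-q_j}=0$. Since $C$ is a connected component, any two $i,j\in C$ are linked by a chain of edges, so $a_i=a_j$ for all $i,j\in C$. The normalisation $\sum_{i\in C} m_iq^I_i=0$ passes to the limit and gives $\sum_{i\in C} m_ia_i=0$, so the common value is $0$. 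Therefore $q^I_i\to 0$ for all $i\in C$, $J^I_C\to 0$, and we have a total collision, which is the first alternative. In particular $J_*$ cannot lie in $(0,\infty)$.

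The main obstacle is the connectivity step. In general a $\liminf$ being zero says nothing about a limit; it is only because \cref{thm:vonZeipel} already provides convergence of the internal configuration that each edge of the graph becomes an actual limiting coincidence, after which transitivity along chains closes the argument. The only other point needing care is the isolation of final clusters on a terminal interval, which requires the compactness remark above (finitely many pairs, each with positive $\liminf$ and positive on $(0,T^+)$).
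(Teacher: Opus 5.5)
Your proof is correct and follows the paper's argument: \cref{thm:vonZeipel} gives convergence of the internal configuration when $\lim_{t\uparrow T^+} J^I_C(t)<\infty$, and connectedness of the final cluster turns each $\liminf$-edge into an actual limiting coincidence, which forces a total collision at the centre of mass. You also make explicit three points the paper leaves implicit: final clusters are isolated near $T^+$, the singleton case is trivial, and $J^I_C\to\infty$ excludes a collision singularity.
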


    \begin{proof}
        If there is a total collision singularity then clearly the internal moment of inertia $J^I_C(t)$ has a limit to $0$. Conversely, assume that $\lim\limits_{t \uparrow T^{+}} J_C^{I}(t) < \infty$. By \cref{thm:vonZeipel} this implies the existence of $\norm{q_i-q_j}$ for all $i,j \in C$. But as $C$ is a final cluster, for each $i_0$ there is an index $j_0$ with $\liminf_{t \uparrow T^{+}} \norm{q_{i_0}-q_{j_0}} = 0$ and hence also $\lim_{t \uparrow T^{+}} \norm{q_{i_0}(t)-q_{j_0}(t)} = 0$. So we can group the particles by the relation $i \sim j \quad \Leftrightarrow \quad \lim_{t \uparrow T^{+}} \norm{q_{i}(t)-q_{j}(t)} = 0$. But by the definition of the final cluster decomposition, all particles of $C$ have to be in the same cluster, which implies convergence of all particles to the center of mass of $C$.
    \end{proof}
	
	For the proof of \cref{thm:vonZeipel}, we start by introducing some notation. Relabeling the particles of $C$, we assume without loss of generality that $C$ is given by $C= \{1, \ldots, n\} $ for some $n \in \N$. In the proof, we only use the assumption on the dynamics and potential given in \cref{rem:GeneralAssumptions}. 
	
	The first trick from Sperling \cite{sperling1972collision} is to observe that the assumed finite limit of $J^I_C(t)$ as $t\uparrow T^+$ forces only four possibilities for the dynamics between two bodies $q_i$ and $q_j$ of each pair in $C$. Firstly, as 
	\[ J^I(t) = \frac{1}{2 m_C} \sum_{i < j} m_i m_j \norm{q_i(t) - q_j(t)}^2, \]
	the finite limit of $J^I(t)$ implies that the pairwise distances $\norm{q_i(t) - q_j(t)}$ for all pairs $\{i,j\}$ in $C$ are uniformly bounded. Consequently, the only four possibilities for a pair $\{i,j\}$ in $C$ are:
	\begin{enumerate}
		\item The pair $\{i,j\}$ end in a collision. That is, $\{i,j\}$ is part of the set 
		\[ G_1 := \{\{i,j\} \subseteq C \mid \lim_{t \uparrow T^{+}} \norm{q_i(t) - q_j(t)} = 0 \}. \]
		\item The pair $\{i,j\}$ has a well-defined limiting distance. That is, $\{i,j\}$ is part of the set
		\[ G_2 := \{\{i,j\} \subseteq C \mid \lim_{t \uparrow T^{+}} \norm{q_i(t) - q_j(t)} >0 \text{ and the limit exists } \}. \]
		\item The pair $\{i,j\}$ does not have a limiting distance, however, the particles remain away from collision. That is, $\{i,j\}$ is part of the set 
		\[G_3 := \{\{i,j\} \subseteq C \mid 0 <\liminf_{t \uparrow T^{+}} \norm{q_i(t) - q_j(t)}< \limsup_{t \uparrow T^{+}} \norm{q_i(t) - q_j(t)} < \infty \}.\]
		\item The pair $\{i,j\}$ do not have a limiting distance and they stay arbitrarily close to collision. That is, $\{i,j\}$ is part of the set
		\[ G_4 := \{\{i,j\} \subseteq C \mid 0 =\liminf_{t \uparrow T^{+}} \norm{q_i(t) - q_j(t)}< \limsup_{t \uparrow T^{+}} \norm{q_i(t) - q_j(t)} < \infty \} .  \]
	\end{enumerate}
	To prove \cref{thm:vonZeipel}, it suffices to show that $G_3 = \emptyset = G_4$, so that every mutual distance has a finite limit. If the internal dynamics of $C$ is singular, then at least one pair belongs to the set $G_1$. Before the formal proof that there is no $G_3$ or $G_4$ pairs, we want to give an informal analogy as to why they cannot occur with a finite limit for the moment of inertia.
	
	\begin{remark} \quad \\
		We want to give an informal argument what is the problem with oscillatory motion under convergence of $J(t)$ to a finite limit. To give the reader a better picture for the argument, one can compare it to the (unbounded) motion in the five-body problem constructed by Xia \cite{xia1992existence}. In this example, a so-called messenger particle is commuting between two binary clusters, having near-collisions with them and pushing the system to infinity via this oscillation. For this, there is a natural clustering: the binaries each form a cluster and while the messenger particle in the middle is traveling, it is a cluster on its own. However, as soon as the messenger approaches one of the binaries, one can consider them as a cluster of three. In the language of the Graf partition defined in \cref{def:GrafPartition}, this is then a time where the Graf partition only consists of two clusters, which is the minimal number in this case. Now as the messenger separates again from the binary, their distance will have again a contribution to the moment of inertia. For unbounded motions as in Xia's example this is no problem (the moment of inertia just increases, as it does all the time along the trajectory), but what happens if the moment of inertia should have a definite limit? In this case, the increase of this part of the moment of inertia has to be compensated by a decrease of the external moment of inertia given by the centers of the clusters. This implies the existence of a local maximum of the external moment of inertia. But then (as the clusters are separated) the kinetic energy cannot be large, which however also contradicts that a corresponding decrease of the external moment of inertia could happen in this short time interval. This is the technical core of \cref{prop:G4empty}.
	\end{remark}
	
	To prove \cref{thm:vonZeipel}, we will make use of Graf partitions of configuration space, indexed by cluster decompositions (see e.g. \cite[Section 12.6]{knauf2018asymptotic}). In essence, a Graf partition splits configuration space up into regions where the bodies are grouped into clusters with internal moments of inertia of comparable size. While there is some initial set-up of various definitions and properties of Graf partitions, once this is done, they provide a convenient way to understand the various asymptotic behaviours of pairs of particles in $C$. 

	\begin{definition}[Graf partition] \label{def:GrafPartition} \quad \\
		For $\delta \in (0,1)$, let
		\begin{align*}
			J^{(\delta)}:& \; M \rightarrow [0, \infty), \quad J^{(\delta)}(q) := \max\left\{ J_{\C}^{E}(q) + \delta^{\left|\C \right|} \mid \C \in \mathcal{P}(\bN)\right\}
		\end{align*}
		where 
		\begin{equation}\label{eq:JCE}
			J_{\C}^{E}(q) := \sum_{C \in \C} \frac{m_C}{2} \norm{q_C}^2.
		\end{equation}
		The function $J^{(\delta)}$ induces subsets of $M$ given by
		\begin{align*}
			\Xi^{(\delta)}_{\C} &:= \left\{ q \in M \mid J^{(\delta)}(q) = J_{\C}^{E}(q) + \delta^{\left|\C \right|} \right\}
		\end{align*}
		for arbitrary $\C \in \mathcal{P}(\bN)$. The family $\left(\Xi_{\C}^{(\delta)}\right)_{\C \in \mathcal{P}(\bN)}$ is a finite closed cover of $M$ and is a partition of $M$ modulo sets of Lebesgue measure zero. It is denoted the \textbf{Graf partition} of $M$. For a position $q \in M$ any $\C \in \mathcal{P}(\bN)$ is a \textbf{Graf cluster decomposition} of $q$ if $q \in \Xi_{\C}^{(\delta)}$.
	\end{definition}

    Before we state some properties of the Graf cluster decomposition, we have to recall the partial order on the set of cluster decompositions as introduced e.g. in \cite{knauf2018asymptotic}.

    \begin{definition}[Partial order for cluster decompositions] \label{def:PartialOrderClusterDecomposition}\quad \\
        The set $\mathcal{P}(\bN) := \{ \mathcal{C} \mid \mathcal{C} \text{ is a set partition of } \bN \}$ becomes a lattice with the partial order defined by the refinement relation
			\begin{align*}
			\mathcal{C} = \{C_1, \ldots, C_k \} &\preccurlyeq \{D_1, \ldots, D_l \} = \mathcal{D}
			\intertext{if there is a surjection $\pi: \{1, \ldots, k\} \rightarrow \{1, \ldots, l\}$ with}
			C_{i} &\subseteq D_{\pi(i)}, \quad \forall i \in \{1, \ldots, k\}.
			\end{align*}
			In this case, $\mathcal{C}$ is called \textbf{finer} than $\mathcal{D}$ and $\mathcal{D}$ is called \textbf{coarser} than $\mathcal{C}$.
			The finest cluster decomposition is $\C_{\min} = \{\{1\}, \ldots, \{n\}\}$.
    \end{definition}
	
	\begin{remark} \label{rem:PropertiesGrafPartition}
		\quad \\
		We gather some useful statements from \cite[Chapter 12.6]{knauf2018mathematical} on the Graf partition. Henceforth, we will assume that $ 0 < \delta < \delta_0$, for some $\delta_0$ sufficiently small that the following properties hold.
		\begin{enumerate}[(i)]
			\item The Graf partition is only a measure-theoretical partition of the phase space in the sense that $\Xi_{\C}^{(\delta)} \cap \Xi_{\D}^{(\delta)} \neq \emptyset$ is possible, but the intersection is always a set of Lebesgue measure zero if $\C\neq \D$. Alternatively stated, the set of $q$ with more than one Graf cluster decomposition is Lebesgue measure zero.
			\item If two bodies are in the same cluster of a Graf cluster decomposition then they have at most some maximal distance $\rho^I(\delta)$ such that $\rho^I(\delta) \to 0$ as $\delta \to 0$.
			\item \label{bul:GrafPartitionMinimalDistance} For sufficiently small $\delta$, if two bodies are not in the same cluster, they have at least a distance $\rho^E(\delta)$.
			\item \label{bul:DeltaGrafPartitionSufficientlySmall} For sufficiently small $\delta$, if for some position $q$ there are two distinct Graf cluster decompositions $\C$ and $\D$, then these two decompositions are comparable in the partial order of Definition~\ref{def:PartialOrderClusterDecomposition}.
		\end{enumerate}
	\end{remark}

	Define now an equivalence relation on $\bN$ by 
	\[ i \sim_0 j \iff i=j \text{ or } \{i,j\}\in G_1. \]
	Transitivity follows from the triangle inequality. Let
	\[ \D^{(0)} := \bN / \sim_0 \]
	be the corresponding cluster decomposition. The clusters of $\D^{(0)}$ consist of bodies whose mutual distances tend to zero. Distinct clusters need not yet be isolated, since a pair of bodies belonging to different clusters may lie in $G_4$. To describe the time-dependent clustering before $G_4$ has been excluded, we use the temporal Graf cluster decomposition from \cite[Definition 2.12]{quaschner2025improbability}.

	\begin{definition}[Temporal Graf cluster decomposition] \label{def:TempAndFinalClusterDecomposition} \quad \\
		Let $T^{+} > 0$ be an arbitrary time and $q: [0, T^{+}) \rightarrow M$ be a continuous function.
		\\
		For arbitrary $\delta \in (0,1)$ we call a family $\left(\C_t^{(\delta)}\right)_{t \in [0, T^{+})}$ with $\C_t^{(\delta)} \in \mathcal{P}(\bN)$ a \textbf{temporal Graf cluster decomposition}, if
		\begin{align*}
			q(t) \in \Xi_{\C_t^{(\delta)}}^{(\delta)} \quad \forall t \in [0, T^{+}).
		\end{align*}
	\end{definition}
	
	\begin{remark}
		A temporal Graf cluster decomposition is not unique at times where $q(t)$ is in the intersection of two or more possible clusters. For $\delta$ sufficiently small, the possible Graf decompositions at any $q$ are comparable. Consequently, one obtains a canonical temporal Graf decomposition by always selecting either the finest or the coarsest decomposition. 
	\end{remark}
	\begin{remark}
		If there is a cluster decomposition $\C$, some $\delta>0$ and some interval $(t_1, t_2)$ such that $\forall t \in (t_1, t_2):  \C_t^{(\delta)} = \C$, then the clusters of $\C$ are isolated clusters on the interval $(t_1, t_2)$ with a minimal distance $\rho^{E}(\delta)$, hence all forces between the clusters are bounded.
	\end{remark}

	With the temporal Graf partition defined, we are now in a position to prove $G_3 = \emptyset = G_4$, thus establishing \cref{thm:vonZeipel}.	Although pairs in $G_3$ have no limiting distance, their interaction forces remain uniformly bounded because their separation is bounded away from zero. Pairs of $G_4$ have the most complicated asymptotics. The main crux of the proof is to establish that $G_4$ is empty. In fact, $G_4 = \emptyset$ is sufficent to show that $G_3 = \emptyset$.

	\begin{lemma}[Collision singularity for $G_4 = \emptyset$] \label{lem:G4empty} 
		Under the assumptions of \cref{rem:GeneralAssumptions}, if $G_4 = \emptyset$, then $G_3 = \emptyset$ and all limits $\lim\limits_{t \uparrow T^{+}} q_i(t) -q_j(t)$ exist as vectors in $\R^{d}$.
	\end{lemma}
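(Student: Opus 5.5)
Assume $G_4=\emptyset$. Then every pair $\{i,j\}$ lies in $G_1$, $G_2$ or $G_3$, so pairs not in $G_1$ stay uniformly apart near $T^+$: there are $\rho>0$ and $t_0<T^+$ with $\norm{q_i(t)-q_j(t)}\geq\rho$ for all $t\in(t_0,T^+)$ and all $i\not\sim_0 j$. Hence the clusters of $\D^{(0)}$ are isolated on $(t_0,T^+)$. By the argument of \cref{prop:clustersAreBoundedForce}, which uses only translation invariance and the boundedness of $V_{jk}$ and its derivative away from the origin from \cref{rem:GeneralAssumptions}, each cluster $D\in\D^{(0)}$ has a centre-of-mass momentum $p_D$ whose derivative is a sum of bounded forces. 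The internal forces cancel, and the cross-cluster forces are bounded because of the separation $\rho$.

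The plan is then to work with the cluster centres $q_D$. Since $\dot p_D$ is bounded on $(t_0,T^+)$ and $T^+<\infty$, $p_D(t)$ converges, so $\dot q_D=p_D/m_D$ is bounded and $q_D(t)$ converges to a vector $q_D^*$. Inside each $D$ all mutual distances tend to $0$, so $q_i-q_D\to 0$ for $i\in D$, and therefore $q_i(t)\to q_D^*$. For $i\in D$ and $j\in D'$ this gives $q_i-q_j\to q_D^*-q_{D'}^*$, which proves existence of all vector limits. Because $\norm{q_i-q_j}\to\norm{q_D^*-q_{D'}^*}$, every distance has a limit, so no pair can be in $G_3$. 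For $D\neq D'$ the limit is at least $\rho>0$, so these pairs are in $G_2$. No assumption on attractiveness is needed, and the finite limit of $J^I_C$ was used only to know a priori that distances are bounded, i.e.\ that $G_1,\dots,G_4$ exhaust the cases.

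The one step needing care is the uniform lower bound $\rho$. A pair in $G_2\cup G_3$ has positive $\liminf$ of its distance, so each such pair is eventually bounded below by half its $\liminf$. Taking the minimum over the finitely many pairs gives $\rho$ and $t_0$. Pairs not in $G_1$ but with $i\not\sim_0 j$ are exactly the $G_2\cup G_3$ pairs, since $\sim_0$ is generated by $G_1$ and is transitive by the triangle inequality, so no hidden chain issue arises. I expect no real obstacle here; the substantive difficulty of the theorem is showing $G_4=\emptyset$, which this lemma takes as a hypothesis.
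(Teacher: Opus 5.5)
Your proposal is correct and follows essentially the same route as the paper: show that the clusters of $\D^{(0)}$ are isolated near $T^+$, use \cref{prop:clustersAreBoundedForce} to get convergence of each cluster centre $q_D$, and combine this with the vanishing internal distances to get convergence of every $q_i$, hence of every $q_i-q_j$. The only difference is how you get the separation. You take the uniform bound $\rho$ directly from the positive $\liminf$ of the finitely many $G_2\cup G_3$ distances, whereas the paper goes through the temporal Graf decomposition and shows $\C^{(\delta)}_t=\D^{(0)}$ eventually; your shortcut is valid and slightly cleaner, since the paper's choice of $\delta$ already relies on that same separation.
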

	\begin{proof}
		Take $\delta>0$ sufficiently small so that no pair $\{i,j\}$ in $G_2$ or $G_3$ can belong to the same cluster of a Graf decomposition with parameter $\delta$.
		Now consider a temporal Graf cluster decomposition $\C^{(\delta)}_t$ with this parameter $\delta$. 
		After some time $t_0$, all pairs of particles $\{i,j\} \in G_1$ have a sufficiently small distance (depending on $\delta$) and hence are in the same cluster of $\C_t^{(\delta)}$. Hence, every cluster of $\D^{(0)}$ is contained in a cluster of $\C^{(\delta)}_t$.
		
		Because $G_4 = \emptyset$, every pair of bodies belonging to distinct clusters of $\D^{(0)}$ must lie in either $G_2$ or $G_3$. But by assumption on the size of $\delta$, such a pair cannot belong to the same cluster of $\C_t^{(\delta)}$. Therefore 
		\[ \C_t^{(\delta)} = \D^{(0)}, \text{ for every } t \in (t_0, T^+). \]
		
		It follows that the distinct clusters of $\D^{(0)}$ remain uniformly separated on $(t_0, T^+)$, that is, each of the clusters of $\D^{(0)}$ are isolated clusters. In consequence, $\lim\limits_{t \uparrow T^+} q_D(t)$ exists for all $D\in \D^{(0)}$, compare \cref{prop:clustersAreBoundedForce}.
        As the pairwise distance between particles within such a cluster $D$ converges to $0$, this implies that all particles have a limit for $t \uparrow T^+$ and hence there are no pairs of type $G_3$.
	\end{proof}
	
	Having established \cref{lem:G4empty}, in order to establish \cref{thm:vonZeipel}, it suffices to show $G_4 = \emptyset$. Henceforth, assume $G_4 \neq \emptyset$ and we will prove a contradiction. The key idea is that pairs of type $G_4$ are oscillating, which implies that for each fixed $\delta$ the temporal Graf cluster decomposition will change infinitely often and arbitrary close to the singular time $T^+$. What we want to find now are coarsest cluster decompositions with a minimal number of clusters which recur infinitley often close to the singular time. 
	
	\begin{definition} \quad
		\begin{enumerate}[(i)]
			\item For a fixed $\delta>0$, let $\mathcal{R}_{\ast}^{(\delta)}$ be the set of cluster decompositions that appear infinitely often and arbitrarily close to $T^+$ in a temporal Graf decomposition $\C^{(\delta)}_t$. Any decomposition $\C \in \mathcal{R}_{\ast}^{(\delta)}$ is called \textbf{recurrent at $T^+$}. That is, 
			\begin{align*}
				\mathcal{R}_{\ast}^{(\delta)} &:= \left\{ \C \in \mathcal{P}(\bN) \mid \forall t \in [0, T^+), \;  \exists s \in (t, T^+) \text{ such that } \C_s^{(\delta)} = \C \right\}, 
			\end{align*}
			\item Define 
			\[ k_*(\delta) := \min_{\C \in \mathcal{R}^{(\delta)}_*} |\C|. \]
			Let $\mathcal{R}_{\ast, \min}^{(\delta)}$ be the set of recurrent decompositions at $T^+$ with minimum number of clusters $k_*(\delta)$. That is,
			\[
				\mathcal{R}_{\ast, \min}^{(\delta)} := \left\{ \C \in 	\mathcal{R}_{\ast}^{(\delta)} \mid \left|\C\right| = k_*(\delta) \right\}.
			\]
			\item We define $\mathcal{R}_{\ast, \min}^{(0)} := \lim\limits_{\delta  \downarrow 0} \mathcal{R}_{\ast, \min}^{(\delta)}$.
		\end{enumerate}
	\end{definition}
	
	\begin{remark} \quad
		\begin{enumerate}[(i)]
			\item Every recurrent Graf decomposition needs to be coarser than $\D^{(0)}$. Indeed, for each $\delta>0$, there is some time after which all pairs in $G_1$ must be in the same Graf cluster, as their internal distance are way smaller than any fixed constant depending on $\delta$. 
			\item The recurrent cluster decompositions need not be comparable. But by choosing a recurrent decomposition with minimum number of clusters $k_*(\delta) = |\C|$, the existence of coarser cluster decompositions can be excluded. 
		\end{enumerate}
	\end{remark}
	
	The following lemma establishes properties of $k_*$ and $\mathcal{R}^{(\delta)}_{*,\min}$ as $\delta$ varies.

	\begin{lemma}\label{lem:BehaviourLimitDelta}
		Fix $\bar\delta>0$ sufficiently small so that $\bar\delta\leq 1/n$ and all Graf decompositions are comparable for every $0<\delta\leq\bar\delta$. Let $\delta_1,\delta_2$ be such that $0 < \delta_1 < \delta_2 < \bar{\delta}$. Then:
		\begin{enumerate}[(i)]
			\item For any $q\in M$, if $\C$ is a $\delta_1$-Graf decomposition and $\D$ a $\delta_2$-Graf decomposition at $q$, then $|\D| \leq |\C|$ with strict inequality if $\D \neq \C$.
			\item The function
			$ k_*:(0,\bar\delta] \to \left\{1,\dots,\big|\mathcal D^{(0)}\big|\right\} $
			is non-increasing. Consequently, for each $k$, the level set
			$ I_k:=\{\delta\in(0,\bar\delta]\mid k_*(\delta)=k\} $
			is a (possibly empty) interval.
			\item If $ k_*(\delta_1)=k_*(\delta_2)$, then
			\[ \mathcal{R}_{*,\min}^{(\delta_1)} \subseteq \mathcal {R}_{*,\min}^{(\delta_2)}. \]
			\item There exist $k_0\in\bN$, a cluster decomposition
			$\mathcal C^0$ with $|\mathcal C^0|=k_0$, and numbers
			$\delta_0>0$ and $t_0<T^+$ such that
			\[
			|\mathcal C_t^{(\delta)}|\geq k_0
			\]
			for every $0<\delta\leq\delta_0$ and
			$t\in[t_0,T^+)$, and such that, for every
			$0<\delta\leq\delta_0$ and every $s<T^+$, there exists
			$t\in(s,T^+)$ satisfying $\C_t^{(\delta)} = \C^0$.
		\end{enumerate}
	\end{lemma}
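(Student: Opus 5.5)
Part (i) will be proved first, directly from \cref{def:GrafPartition}, and the other three parts follow from it largely by bookkeeping. Suppose $q\in\Xi^{(\delta_1)}_{\C}\cap\Xi^{(\delta_2)}_{\D}$. From maximality in the two definitions we get
\[ J^E_{\C}(q)+\delta_1^{|\C|}\ge J^E_{\D}(q)+\delta_1^{|\D|},\qquad J^E_{\D}(q)+\delta_2^{|\D|}\ge J^E_{\C}(q)+\delta_2^{|\C|}. \]
Adding the two inequalities gives $\delta_2^{|\D|}-\delta_1^{|\D|}\ge \delta_2^{|\C|}-\delta_1^{|\C|}$. The function $k\mapsto \delta_2^k-\delta_1^k$ is strictly decreasing on $k\ge1$ when $\delta_1<\delta_2\le 1/n$. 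To see this, note the difference of consecutive terms is $\delta_2^k(1-\delta_2)-\delta_1^k(1-\delta_1)$, and $x\mapsto x^k(1-x)$ is increasing on $[0,1/n]$ for $k\le n$. This is exactly where the hypothesis $\bar\delta\le 1/n$ enters. It follows that $|\D|\le|\C|$. If $|\D|=|\C|$, both inequalities become equalities, so $\C$ is also a $\delta_2$-Graf decomposition of $q$ and $\D$ a $\delta_1$-one. Comparability of Graf decompositions at fixed $\delta$ then forces $\C=\D$, since comparable partitions with the same cardinality coincide. This gives the strict inequality.

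For (ii), I pass to the time-dependent setting. Let $\C\in\mathcal R^{(\delta_1)}_{*}$ recur at times $s_m\uparrow T^+$. At each $s_m$, the $\delta_2$-decomposition $\C^{(\delta_2)}_{s_m}$ has at most $|\C|$ clusters by (i). Since $\mathcal P(\bN)$ is finite, some $\D$ occurs for infinitely many $m$, so $\D\in\mathcal R^{(\delta_2)}_*$. Taking $|\C|=k_*(\delta_1)$ yields $k_*(\delta_2)\le k_*(\delta_1)$. The range is contained in $\{1,\dots,|\D^{(0)}|\}$ because every recurrent decomposition is coarser than $\D^{(0)}$ (the earlier remark). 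The level sets of a monotone function are intervals.

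For (iii), keep the same $\C\in\mathcal R^{(\delta_1)}_{*,\min}$ and the same $\D$ as in (ii). If $k_*(\delta_1)=k_*(\delta_2)=k$, then $k\le|\D|\le|\C|=k$. Hence $|\D|=|\C|$ at those times, and the equality case of (i) gives $\D=\C$. Therefore $\C\in\mathcal R^{(\delta_2)}_{*,\min}$.

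For (iv), $k_*$ is non-increasing with finitely many values, so as $\delta\downarrow0$ it stabilises. There is therefore $\delta'>0$ and $k_0$ with $k_*(\delta)=k_0$ on $(0,\delta']$. By (iii), the sets $\mathcal R^{(\delta)}_{*,\min}$ are non-empty finite sets that increase with $\delta$ on this interval, i.e.\ they decrease as $\delta\downarrow0$. A decreasing family of non-empty subsets of a finite set stabilises at a non-empty set, and I pick $\C^0$ in it. Recurrence of $\C^0$ at every $\delta\le\delta_0$ then holds by definition.

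The main obstacle is making the lower bound $|\C^{(\delta)}_t|\ge k_0$ hold uniformly, for all $\delta\le\delta_0$ and all $t\ge t_0$ with a single $t_0$. For a fixed $\delta$, decompositions with fewer than $k_*(\delta)$ clusters are not recurrent, and finiteness gives a time after which they never occur; but that time could depend on $\delta$. To remove the dependence, I apply (i) at the fixed parameter $\delta_0$. Fix $t_0$ so that for $t\ge t_0$ no decomposition with fewer than $k_0$ clusters occurs at parameter $\delta_0$. Then for any $\delta<\delta_0$ and $t\ge t_0$, (i) gives $|\C^{(\delta)}_t|\ge|\C^{(\delta_0)}_t|\ge k_0$. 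This monotonicity in $\delta$ is what makes the choice of $t_0$ uniform, which is the essential point of the lemma.
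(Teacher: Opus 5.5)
Your proposal is correct and follows the paper's proof essentially step for step: the same pair of maximality inequalities and the comparability-based equality case for (i), the subsequence/pigeonhole argument for (ii) and (iii), and in (iv) stabilisation of the nested sets $\mathcal{R}^{(\delta)}_{*,\min}$ followed by applying (i) at the fixed parameter $\delta_0$ to make $t_0$ uniform in $\delta$. Your explicit check that $k\mapsto\delta_2^k-\delta_1^k$ is strictly decreasing fills in a step the paper only asserts; note that $\delta_2\le 1/2$ already suffices there, so $\bar\delta\le 1/n$ is more than is needed.
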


	\begin{proof}
	Suppose that, at some $q$, the decompositions $\C$ and $\D$ are such that $q\in \Xi^{(\delta_1)}_{\C}$ and $q\in \Xi^{(\delta_2)}_{\D}$.
	By maximality,
	\[ J_{\C}^E(q)+\delta_1^{|\C|} \geq J_{\D}^E(q)+\delta_1^{|\D|} \]
	and
	\[ J_{\D}^E(q)+\delta_2^{|\D|} \geq J_{\C}^E(q)+\delta_2^{|\C|}. \]
	Thus
	\[ \delta_1^{|\D|}-\delta_1^{|\C|} \leq J_{\C}^E(q)-J_{\D}^E(q) \leq \delta_2^{|\D|}-\delta_2^{|\C|}. \]
	From this, it can be concluded that $|\D| \leq |\C|$.

	If $|\D| = |\C|$ then we would have $J_{\C}^E(q) = J_{\D}^E(q)$. Consequently, $\C,\D$ are Graf decompositions of $q$ for both the values of $\delta_1, \delta_2$. In turn, this implies that $\C$ and $\D$ are comparable (as $\delta$ is sufficiently small). But, comparable decompositions with the same number of clusters must be equal. Hence, it has been shown that 
	\[ |\D| \leq |\C| \]
	with strict inequality unless $\C = \D$. This establishes (i).

	Now let $ \C\in\mathcal{R}_{*,\min}^{(\delta_1)}$.
	Choose a sequence $s_\ell\uparrow T^+$ such that $ \C_{s_\ell}^{(\delta_1)}=\C$. Since there are finitely many cluster decompositions, after passing
	to a subsequence, we may assume that there is a fixed decomposition $\D$ such that $\C_{s_\ell}^{(\delta_2)}=\D$
	for all $s_{\ell}$. Then, by definition, $\D\in\C_*^{(\delta_2)}$, and consequently
	\[ k_*(\delta_2) \leq |\D|\leq|\C|=k_*(\delta_1).\]
	Therefore $k_*$ is non-increasing in $\delta$. Since $k_*$ is integer-valued and non-increasing, each of its level sets is an interval as claimed in (ii).

	If $k_*(\delta_1)=k_*(\delta_2)$ then necessarily $|\D|=|\C|$. Then, as shown above, $\D=\C$. Thus $\C$ is recurrent at
	parameter $\delta_2$, and hence $\C\in\mathcal{R}_{*,\min}^{(\delta_2)}$. This proves (iii).

	Since $k_*$ is integer-valued and non-decreasing as
	$\delta \to 0$, it is eventually constant. Let
	\[ k_0:=\lim_{\delta\downarrow0}k_*(\delta). \]
	For all sufficiently small $\delta$, we therefore have
	$k_*(\delta)=k_0$. On this interval, part (iii) shows that the
	nonempty finite sets $\mathcal{R}_{*,\min}^{(\delta)}$ form a nested
	family as $\delta \to 0$. Since there are only finitely many
	cluster decompositions, this family eventually stabilizes. Hence there exist $\delta_0>0$ and a decomposition $\C^0$ such
	that $ \C^0\in\mathcal{R}_{*,\min}^{(\delta)} $
	for every $0<\delta\leq\delta_0$. In particular, $|\C^0|=k_0$ and $\C^0$ occurs arbitrarily close to $T^+$ for every
	such $\delta$.

	Finally, since no decomposition with fewer than $k_0$ clusters is recurrent at parameter $\delta_0$, there exists $t_0<T^+$ such that $|\C_t^{(\delta_0)}|\geq k_0$
	for every $t\in[t_0,T^+)$. Hence, 
	$|\C_t^{(\delta)}| \geq |\C_t^{(\delta_0)}|$, whenever $0<\delta\leq\delta_0$. Therefore
	$|\C_t^{(\delta)}|\geq k_0$
	for every $0<\delta\leq\delta_0$ and every
	$t\in[t_0,T^+)$. This establishes (iv).
	\end{proof}

\begin{remark}
The argument below is the Graf-partition analogue of Sperling's separation-scale argument. Sperling introduces a spatial threshold and studies the number of groups of bodies separated at that scale. Here the role of the spatial threshold is played indirectly by the Graf parameter $\delta$. That is, a Graf decomposition has individual sizes of its distinct clusters bounded above by a quantity $\rho^I(\delta)\to 0$ as $\delta \to 0$, while distinct clusters are separated by a positive distance $\rho^E(\delta)$. The number of clusters in a Graf decomposition is therefore the corresponding discrete quantity to Sperling's number of groups of bodies.

As in Sperling's proof, this integer-valued quantity eventually stabilizes as the scale is varied. The final contradiction compares two sufficiently separated Graf scales $\delta_1 \ll \delta_2$. The change of scale produces a fixed gap in the external moment of inertia, whereas bounded forces between clusters imply that this inertia cannot vary by the required amount on a sufficiently short time interval near $T^+$.
\end{remark}

\begin{proposition}\label{prop:G4empty}
Under the assumptions of \cref{rem:GeneralAssumptions}, there are no pairs of type $G_4$.
\end{proposition}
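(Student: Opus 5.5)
We argue by contradiction, assuming $G_4\neq\emptyset$, and use \cref{lem:BehaviourLimitDelta}(iv) to fix $k_0$, $\C^0$, $\delta_0$ and $t_0$. Since a $G_4$ pair $\{i,j\}$ has $\limsup\norm{q_i-q_j}>0$, the bodies $i$ and $j$ lie in different clusters of the Graf decomposition at arbitrarily late times, for $\delta$ small. Since $\liminf\norm{q_i-q_j}=0$, they also lie in a common cluster at arbitrarily late times. So the temporal Graf decomposition cannot stabilise, and for $\delta\le\delta_0$ it does not remain equal to $\C^0$ near $T^+$.

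We choose two scales $\delta_1\ll\delta_2\le\delta_0$. By \cref{lem:BehaviourLimitDelta}(iv) and minimality, a decomposition with fewer than $k_0$ clusters never occurs after $t_0$, for either scale. Hence whenever $\C_t^{(\delta_2)}=\C^0$, the finer-or-equal decomposition $\C_t^{(\delta_1)}$ (by \cref{lem:BehaviourLimitDelta}(i)) must also equal $\C^0$. On the other hand, the $G_4$ oscillation forces times where $\C_t^{(\delta_1)}$ is strictly finer than $\C^0$ with $|\C_t^{(\delta_1)}|>k_0$.

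We therefore pick late intervals $[s_1,s_2]$ on which the $\delta_1$-decomposition passes from $\C^0$ to a strictly finer $\D$ and back to $\C^0$. The Graf maximality inequalities, as in the proof of \cref{lem:BehaviourLimitDelta}(i), give a quantitative statement. At a time where $\D$ is a $\delta_1$-decomposition but $\C^0$ is still the $\delta_2$-decomposition, we have $J^E_{\D}-J^E_{\C^0}\ge \delta_2^{k_0}-\delta_2^{|\D|}\ge c(\delta_2)>0$. Meanwhile, at times where $\C^0$ is the $\delta_1$-decomposition, $J^E_{\D}-J^E_{\C^0}\le \delta_1^{k_0}-\delta_1^{|\D|}\le \delta_1^{k_0}$, which is tiny. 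Note that $J^E_{\D}-J^E_{\C^0}=\sum_{C\in\C^0}J^I$ of the splitting of $C$ into $\D$-subclusters; this is a sum of inter-subcluster moments of inertia inside the clusters of $\C^0$.

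So along each excursion this quantity $\Phi:=J^E_{\D}-J^E_{\C^0}$ rises by at least $c(\delta_2)-\delta_1^{k_0}\ge c/2$ and falls back. Because the total $J^I_C$ converges, and $J^I_C=J^E_{\C^0}$ (relative to the center of mass of $C$) plus the internal inertia of the $\C^0$-clusters, which is $O(\rho^I)$-small, the external quantity $J^E_{\C^0}$ (centred) must vary by roughly the same amount, up to $o(1)$ and small $\rho^I(\delta)$ errors, in the opposite direction. Thus $J^E_{\C^0}$ has a definite drop or rise of size $\gtrsim c/2$ over intervals $[s_1,s_2]$ that accumulate at $T^+$.

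The main obstacle is converting this into a contradiction. My plan is the following. During the excursion, the coarser decomposition stays $\C^0$ at scale $\delta_2$ (or is at least as fine), since fewer clusters are impossible. So the clusters of $\C^0$ are separated by at least $\rho^E(\delta_2)$ and the inter-cluster forces are bounded. Then the external dynamics of the $\C^0$-centres is that of bodies under bounded forces, and $\ddot J^E_{\C^0}=2K^E_{\C^0}+O(\sqrt{J})$. We would show that $K^E_{\C^0}$, and hence $\dot J^E_{\C^0}$, is bounded uniformly on these intervals. To do so, we use that the external energy changes only by $\int$ of bounded force times velocity, together with $\int_0^{T^+}K^I<\infty$, which we expect from \cref{lemma:hKEstimates} once $\dot J^I$ is bounded above. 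A Gronwall argument on $\sqrt{K^E}$ should then bound it.

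With this bound, $J^E_{\C^0}$ is uniformly Lipschitz. However, the excursion intervals must have length tending to zero, since they accumulate at $T^+$. So a change of size $c/2$ is impossible, which gives the contradiction. If the upper bound on $\dot J^I$ is unavailable, I would instead use the local maximum of $J^E_{\C^0}$ on the excursion, where $\dot J^E=0$. Together with $\ddot J^E\ge -\const$, this gives a one-sided bound that prevents a fast decrease, and that is enough.
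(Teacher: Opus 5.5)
Your ingredients match the paper's: the minimal recurrent $\C^0$ from \cref{lem:BehaviourLimitDelta}(iv), two scales, a gap of order $\delta_2^{k_0}$, bounded inter-cluster forces, and an interior maximum. The gap is in how you assemble them, because the quantitative core is inverted. If $\C^0$ is still the $\delta_2$-Graf decomposition, maximality gives $J^E_{\C^0}+\delta_2^{k_0}\ge J^E_{\D}+\delta_2^{|\D|}$, i.e.\ $\Phi\le\delta_2^{k_0}-\delta_2^{|\D|}$. That is an upper bound, not the lower bound you claim. A lower bound of that size holds only at a time where a strictly finer $\D$ ties with or beats $\C^0$ at scale $\delta_2$, i.e.\ where $\C^0$ stops being the $\delta_2$-decomposition. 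Likewise, \cref{lem:BehaviourLimitDelta}(i) plus the absence of decompositions with fewer than $k_0$ clusters gives $\C^{(\delta_1)}_t=\C^0\Rightarrow\C^{(\delta_2)}_t=\C^0$, not the converse you state. So an excursion $\C^0\to\D\to\C^0$ at scale $\delta_1$ need not make $\Phi$ rise by $c(\delta_2)$ at all, since the $\delta_2$-decomposition may remain $\C^0$ throughout. There is also a tension you leave unresolved. The gap requires $\C^0$ to lose at scale $\delta_2$, while your bounded-force step requires the clusters of $\C^0$ to stay separated. Moreover, \enquote{at least as fine, since fewer clusters are impossible} does not follow: decompositions with $\ge k_0$ clusters that are incomparable with $\C^0$ are not excluded.

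The finishing step also fails as you set it up. Your main route needs $\int_0^{T^+}K^I<\infty$, and this is not available here. \cref{lemma:hKEstimates} uses homogeneity, which \cref{rem:GeneralAssumptions} does not assume. Even with homogeneity it gives this only if $\liminf\dot J^I<\infty$, and the case $\dot J^I\to\infty$ with bounded $J^I$ is excluded in the paper only as a consequence of von Zeipel. Indeed, if $\int_0^{T^+}\sqrt{K^I}\,dt<\infty$ were known, $q^I$ would converge directly and no Graf argument would be needed. Your fallback is the right idea, but it does not fit your excursion. Comparing with $\C_{\min}$ gives $J^E_{\C^0}\ge J-\delta_1^{k_0}$ at both $\C^0$-endpoints, while always $J^E_{\C^0}\le J$ and $J$ is nearly constant. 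So the only definite variation is a dip, i.e.\ an interior minimum. Then $\ddot J^E_{\C^0}\ge -B_0$ does not rule out a fast descent from an endpoint where $\dot J^E_{\C^0}$ is uncontrolled.

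The paper's fix is to centre the interval differently. Take a late $t_p$ with $\C^{(\delta_1)}_{t_p}=\C^0$, hence $\C^{(\delta_2)}_{t_p}=\C^0$, and let $(t_1,t_2)\ni t_p$ be the maximal interval on which $\C^0$ is the $\delta_2$-decomposition; $t_2<T^+$ by the $G_4$ oscillation. At each $t_i$ a strictly finer $\C_i$ ties with $\C^0$ at scale $\delta_2$. Combining $J^E_{\C_i}\le J$, $|J(t_i)-J(t_p)|<\delta_1^n$ and the $\C_{\min}$-comparison at $t_p$ gives $J^E_{\C^0}(t_i)\le J^E_{\C^0}(t_p)-\tfrac12\delta_2^{k_0}$, provided $\delta_1^{k_0}+\delta_2^{k_0+1}\le\tfrac12\delta_2^{k_0}$. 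Now the clusters of $\C^0$ are $\rho^E(\delta_2)$-separated on all of $[t_1,t_2]$, and $J^E_{\C^0}$ has an interior maximum $t_m$. Your estimate $J^E_{\C^0}(t)\ge J^E_{\C^0}(t_m)-\tfrac{B_0}{2}(t-t_m)^2$ then yields the contradiction, which is a slightly shorter ending than the paper's bound on $\sqrt{K^E_{\C^0}}$.
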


\begin{proof}
Assume that $G_4\neq \emptyset$. Throughout, let $\C^{(\delta)}_t$ be the unique temporal Graf partition with $\C^{(\delta)}_t$ the coarsest Graf partition at $q(t)$, for all $t \in (0,T^+)$. By \cref{lem:BehaviourLimitDelta}, there is $k_0$, $\delta_0>0$, $t_0<T^+$, and $\C \in \C^{(0)}_{*,\min}$ with $|\C| = k_0$ such that, for all $\delta, s$ with $0 < \delta < \delta_0$ and $ t_0 < s < T^+$, it holds that $|\C^{(\delta)}_s| \geq k_0$ and there exists $t\in (s,T^+)$ satisfying $\C_t^{(\delta)} = \C$.

Choose parameters $ 0<\delta_1<\delta_2\leq\delta_0 $ sufficiently small so that all the Graf separation and comparability properties of \cref{rem:PropertiesGrafPartition} hold. Moreover, choose $\delta_1$ so that $\delta_1^{k_0}+\delta_2^{k_0+1} \leq \tfrac12 \delta_2^{k_0}$.
Since $\lim_{t\uparrow T^+}J(t) < \infty$ exists, there is $u_0<T^+$ such that $|J(t_1)-J(t_2)|<\delta_1^n$ for all $t_1, t_2\in[u_0,T^+)$.

We will choose $s\geq\max\{t_0,u_0\}$ to be close to $T^+$ in a precise way below. For now, let it be arbitrary. By the recurrence property of $\C$, there exists $t_p\in(s,T^+)$ such that $ \C_{t_p}^{(\delta_1)}=\C$. We claim that also $\C^{(\delta_2)}_{t_p} = \C$. Indeed, let $\D$ be any Graf decomposition of $q(t_p)$ for $\delta_2$. From \cref{lem:BehaviourLimitDelta} it must hold that $|\D|\leq|\C|=k_0$, with equality only if $\D=\C$. On the other hand, the choice of $t_0$, together with the convention that $\C_t^{(\delta_2)}$ is the coarsest Graf decomposition at $q(t_p)$, rules out any Graf decomposition with fewer than $k_0$ clusters. Hence $\D=\C$.

By continuity, we must have $\C^{(\delta_2)}_t = \C$ on a maximal open interval $(t_1, t_2) \ni t_p$. Without loss of generality, it can be assumed that $t_1 \geq s$, otherwise, a later time $t_p$ can always be chosen due to the recurrence of $\C$ and assumption that $G_4\neq \emptyset$. Indeed, for any pair $\{i,j\}\in G_4$, choose $\delta_2$ sufficiently small so that $\limsup_{t\uparrow T^+}\norm{q_i(t) - q_j(t)} > \rho^I(\delta_2)$, where $\rho^I$ is the minimal distance of particles in a Graf decomposition (recall \cref{rem:PropertiesGrafPartition}). Then there are arbitrarily late times at which $i,j$ are not in the same cluster of the Graf partition. On the other hand, as $\liminf_{t\uparrow T^+} \norm{q_i(t) - q_j(t)} = 0$ there are arbitrarily late times at which they must be in the same cluster of a Graf decomposition. Hence, the maximal interval $(t_1,t_2)$ cannot extend to $t_2 = T^+$, thus $t_1$ can be assumed greater than any chosen $s <T^+$. 

At the end points $q(t_1), q(t_2)$, $\C$ cannot be the only Graf partition. Set $\C_1, \C_2 \neq \C$ as distinct Graf partions at $q(t_1), q(t_2)$, respectively.
By assumption on the size of $\delta_0$, the decompositions $\C_i$ and $\C$ are comparable. As $\C$ is the coarsest choice of Graf partition and $\C_i \neq \C$, we must have $|\C| < |\C_i|$, so $|\C_i| \geq k_0+1$, from \cref{lem:BehaviourLimitDelta}.

We now claim that $J^E_{\C}(t_i) < J^E_{\C}(t_p)$ and consequently, $J^E_{\C}(t)$ obtains a maximum on $(t_1,t_2)$. Indeed, the fact that, for each $i=1,2$, both $\C$ and $\C_i$ are Graf decompositions at $q(t_i)$, it must hold that 
\[ J_{\C}^E(t_i) + \delta_2^{k_0} = J_{\C_i}^E(t_i) + \delta_2^{|\C_1|} .  \]
Then, from $ J_{\C}^E(t) \leq  J(t)$ we have 
\[ J_{\C}^E(t_i) \leq J(t_i) + \delta_2^{|\C_i|} - \delta_2^{k_0} .  \]
Then, from the choice of $s \geq \max\{t_0,u_0\}$ and $t_p > s$ we have $|J(t_p) - J(t_i)| < \delta_1^n$. Hence $J(t_i) \leq J(t_p) + \delta_1^n$. Moreover, since $\C$ is a Graf partition at $q(t_p)$ we have $J^E_{\C}(t_p) \geq J(t_p) + \delta_1^n - \delta_1^{k_0}$. It follows that 
\begin{equation}\label{eq:JCtiBound}
	J^E_{\C}(t_i) \leq J(t_p) + \delta_1^n + \delta_2^{|\C_i|} - \delta_2^{k_0} \leq J^E_{\C}(t_p) + \delta_1^{k_0} + \delta_2^{k_0+1} - \delta_2^{k_0} \leq J^E_{\C}(t_p) - \tfrac12 \delta_2^{k_0}.
\end{equation}
It follows that $J^E_{\C}(t_p) > J^E_{\C}(t_i)$ for $i=1,2$. Thus, $J^E_{\C}(t)$ obtains a local maximum at some time $t_m \in (t_1,t_2)$. At this time $\dot{J}^E_{\C}(t_m) = 0$ and $\ddot{J}^E_{\C}(t_m) \leq 0$.

Now, on the interval $(t_1, t_2)$ all clusters of $\C$ are isolated. It follows from \cref{prop:clustersAreBoundedForce} that $|\dot{p}_C(t)|$ is bounded for all $C\in \C$ and $t\in(t_1,t_2)$.
Using this, a calculation shows
\begin{align*}
	\dot{J}^E_{\C} &= \sum_{C\in\C} \langle q_C, p_C \rangle,\\
	\ddot{J}^E_{\C} &= 2 K_{\C}^{E} + \sum_{C \in \C} \langle q_C, \dot{p}_C \rangle,
\end{align*}
where $K_{\C}^{E} = \sum_{C\in\C} \frac{\norm{p_C}^2}{m_C}$ is the external kinetic energy of this cluster decomposition.
Since the $q_C$ and the $\dot{p}_C$ are uniformly bounded, there exists a constant $B_0$ such that 
$\left|\sum_{C\in \C}\langle q_C, \dot{p}_C\rangle\right| \leq B_0$
on $(t_1,t_2)$. 
Hence $K^E_{\C}(t_m) \leq \tfrac12 B_0$. 
Furthermore, on $(t_1,t_2)$ there is some constant $B_1$ such that 
$$\dot{K}_{\C}^{E} = \sum_{C\in \C} \left|\left\langle \frac{p_C}{m_C}, \dot{p}_C \right\rangle\right| \leq B_1 \sqrt{K_{\C}^{E}(t)}.$$
It follows that
$\left| \frac{d}{dt}\sqrt{K_{\C}^E(t)} \right| \leq \frac{B_1}{2} $
on the entire interval $(t_1,t_2)$. Therefore,
\[\sqrt{K_{\C}^E(t)} \leq  \sqrt{\frac{B_0}{2}} + \frac{B_1}{2}|t-t_m| \leq B_2 \]
for all $t\in(t_1,t_2)$, where $B_2$ is independent of the particular interval chosen sufficiently close to $T^+$.
Finally, 
\[\left|\dot J_{\C}^E(t)\right| \leq 2\sqrt{J_{\C}^E(t)K_{\C}^E(t)}.\]
Since $J_{\C}^E\leq J$ and $J$ remains bounded near $T^+$, there is a constant $B_3>0$ such that
$\left|\dot J_{\C}^E(t)\right| \leq B_3 $
for all $t\in(t_1,t_2)$.
Choose $s$ sufficiently close to $T^+$ such that
\[ B_3(T^+-s)<\tfrac12\delta_2^k.\]
Since $s<t_1<t_p<T^+$, it holds that 
\[ J_{\C}^E(t_p)-J_{\C}^E(t_1) \leq B_3(t_p-t_1) \leq B_3(T^+-s) < \tfrac12 \delta_2^k. \]
This is a contradiction with \eqref{eq:JCtiBound}, which implied
$J_{\C}^E(t_p)-J_{\C}^E(t_1) \geq \tfrac12\delta_2^k.$ 
The contradiction proves that $G_4=\emptyset$.
\end{proof}

\subsection{Generalisation of Painlev\'e's Theorem}
\label{sec:Painleve}

In this section, we return to the specific $-\alpha$-homogenous potentials and bounded forces given in \cref{eq:Motion}. In this context, there is also a generalization of Painlevé's theorem applied to subsystems. 
	
\begin{theorem}[Painlevé for subsystems]\label{thm:PainleveSubSystems} \quad \\
	Let $(q(t), p(t))$ be a solution to \eqref{eqn:ClusterEquation} with a singularity at time $0<T^+<\infty $. Let $\C$ be the final cluster decomposition (recall \cref{def:FinalClusterDecomposition}). Then, for any $C \in \C$ with $\left|C\right|>1$, 
	\begin{align*}
	\lim\limits_{t \uparrow T^{+}} \min\left\{ \norm{q_i(t) - q_j(t)} \mid i, j \in C, i \neq j \right\} = 0.
	\end{align*}
	That is, the bodies approach the collision set, although, perhaps with $q(t) \to \infty$. 
	In particular, if $J^I_C(t)$ is unbounded, then $V_C(t)$ is unbounded.
\end{theorem}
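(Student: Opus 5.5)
The plan is to split according to the dichotomy of \cref{cor:vonZeipelFinalClusters}, applied to the final cluster $C$. By \cref{prop:clustersAreBoundedForce}, $C$ is isolated and its internal dynamics is of the form \eqref{eqn:ClusterEquation} with bounded forces. By \cref{lem:ExistenceLimitJCI}, the limit $\lim_{t\uparrow T^+}J^I_C(t)$ exists in $[0,\infty]$.

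\emph{Case 1: the limit is finite.} Then \cref{cor:vonZeipelFinalClusters} gives $J^I_C(t)\to 0$ and a total collision of $C$. Hence every mutual distance in $C$ tends to $0$, and in particular so does the minimum. This case is immediate.

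\emph{Case 2: $J^I_C(t)\to\infty$.} I would first prove the ``in particular'' statement as a lemma: if $V_C$ were bounded on $(0,T^+)$, then $J^I_C$ would be bounded. The argument runs as follows.
\begin{itemize}
\item Write $K^I_C=h^I_C-V_C\le |h^I_C|+\sup|V_C|$.
\item Combine this with the first estimate of \cref{lemma:hKEstimates} to get $K^I_C(t)\le A+B\int_0^t\sqrt{K^I_C(s)}\,ds$.
\item Set $\Phi(t):=A+B\int_0^t\sqrt{K^I_C}$. Then $\dot\Phi\le B\sqrt{\Phi}$, so $\sqrt{\Phi}$ grows at most linearly and $K^I_C$ is bounded on the finite interval $(0,T^+)$.
\item The estimate $\tfrac{d}{dt}\sqrt{J^I_C}\lesssim\sqrt{K^I_C}$ from the proof of \cref{lemma:hKEstimates} then bounds $J^I_C$, contradicting $J^I_C\to\infty$.
\end{itemize}
So $V_C$ is unbounded, and moreover the same computation shows that $\sup_{[0,t]}|V_C|$ must blow up as $t\uparrow T^+$.

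To upgrade this to $\lim\min_{i\ne j}\|q_i-q_j\|=0$, I would argue by contradiction. Suppose there are $\varepsilon>0$ and $t_k\uparrow T^+$ with $\min_{i\neq j\in C}\|q_i(t_k)-q_j(t_k)\|\ge\varepsilon$. I would work on the complementary set of times: let $\mathcal{U}_\varepsilon:=\{t:\min\|q_i-q_j\|(t)>\varepsilon/2\}$. On $\mathcal{U}_\varepsilon$ all internal forces of $C$ are bounded by $\const\,\varepsilon^{-\alpha-1}$, so there $\dot p^I_i$ is uniformly bounded. The idea is to mimic the proof of \cref{prop:G4empty}, using Graf decompositions of $C$ itself at scales $\delta_1\ll\delta_2$. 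Whenever $q(t)$ leaves the region where all particles of $C$ are $\varepsilon/2$-separated, some Graf cluster of $C$ with at least two bodies forms, and these events recur arbitrarily close to $T^+$ because the definition of the final cluster forces $\liminf\|q_i-q_j\|=0$ along a connecting chain. One then needs to show that $J^I_C\to\infty$ is incompatible with recurrently returning to the fully separated state. Since $J^I_C$ increases without bound while the recurrent state has bounded external kinetic energy (bounded forces on separated clusters), I would try to produce a maximum of an external moment of inertia $J^E_{\mathcal D}$ on each excursion interval and compare its size with the increment of $J^I_C$, as in \eqref{eq:JCtiBound}.

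The hardest step is exactly this last one. Without energy conservation, the kinetic energy at the times $t_k$ is not a priori bounded, so the classical Painlev\'e argument (compactness of $(q,p)$ at $t_k$ modulo translation, giving a uniform local existence time) does not apply directly. My first attempt would be to bound $K^I_C(t_k)$ via $K^I_C\le |h^I_C(0)|+\const\sqrt{T^+\!\int K^I_C}+\const_\varepsilon$, together with \eqref{eq:IntegralKineticEnergy} and $|\dot J^I_C|\le 2\sqrt{J^I_CK^I_C}$. This yields $K^I_C(t_k)\lesssim (J^I_C(t_k))^{1/3}$, which would then have to be fed into a uniform-in-$k$ lower bound on the time the system stays $\varepsilon/2$-separated after $t_k$, long enough to contradict the finite-time blow-up of $J^I_C$. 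If this quantitative route fails, the fallback is the Graf-partition recurrence argument sketched above.
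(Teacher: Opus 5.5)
Your Case 1 and your Gronwall argument for the final sentence are correct: bounded $V_C$ forces bounded $K^I_C$, hence bounded $J^I_C$. The main assertion, however, is not proved. Unboundedness of $V_C$ only gives $\liminf_{t\uparrow T^+}\min_{i\neq j}\norm{q_i-q_j}=0$, which already follows from the definition of a final cluster. The theorem asks for the full limit, and neither of your routes for this step closes.

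\begin{itemize}
\item \textbf{Quantitative route.} It gives only an \emph{upper} bound $K^I_C(t_k)\lesssim J^I_C(t_k)^{1/3}$. This may diverge, so there is no uniform separation time after $t_k$.
\item \textbf{Graf route.} It cannot be transplanted from \cref{prop:G4empty}. The contradiction there comes from $|J(t_1)-J(t_2)|<\delta_1^n$ near $T^+$, i.e.\ from convergence of $J$, which is exactly what fails when $J^I_C\to\infty$.
\end{itemize}

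The missing idea is to prove $V_C\to-\infty$ at \emph{all} late times, via eventual monotonicity of $\dot J^I_C$.

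\begin{enumerate}
\item $\dot J^I_C\to+\infty$. Otherwise $\liminf\dot J^I_C<\infty$, so \eqref{eq:IntegralKineticEnergy} gives $\int_0^{T^+}K^I_C<\infty$, and $\tfrac{d}{dt}\sqrt{J^I_C}\lesssim\sqrt{K^I_C}$ then bounds $J^I_C$.
\item At a record time $\trec$ of $\dot J^I_C$ you have $J^I_C(\trec)\le J^I_C(0)+\trec\,\dot J^I_C(\trec)\lesssim\dot J^I_C(\trec)$. Hence $K^I_C\gtrsim(\dot J^I_C)^2/J^I_C\gtrsim\dot J^I_C$ there.
\item Inserting $\sqrt{x}\le\sqrt{D}+x/\sqrt{D}$ into the first estimate of \cref{lemma:hKEstimates} and using \eqref{eq:IntegralKineticEnergy} gives $|h^I_C|\le H_D+\const\,D^{-1/2}\dot J^I_C$ for arbitrary $D$.
\item Substituting both bounds into $\ddot J^I_C=(2-\alpha)K^I_C+\alpha h^I_C+O(\sqrt{J^I_C})$ shows $\ddot J^I_C>0$ at late record times. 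Hence after some time every instant is a record time.
\item Consequently $V_C=h^I_C-K^I_C\le-\tfrac12K^I_C\to-\infty$ for all late $t$. Since $|V_C|\le\const\,(\min_{i\neq j}\norm{q_i-q_j})^{-\alpha}$, the minimum distance tends to $0$.
\end{enumerate}

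This also finishes your contradiction argument directly: at your separated times one has $|V_C(t_k)|\le\const_\varepsilon$, which is incompatible with $V_C\to-\infty$. The ingredient you lack is the comparison $J^I_C\lesssim\dot J^I_C$, which holds only at record times. Without the eventual monotonicity of $\dot J^I_C$ you cannot apply it at arbitrary times $t_k$, and that is why your estimates do not close.
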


\begin{remark} \quad
	\begin{enumerate}
	    \item Historically, Painlevé's theorem was proven before von Zeipel's theorem. For the generalisation to a subsystem $C$, one has to go the other way around. The reason is, due to conservation of energy for the whole system without external forcing, a singularity can only happen if either some particles collide (in this case the statement is immediate) and/or if the system gets some increase of kinetic energy (leading to bizarre oscillatory behavior or noncollison singularities). However, for a subsystem or system with external force, the total energy is not conserved. So the increase of the kinetic energy could be by \enquote{stealing} energy from some other subsystem. If the subsystem is a cluster from the final cluster decomposition, we will see in the proof that not enough energy can be stolen to get a diverging subsystem (i.e. a noncollision singularity) and oscillatory behaviour is ruled out by the generalization of von Zeipel's theorem (\cref{thm:vonZeipel}).
        \item For only those clusters with a non-collision singularity does the cluster-internal potential necessarily become unbounded. For collision orbits with general constants $Z_{ij}$ of mixed signs, collision can also be reached without divergence of the potential energy. See \cref{rmk:LimitingCollisionExample} for an example of such an orbit. 
	\end{enumerate}
\end{remark}
	
\begin{proof} 
	From \cref{cor:vonZeipelFinalClusters} any cluster $C$ in the final cluster decomposition with $|C|\geq 2$ is either a total collision of the cluster or $\lim\limits_{t\uparrow T^{+}} J_C^I(t) = \infty$. 
	The statement is clear if $C$ has a total collision of the cluster. So assume the latter. 

	If $J_C^I(t)$ is unbounded, then it must happen that $\limsup_{t \uparrow T^{+}} \dot{J}_C^{I}(t) = \infty$.
	If the limit does not exist, that is, $\liminf_{t\uparrow T^+}\dot{J}^I_C(t)<\infty$ then from \cref{lemma:hKEstimates} we must have
	$\int_0^{T^+} K_C^I(t) \diff t$ bounded. But then also $\int_{0}^{T^{+}} \sqrt{K_C^{I}(t)} \diff t $ is bounded. However,
	\begin{align*}
	\frac{\diff}{\diff t} \sqrt{J_C^{I}} = \frac{\dot{J_C^{I}}}{2\sqrt{J_{C}^{I}}} \lesssim  \sqrt{K_C^{I}}.
	\end{align*}
	So if $\int_{0}^{T^{+}} \sqrt{K_C^{I}(t)} \diff t $ is bounded, then also $\limsup_{t \uparrow T^{+}} \sqrt{J_C^{I}(t)} $ is bounded, which is a contradiction to the unboundedness of $J_C^{I}$. Hence, $\liminf_{t\uparrow T^+} \dot{J}_C^I = \infty$ and the limit of $\dot{J}_C^I$ is guaranteed to exist (and be infinite).

	If the limit of $\dot{J}_C(t)$ exists, then there are record times $\trec$ for $\dot{J}_C^{I}$ with arbitrary large $\dot{J}_C^{I}(\trec)$, that is,
	\begin{align*}
	\dot{J}_C^{I}(\trec) = \sup_{t \in [0, \trec]} \dot{J}_C^{I}(t) \gg 0.
	\end{align*}
	Our goal is to show that for any sufficiently large record time $\trec$, the intrinsic energy $h_C^I$ is bounded above and the internal kinetic energy is bounded below by the internal moment of inertia $\dot{J}^I_C$ in such a way that the energy relation establishes
	\begin{equation}\label{eq:potentialKineticRatio}
		\frac{V_C(\trec)}{K^I_C(\trec)} = \frac{h_C^I(\trec)}{K^I_C(\trec)} -1 \leq -\frac{3}{4} + \frac{\const}{ J_C^I(\trec)}.
	\end{equation}   
	Then, we prove there is a record time $\trec < T^+$ after which $\dot{J}^I_C$ is monotonically increasing, that is, $\ddot{J}^I_C(t)>0$ for all $t> \trec$. After this time, all future times are record times, and \cref{eq:potentialKineticRatio} holds for all $t > \trec$. Consequently, the kinetic energy becomes unbounded while the ratio $V_C/K^I_C$ is forced to have an upper negative bound. This forces $V_C \to -\infty$ and thus, the collision set must be approached, proving the claim of the theorem.

	We begin by establishing \cref{eq:potentialKineticRatio}.
	At sufficiently large record time there is the following bound for $J_C^{I}$:
	\begin{align*}
	J_C^{I}(\trec) &\leq J_C^{I}(0) + \trec \cdot \dot{J}_C^{I}(\trec) \leq \const \dot{J}_C^I(\trec)
	\end{align*}
	Using this bound, and from the fact that $\dot{J}_C^I \leq \const \sqrt{J_C^I}\sqrt{K_C^I}$, we obtain a lower bound for $K_C^I$ at sufficiently large record time:
	\begin{align*}
	K_C^{I}(\trec) & \geq \const \frac{\dot{J}^I_C(\trec)^2}{J^I_C(\trec)} \geq \const \dot{J}^I_C(\trec).
	\end{align*}

	For an upper bound on the intrinsic energy, recall from the proof of \cref{lemma:hKEstimates} that, for arbitrary $D > 0$ and $x>0$, $\sqrt{x} \leq \sqrt{D} + \frac{x}{\sqrt{D}}$, together with the bound on the energy and bound on the integral of the kinetic energy from \cref{lemma:hKEstimates}, we have that 
	\begin{align*}
		|h^I_C(\trec)| &\leq |h^I_C(0)| + \const \int_0^{\trec} \sqrt{K^I_C(s)} \diff s \\
			&\leq |h^I_C(0)| + \const \sqrt{D}\, \trec + \const \int_0^{\trec} \frac{K^I_C(s)}{\sqrt{D}}\diff s \\
			&\leq H_D + \frac{\const}{\sqrt{D}} \dot{J}^I_C(\trec),
	\end{align*}
	with $H_D = |h^I_C(0)| + \const \sqrt{D}\,\trec$. 
	
	Putting the bounds for the internal energy and kinetic energy together yields
	\begin{align*}
		\frac{V_C(\trec)}{K^I_C(\trec)} &= \frac{h_C^I(\trec)}{K^I_C(\trec)} -1 \leq \frac{H_D + \frac{\const}{\sqrt{D}} \dot{J}^I_C(\trec)}{\const \dot{J}^I_C(\trec)} - 1 =  \left(\frac{\const}{\sqrt{D}}-1\right) + \frac{\const}{\dot{J}_C^I(\trec)}.
	\end{align*}
	As $D > 0$ is arbitrary, choose it so that \cref{eq:potentialKineticRatio} holds.

	Finally, we demonstrate that there is a sufficiently large $\trec$ after which all $t>\trec$ are record times. 
	Begin with the estimate for $\ddot{J}_C^{I}$:
	\begin{align*}
	\ddot{J}_C^{I} 
	\geq (2-\alpha) K_C^{I} - \alpha |h_C^{I}| - \const \sqrt{J_C^{I}}.
	\end{align*}
	Plugging in the estimates for $K_C^I$ and $|h_C^I|$ from above (and letting $D$ be another arbitrary constant than that chosen to establish \cref{eq:potentialKineticRatio} above), we obtain
	\begin{align*}
		\ddot{J}^I_C(\trec) &\geq (2-\alpha) K_C^I(\trec) - \alpha |h_C^I(\trec)| - \const \sqrt{J_C^I(\trec)} \\
		&\geq (2-\alpha)\const \dot{J}^I_C(\trec) - \const_{D} - \frac{\const}{\sqrt{D}} \dot{J}^I_C(\trec) -\const \sqrt{\dot{J}^I_C(\trec)}.
	\end{align*}
	Choose $D$ sufficiently large so that 
	\[ \const(2-\alpha) - \frac{\const}{\sqrt{D}} > \tfrac12 \const(2-\alpha).\]
	Then 
	\[ \ddot{J}^I_C(\trec) \geq \tfrac12\const(2-\alpha) \dot{J}^I_C(\trec) - \const_{D} - \const \sqrt{\dot{J}^I_C(\trec)}. \]
	But then we can choose $\trec$ sufficient large so that $\ddot{J}^I_C(\trec) \geq \const \dot{J}^I_C(\trec)\gg 1$. After this time $\trec$, $\dot{J}^I_C(t)$ is monotonically increasing. 
\end{proof}

\subsection{Bounds on the exchange of energy}

In this section, we assume the specific $-\alpha$-homogenous potentials and bounded forces given in \cref{eq:Motion}. Under these assumptions, the internal energy $h^I_C$ of an isolated cluster is bounded if the internal moment of inertia $J^I_C$ as $t\to T^+$ is bounded.

\begin{theorem}
	\label{thm:EnergyBound}
	If $C$ is an isolated cluster and $(q(t),p(t))$ a solution on $(0,T^+)$ such that $\lim_{t\uparrow T^+} J^I_C(t) < \infty$, then the internal energy $h^I_C(t)$ is bounded on $[0,T^+]$.  
\end{theorem}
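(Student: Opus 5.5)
The plan is to bound $\int_0^{T^+}\sqrt{K^I_C(s)}\,ds$ and then invoke the first inequality of \cref{lemma:hKEstimates}:
\[ |h^I_C(t)| \le |h^I_C(0)| + \const\int_0^t \sqrt{K^I_C(s)}\,ds. \]
By Cauchy--Schwarz, $\int_0^{T^+}\sqrt{K^I_C}\,ds \le \sqrt{T^+}\bigl(\int_0^{T^+} K^I_C\,ds\bigr)^{1/2}$. It therefore suffices to show that the time integral of the internal kinetic energy is finite. By \eqref{eq:IntegralKineticEnergy} this reduces to an upper bound on $\dot J^I_C(t)$, or at least to $\liminf_{t\uparrow T^+}\dot J^I_C(t)<\infty$. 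In the latter case part (i) of the corollary following \cref{lem:ExistenceLimitJCI} applies, together with monotonicity of $\tau\mapsto\int_0^\tau K^I_C$.

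So the key step is to exclude $\lim_{t\uparrow T^+}\dot J^I_C(t)=+\infty$ under the hypothesis that $J^I_C$ has a finite limit. I would argue as follows. If $\liminf_{t\uparrow T^+}\dot J^I_C=+\infty$, then $\dot J^I_C\ge 1$ on some interval $(t_*,T^+)$, which alone gives no contradiction. A cleaner route is to note that $J^I_C(t)=J^I_C(t_*)+\int_{t_*}^t\dot J^I_C$ is bounded. If $\liminf\dot J^I_C<\infty$ we are done. Otherwise $\limsup\dot J^I_C=\infty$, and since $\limsup J^I_C<\infty$, part (iii) of that corollary gives $\dot J^I_C\to\infty$; this is the case the paper's remark says von Zeipel excludes. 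Using \cref{thm:vonZeipel}, the internal configuration $q^I(t)$ converges, to $q^I_*$ say. Split $C$ by the equivalence relation of colliding pairs ($G_1$) into sub-clusters $D\in\D^{(0)}$. These are eventually isolated from each other (the proof of \cref{lem:G4empty} shows this, and $G_3=G_4=\emptyset$ by \cref{prop:G4empty}). By \cref{prop:clustersAreBoundedForce}, each $D$ then obeys bounded-force dynamics with internal moment of inertia $J^I_D\to0$, and the momenta $p_D$ of the sub-cluster centres stay bounded.

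Now decompose $J^I_C=\sum_D J^I_D+(\text{centre-of-mass part of the } D\text{'s})$, and likewise $K^I_C$ and $\dot J^I_C$. The centre part of $\dot J^I_C$ is bounded because the $q_D$ and $p_D$ are. The part from $D$ with $|D|=1$ is zero. For $|D|\ge2$, each $J^I_D$ tends to $0$, so it is a positive function decreasing to zero in the limit. If $\dot J^I_D\to+\infty$, then $J^I_D$ would be eventually increasing and hence could not tend to $0$, a contradiction. Hence $\liminf\dot J^I_D\le 0<\infty$ for each $D$. Applying part (i) of the corollary to each $D$ gives $\int_0^{T^+}K^I_D<\infty$. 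The external kinetic energy of the $D$'s is bounded, so its integral is finite too. Summing, $\int_0^{T^+}K^I_C<\infty$, and the energy bound follows on all of $[0,T^+)$.

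The main obstacle is the decomposition step. I must justify that for each $D$ the hypotheses of \cref{lemma:hKEstimates} hold on a final interval $(t_*,T^+)$, which is where isolation is used. Integrals over $[0,t_*]$ are harmless, since the solution is regular there. I must also check that the cross terms in $K^I_C=\sum_D K^I_D+K^E_{\D^{(0)}}$ vanish; this is the orthogonal splitting of \cref{lem:SplitSystemQuantitiesClusterDecomposition}, applied inside $C$. If the step $\liminf\dot J^I_D<\infty$ proves delicate, the fallback is the direct observation that $\int_{t}^{T^+}\dot J^I_D=-J^I_D(t)\le 0$, so $\dot J^I_D$ cannot stay positive and large near $T^+$.
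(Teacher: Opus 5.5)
Your proof is correct and follows essentially the paper's argument. The paper also uses von Zeipel to reduce to colliding subclusters $D$, observes that $J^I_D\to 0$ forces $\dot J^I_D<0$ along a sequence, concludes from \eqref{eq:IntegralKineticEnergy} that $\int_0^{T^+}K^I_D<\infty$, and then bounds the energy with the energy-variation inequality and Cauchy--Schwarz. The only difference is bookkeeping: you sum the kinetic-energy integrals and apply the energy inequality once to $C$, while the paper bounds each $h^I_D$ separately and adds the bounded inter-subcluster potential and external kinetic energy (your explicit construction of the $D$'s as $G_1$-classes is also more careful than the paper's informal reduction).
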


\begin{remark}
	By \cref{cor:vonZeipelFinalClusters}, the boundedness of $J^I_C(t)$ guarantees that there is either not singularity at $T^+$, or cluster $C$ has a collision singularity. It is not yet known whether a noncollision singularity must also have its internal energy bounded. All estimates known in \cite{quaschner2025improbability} for a system with four bodies prove that the total internal energy increases slower than the kinetic energy, but this does not preclude it becoming unbounded.
\end{remark}

\begin{proof}
	If there are no collision singularities on $[0,T^+]$ then the theorem is immediate. So assume there is a collision singularity at $T^+$. Without loss of generality, we can assume that $C$ is a cluster from the final cluster decomposition and hence (by von Zeipel \cref{thm:vonZeipel}) it has to be a total collision (we will reduce the other case to this below). As $J^I_{C}(t)$ is always positive it must limit to $0$ from above, that is, there exists a sequence $(\tau_n)_{n\in\N}$ with $\tau_n \to T^+$ such that $\dot{J}^I_{C}(\tau_n) < 0$. Then, the inequality 
	\[ \int_0^t K_{C}^I(s) \leq \frac{2}{2-\alpha} \dot{J}^I_{\mcal{C}}(t) + \const, \]
	from \cref{lemma:hKEstimates} together with the the bound on the sequence $\dot{J}^I_{C}(\tau_n) < 0$ shows that $\int_0^{T^+} K_{\mcal{C}}^I(t)\,dt < \infty$. In turn, the inequality 
	\[ \limsup_{t \uparrow T^{+}} |h_C^I(t)| \leq |h_C^I(0)| + \const\int_0^{T^+} \sqrt{K_C^I(s)}\,ds\leq |h_C^I(0)| \const \left(\int_0^{T^{+}} K_C^I(s)\,ds\right)^{1/2} \]
	shows that the internal energy for a collision subcluster $h^I_{\mcal{C}}(t)$ is bounded on $t\in[0,T^+]$. 
	
	If we have an isolated cluster composed out of different colliding subclusters, the internal energy splits into a sum of the internal energies of the colliding subclusters (which is bounded by the above argument), the potential energy between the clusters (which is bounded, as there is a minimal distance between the colliding clusters) and the external kinetic energy (which is also bounded due to the bounded forces between the clusters).
\end{proof}

\section{Asymptotics of clusters with collision}
\label{sec:AsymptoticsClustersCollisions}

Let $(q(t),p(t))$ be a solution to \eqref{eq:Motion} defined on $t\in (0,T^+)$ and let $\C$ be the final cluster decomposition (see \cref{def:FinalClusterDecomposition}). Suppose that $C\in\C$ is a cluster with a collision singularity as $t\to T^+$. In particular, as $C$ is a cluster in the final cluster decomposition, the bodies in $C$ terminate in a total collision with one another at $t = T^+$. We can assume without loss of generality that $C$ is isolated on $(0,T^+)$ so that, as demonstrated in \cref{prop:clustersAreBoundedForce}, the dynamics of the cluster $C$ on $(0, T^+)$ is of the form 
\[ \dot{q}^I_i = \frac{p^I_i}{m_i},\qquad \dot{p}^I_i = - \partial_{q^I_i} V_C(q_i^I) + \tilde{F}^C_i \]
with $\tilde{F}^C_i(t)$ bounded on $(0,T^+)$. 
In this section we perform the McGehee blowup \cite{mcgehee1974triple,moeckel2023total} of the total collision to show that foundational results about the asymptotic behaviour of various system quantities hold in the general motion of \cref{eq:Motion}. 

The McGehee blowup has the effect of replacing the collision singularity in phase space by a manifold, aptly named the \emph{collision manifold}. This is achieved by introducing appropriate polar like coordinates. Namely, introduce the following coordinates for the cluster $C$: 
\begin{equation}
	\begin{aligned}
		r &:= \sqrt{2 J^I(q^I)} \\
		Q &:= r^{-1} q^I \\
		\nu &:= r^{\alpha/2} \left\langle Q, p^I \right\rangle \\
		\omega &:= r^{\alpha/2} \mcal{M}^{-1}_C p^I - \nu Q
	\end{aligned}
\end{equation}
where $\mcal{M}_C$ is the mass matrix for the cluster $C$. Indeed, these are polar like coordinates as the inverse transformation is 
\[ q^I = r Q,\qquad p^I = r^{-\alpha/2} \mcal{M}_C (\omega + \nu Q) \] 
with 
\[\langle \omega, Q\rangle_C := \langle \omega, \mcal{M}_C Q \rangle = 0,\qquad \norm{Q}_C^2 := \langle Q, Q\rangle_C = 1. \]
In particular, $\omega$ is the associated angular velocity and $\nu$ the associated radial velocity.

The $-\alpha$-homogeneity of $V_C(q)$ together with Euler's homogeneous function theorem provides the useful identities
\begin{equation}
    V_C(r Q) = r^{-\alpha} V_C(Q),\quad \nabla V_C(r Q) = r^{-\alpha-1}\nabla V_C(Q)_,\quad \langle Q, \nabla V_C(Q) \rangle = -\alpha V_C(Q). 
\end{equation}
Using these homogeneity identities, the key cluster quantities in these coordinates are computed as
\begin{equation}\label{eqn:clusterQuantitiesMcGehee}
	\begin{aligned}
			J^I_C &= \tfrac12 r^2 \\
			K^I_C &= \tfrac12 r^{-\alpha}(\norm{\omega}_C^2 + \nu^2) \\
			h^I_C &= r^{-\alpha}\left( \tfrac12\left( \norm{\omega}_C^2 + \nu^2 \right) + V_C(Q)  \right).
	\end{aligned}
\end{equation}
The overall result of the transformation to McGehee coordinates is to have replaced the total collision singularity of cluster $C$ (occurring when $J^I(q) = 0$ by \cref{cor:vonZeipelFinalClusters}) with the entire set $r = 0$. 
In fact, the internal energy $h^I$ is bounded on $(0, T^+)$ as shown in \cref{thm:EnergyBound}. Consequently, as $r\to 0$ the collision orbit $(q(t),p(t))$ must approach the manifold 
\[ \operatorname{CM}_C := \left\{(r,Q,\nu,\omega)\,|\, r=0,\, \tfrac12 (\norm{\omega}_C^2 + \nu^2) + V_C(Q) = 0 \right\} \]
henceforth known as the \emph{collision manifold of cluster $C$}.

A computation shows the dynamics in the McGehee coordinates are 
\begin{equation}
	\begin{aligned}
		\dot{r} &= r^{-\tfrac{\alpha}{2}} \nu \\
		\dot{Q} &= r^{-\tfrac{\alpha}{2}-1} \omega \\
		\dot{\nu} &= r^{-\tfrac{\alpha}{2}-1}\left( \tfrac12 \alpha \nu^2  + \norm{\omega}_C^2 + \alpha V_C(Q) \right) + r^{\tfrac{\alpha}{2}}\left\langle Q, \tilde{F} \right\rangle \\
		\dot{\omega} &= -r^{-\tfrac{\alpha}{2}-1} \left( \tfrac12(2-\alpha) \nu\omega + \left(\norm{\omega}_C^2 + \alpha V_C(Q)\right) Q + \mcal{M}_C^{-1} \nabla V_C(Q)  \right) \\
		&\qquad+ r^{\tfrac{\alpha}{2}}\left(\mcal{M}_C^{-1} \tilde{F} - \left\langle Q,\tilde{F} \right\rangle Q \right). 
	\end{aligned}
\end{equation}

After blowup, the final step in the McGehee process is desingularisation. Introduce the new time variable $\tau$ which satisfies $dt = r^{\tfrac{\alpha}{2}+1} d\tau$. Denote a derivative with respect to $\tau$ with a prime ${}^\prime$. The corresponding equations in the new time are 
\begin{equation}
	\begin{aligned}
		r^\prime &= r \nu \\
		Q^\prime &= \omega \\
		\nu^\prime &= \tfrac12 \alpha \nu^2  + \norm{\omega}_C^2 + \alpha V_C(Q) + r^{\alpha+1} \langle Q, \tilde{F}\rangle \\
		\omega^\prime &= -\left( \tfrac12(2-\alpha) \nu\omega + \left(\norm{\omega}_C^2 + \alpha V_C(Q)\right) Q + \mcal{M}_C^{-1} \nabla V(Q)  \right) \\
		&\qquad+ r^{\alpha + 1}\left(\mcal{M}_C^{-1} \tilde{F} - \langle Q,\tilde{F} \rangle Q \right). 
	\end{aligned}
\end{equation}

As a consequence of the desingularisation, we now have a well-defined vector field which is merely a rescaling of the original outside of $r=0$, and well-defined when $r = 0$. In particular, a short calculation shows the collision manifold $\operatorname{CM}_C$ is an invariant manifold of the vector field. Moreover, the rest points on the collision manifold are precisely the set of points
\[ \operatorname{CM}_C^0 = \left\{(0,Q,\nu,\omega)\in \operatorname{CM}_C \mid \omega = 0,\ V_C(Q) = -\tfrac12 \nu^2,\  \mcal{M}_C^{-1} \nabla V_C(Q) + \alpha V_C(Q) Q = 0.\right\} \]
In fact, the last condition on $Q$ is precisely the condition for $Q$ to be a central configuration of the cluster $C$.

The following lemma generalises Proposition~3.1 from \cite{moeckel1983orbits} and establishes $\nu$ as a Lyaponov like quanity on the collision manifold $\operatorname{CM}_C$.
\begin{lemma}
	\label{lem:LjapunovFunction_nuC1}
	For any cluster with a total collision singularity as $t\to T^+$, the function $\nu$ is monotonically increasing on the collision manifold $\operatorname{CM}_C$ and even strictly increasing outside the set of rest points $\operatorname{CM}_C^0$.
\end{lemma}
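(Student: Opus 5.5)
On the collision manifold we have $r=0$, so every forcing term carrying the factor $r^{\alpha+1}$ drops out of the desingularised equations. The plan is to compute $\nu'$ on $\operatorname{CM}_C$ and use the energy relation defining $\operatorname{CM}_C$ to eliminate the potential term.

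Restricted to $\operatorname{CM}_C$ the equation reads
\[ \nu' = \tfrac12\alpha\nu^2 + \norm{\omega}_C^2 + \alpha V_C(Q). \]
On $\operatorname{CM}_C$ we have $V_C(Q) = -\tfrac12(\norm{\omega}_C^2+\nu^2)$. Substituting gives
\[ \nu' = \tfrac12\alpha\nu^2 + \norm{\omega}_C^2 - \tfrac12\alpha\norm{\omega}_C^2 - \tfrac12\alpha\nu^2 = \tfrac12(2-\alpha)\norm{\omega}_C^2 \geq 0, \]
since $0<\alpha<2$. Hence $\nu$ is non-decreasing along every orbit in $\operatorname{CM}_C$. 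This uses the invariance of $\operatorname{CM}_C$ stated above; a one-line check is that $r'=r\nu$ keeps $r=0$ invariant, and the energy relation is preserved on $r=0$ by the homogeneous equations, which is the classical McGehee computation.

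For strict increase off $\operatorname{CM}_C^0$, the formula shows $\nu'=0$ exactly when $\omega=0$. The claim to establish is that $\nu$ cannot remain constant on any time interval unless the orbit is a rest point. Suppose $\nu$ were constant on some interval. Then $\omega\equiv0$ there, so $\omega'\equiv0$ as well. With $\omega=0$ and $r=0$ the $\omega$-equation reduces to
\[ 0 = \omega' = -\bigl(\alpha V_C(Q)Q + \mcal{M}_C^{-1}\nabla V_C(Q)\bigr). \]
This is exactly the central configuration condition. Moreover $Q'=\omega=0$, so $Q$ is constant, and the energy relation gives $V_C(Q)=-\tfrac12\nu^2$. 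Together these place the point in $\operatorname{CM}_C^0$.

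Consequently, along any orbit in $\operatorname{CM}_C\setminus\operatorname{CM}_C^0$, the set of times where $\nu'=0$ contains no interval, and in fact consists of isolated points by analyticity of the vector field in $(Q,\nu,\omega)$ away from collisions of $Q$. So $\nu$ is strictly increasing there.

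The main subtlety is the meaning of \emph{strictly increasing}. Pointwise, $\nu'$ can vanish at isolated instants where $\omega$ passes through zero. The argument therefore needs the statement along orbits: $\nu'\geq0$ everywhere, and $\nu'\equiv0$ on an interval only at rest points. A second point to check is that $V_C$ may be singular at secondary collisions of $Q$, since the $Z_{ij}$ are not sign-definite. The argument is confined to points where the vector field is defined, which is all the lemma concerns.
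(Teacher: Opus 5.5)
Your proof is correct and follows essentially the same route as the paper. You compute $\nu' = \tfrac12(2-\alpha)\norm{\omega}_C^2$ on $\operatorname{CM}_C$ via the energy relation, then show that constancy of $\nu$ on an interval forces $\omega\equiv 0$ and hence a rest point. Your explicit check of the central-configuration condition from $\omega'=0$ and the analyticity remark are harmless extras; the paper instead notes that the orbit is constant on the interval and invokes uniqueness of solutions.
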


\begin{proof}
	On the collision manifold the dynamics for $\nu$ are 
	\[ \nu^\prime = \tfrac12 \alpha \nu^2 + \norm{\omega}_C^2 + \alpha V_C(Q) = \tfrac12(2-\alpha) \norm{\omega}_C^2 \geq 0. \]
    Now, suppose there are two times $\tau_1 < \tau_2$ for which $\nu(\tau_1) = \nu(\tau_2) $, that is, $\nu$ has not increased between these values. We will show that this implies the solution is in the set of equilibria on the collision manifold.

    First, integration gives
    \[ \nu(\tau_2) - \nu(\tau_1) = \frac{2-\alpha}{2} \int_{\tau_1}^{\tau_2} \norm{\omega(\tau)}_C^2 \, d\tau. \]
    Equality of $\nu(\tau_2) = \nu(\tau_1)$ can only occur if $\omega(\tau) = 0$ on the domain $(\tau_1,\tau_2)$. Then $\omega'(\tau) = 0$. Moreover, $Q'(\tau) = \omega(\tau) = 0$. Hence, the orbit $(Q(\tau), \nu(\tau), \omega(\tau))$ is constant on $(\tau_1, \tau_2)$. By uniqueness of solutions, the entire orbit is in the set of equilibria.
\end{proof}

We now turn to the asymptotic behaviour of the collision orbit as it approaches the collision manifold. We will prove a proposition stating that any collision orbit will approach the set of rest points on the collision manifold.

\begin{proposition}
	\label{prop:OmegaLimitSet}
	Let $C$ be a cluster and suppose that $(q^I(t),p^I(t))$ is a solution of \eqref{eq:Motion} defined on a maximal interval $t \in [0, T^{+})$ such that $T^{+} < \infty$ 
	and
	\begin{align*}
		\lim\limits_{t \uparrow T^{+}} J_{C}^{I}(t) &= 0, \\
		\limsup_{t \uparrow T^+} V_C(Q^I(t)) &= -a < 0. 
	\end{align*}
	Then the rescaled solution $(\tilde{q}^I(\tau), \tilde{p}^I(\tau))$ is defined on $[0,b)$ and, in McGehee coordinates,
	\begin{enumerate}[(i)]
		\item $b=\infty$, i.e. the solution is defined for all times $\tau$,
		\item for sufficiently large $\tau$, there exists $\nu_2 > \nu_0 > \nu_1 > 0$ and $c_1,c_2>0$ such that,
        \[-\nu_2 < \nu(\tau) < -\nu_1,\quad\text{and}\quad \nu(\tau)\to-\nu_0,\quad\text{and}\quad c_1 e^{-\nu_2 \tau} \leq r(\tau) \leq c_2 e^{-\nu_1 \tau}. \]
        In particular, $r(\tau)$ converges exponentially to 0.
	\end{enumerate}
	Additionally, if for all sufficiently large $\tau$, the normalised configuration $Q(\tau)$ remains in a compact collision-free subset $U \subset S_{C}\setminus \Delta$ with
	\begin{align*}
		S_{C} = \left\{ Q \in \R^{d\cdot \left|C\right|} \mid \scalprod{Q}{\mathcal{M}_C Q} = 1 \right\}
	\end{align*}
	 then
	\begin{enumerate}[(i)]
		\setcounter{enumi}{2}
		\item the omega-limit set is a nonempty compact subset of the set of rest points in the collision manifold. In particular,
        \[ \omega(\tau) \to 0,\quad V_C(Q(\tau)) \to -\frac{1}{2} \nu_0^2 \]
        and $Q(\tau)$ approaches the set of central configurations.
	\end{enumerate}
\end{proposition}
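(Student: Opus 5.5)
The plan is to work in the desingularised time $\tau$ and use two facts throughout. The first is the energy relation from \cref{eqn:clusterQuantitiesMcGehee}. The second is the boundedness of $h^I_C$ from \cref{thm:EnergyBound}, which applies because $J^I_C\to 0$. Together they give
\[ \tfrac12\left(\norm{\omega}_C^2+\nu^2\right)+V_C(Q) = r^{\alpha} h^I_C = O(r^\alpha). \]
Substituting this into the $\nu'$ equation, as in \cref{lem:LjapunovFunction_nuC1}, but now off the collision manifold, yields
\[ \nu' = \tfrac12(2-\alpha)\norm{\omega}_C^2 + \alpha r^\alpha h^I_C + r^{\alpha+1}\langle Q,\tilde F\rangle \;\geq\; \tfrac12(2-\alpha)\norm{\omega}_C^2 - E(\tau),\qquad |E(\tau)|\lesssim r(\tau)^\alpha . \]
So $\nu$ is monotone up to an error controlled by $r^\alpha$. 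Everything in (i) and (ii) is extracted from this near-monotonicity together with $r'=r\nu$, i.e.\ $(\log r)'=\nu$.

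For (i), I argue by contradiction and suppose $b<\infty$. Since $r>0$ along the orbit and $r\to 0$ as $t\uparrow T^+$, we get $\int_0^b \nu\,d\tau=\lim \log r - \log r(0) = -\infty$, so $\nu$ is unbounded below on the bounded interval $[0,b)$. But $r$ is bounded, so $E$ is bounded, and hence $\nu(\tau)\geq \nu(0)-Cb$. This contradiction gives $b=\infty$. Notably, this step uses no information about $Q$, which is why it holds even when $Q$ drifts towards $\Delta$.

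For (ii), I will use $\limsup V_C(Q)\le -a$, which gives $V_C(Q)\le -a/2$ for large $\tau$. The energy relation then gives $\norm{\omega}_C^2+\nu^2\geq a/2$ eventually.
\begin{enumerate}[(a)]
\item Since $\log r\to-\infty$, $\nu<0$ at arbitrarily late times. At a late time where $|\nu|$ is small, the bound above forces $\norm{\omega}_C^2\gtrsim a$, so $\nu'\geq c>0$. Thus $\nu$ cannot linger near $0$ from below. I therefore expect to find arbitrarily late $\tau_0$ with $\nu(\tau_0)\le -2\nu_1$ for some fixed $\nu_1>0$ and with $r(\tau_0)$ as small as desired.
\item I then bootstrap. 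As long as $\nu\le-\nu_1$ on $[\tau_0,\tau]$, we have $r(s)\le r(\tau_0)e^{-\nu_1(s-\tau_0)}$. Hence the accumulated error satisfies $\int_{\tau_0}^\tau |E|\lesssim r(\tau_0)^\alpha/(\alpha\nu_1)$, which is below $\nu_1$ once $r(\tau_0)$ is small. Near-monotonicity then keeps $\nu\le -\nu_1$ on the whole interval, so by continuity the condition holds for all $\tau\ge\tau_0$. It also gives a lower bound $\nu\geq \nu(\tau_0)-\nu_1=:-\nu_2$.
\item The function $\nu(\tau)+\int_{\tau_0}^\tau E$ is nondecreasing and bounded, and $E$ is integrable. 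Hence $\nu$ converges to some $-\nu_0\in[-\nu_2,-\nu_1]$. Integrating $(\log r)'=\nu$ gives the exponential bounds $c_1e^{-\nu_2\tau}\le r\le c_2e^{-\nu_1\tau}$.
\end{enumerate}
The main obstacle is step (a): producing a single late time at which $\nu$ is definitely negative while $r$ is already small. This is where the hypothesis on $V_C$ is essential. I would make it rigorous by combining $\int\nu=-\infty$ with the lower bound $\nu'\ge c$ on the region $\{|\nu|<\sqrt{a}/2\}$.

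For (iii), once $Q(\tau)$ stays in the compact collision-free set $U$, the quantities $V_C(Q)$ and $\nabla V_C(Q)$ are bounded. By the energy relation, $\omega$ is bounded too, so the forward orbit in $(Q,\nu,\omega)$ is precompact and its $\omega$-limit set is nonempty and compact. Integrating the $\nu'$ identity and using the convergence of $\nu$ and the integrability of $E$ gives $\int^\infty\norm{\omega}_C^2\,d\tau<\infty$. Since $\omega'$ is bounded, a Barbalat-type argument then yields $\omega\to 0$. The perturbation terms are $O(r^{\alpha+1})\to 0$, so the limit set is invariant under the autonomous $r=0$ flow and lies in $\operatorname{CM}_C$ with $\nu\equiv-\nu_0$. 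By \cref{lem:LjapunovFunction_nuC1}, an invariant set of this kind on which $\nu$ is constant consists of rest points, so it lies in $\operatorname{CM}_C^0$. Finally, the energy relation gives $V_C(Q)\to-\tfrac12\nu_0^2$, and the rest-point equation shows that $Q$ approaches the set of central configurations.
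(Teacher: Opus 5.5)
\textbf{The gap is in step (b).} Your near-monotonicity inequality $\nu'\geq\tfrac12(2-\alpha)\norm{\omega}_C^2-E$ is a \emph{lower} bound on $\nu'$. Integrated, it only yields $\nu(\tau)\geq\nu(\tau_0)-\int_{\tau_0}^\tau|E|$. It says nothing about how fast $\nu$ can rise, because the term $\tfrac12(2-\alpha)\norm{\omega}_C^2$ is nonnegative and not bounded a priori. So the claim that near-monotonicity keeps $\nu\leq-\nu_1$ uses the inequality in the wrong direction. Starting from $\nu(\tau_0)\leq-2\nu_1$, nothing in (b) prevents $\nu$ from climbing above $-\nu_1$. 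Everything downstream depends on that upper bound: the exponential decay of $r$, the integrability of $E$, the lower bound $-\nu_2$, and the convergence in (c). You only used the inequality correctly in the last sentence of (b), for the lower bound.

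\textbf{The fix.} The upper bound must come from the trapping mechanism you already identified in (a), applied at \emph{all} late times rather than only to locate one $\tau_0$. Choose $\tau_*$ so that $V_C(Q)\leq -a/2$ and $|E|$ is small for $\tau\geq\tau_*$. Then $\nu'\geq c>0$ on the strip $\{|\nu|\leq\sqrt a/2\}$ for $\tau\geq\tau_*$.
\begin{enumerate}
\item[1.] If $\nu$ ever lies in this strip after $\tau_*$, it leaves through the top in finite time.
\item[2.] It can never return below $\sqrt a/2$, since $\nu'>0$ at that level. So $\nu\geq\sqrt a/2$ from then on, which contradicts $\log r\to-\infty$.
\item[3.] Hence $\nu$ avoids the strip on $[\tau_*,\infty)$. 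Since $\nu$ is negative at some such time, continuity gives $\nu<-\sqrt a/2$ on all of $[\tau_*,\infty)$.
\end{enumerate}
This is exactly the paper's strip argument. With $\nu_1:=\sqrt a/2$ you get $r\leq c_2e^{-\nu_1\tau}$. Your bootstrap is then valid for what it actually delivers, namely $\nu\geq-\nu_2$, and (c) goes through.

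\textbf{Parts (i) and (iii).} Your proof of (i) is correct and genuinely different from the paper's. You get $b=\infty$ directly from $\nu'\geq -C$ on $[0,b)$ together with $\int_0^b\nu\,d\tau=-\infty$, without using the hypothesis on $V_C$. The paper instead obtains $b=\infty$ only at the end, as a by-product of the two-sided exponential bounds on $r$. Your route shows that $\tau$ runs to infinity for any total collision with bounded internal energy, consistent with \cref{rmk:LimitingCollisionExample}. Part (iii) is sound and more careful than the paper's sketch: precompactness, $\int^\infty\norm{\omega}_C^2<\infty$, a Barbalat-type argument, and asymptotic autonomy combined with \cref{lem:LjapunovFunction_nuC1}.
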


The proof is analogous to the proof in the appendix of \cite{moeckel2023total} adapted to the additional bounded force terms and for a more general class of potentials.

\begin{proof} 
	Using the desingularised differential equation for $\nu$ and the definition of the internal energy of the cluster in \eqref{eqn:clusterQuantitiesMcGehee}, we have the two useful equations,
    \begin{align}
        \nu' &= -(2-\alpha)(\tfrac12 \nu^2 + V_C(Q)) + 2r^\alpha h^I + r^{\alpha+1}\langle Q, \tilde{F} \rangle\label{eq:EstimatenuC1Prime} \\
        \nu' &= \frac{2-\alpha}{2} \norm{\omega}_C^2 + \alpha r^\alpha h^I + r^{\alpha+1} \langle Q, \tilde{F} \rangle.\label{eq:EstimatenuC2Prime}
    \end{align}
	Since $\tilde{F}$ and $h^I$ are bounded ( \cref{prop:clustersAreBoundedForce,thm:EnergyBound}), then there is a constant so that 
    \begin{equation} \label{eq:boundedEnergyAndForceTerm}
        |\alpha r^\alpha h^I + r^{\alpha+1}\langle Q,\tilde{F}\rangle| \leq \const r^\alpha.
    \end{equation}
    Applying this estimate to \eqref{eq:EstimatenuC1Prime} and \eqref{eq:EstimatenuC2Prime} we derive
    \begin{align*}
        \nu' &\geq -\frac{2-\alpha}{2} \nu^2 - (2-\alpha) V_C(Q) - \const r^\alpha, \\
        \nu' &\geq - \const r^\alpha.
    \end{align*}

	By assumption on the limiting behaviour of $V_C$ and $J^I_C$, for any $\epsilon_0$ such that $ 0 < \epsilon_0 < \tfrac12 \sqrt{a(2-\alpha)}$, there exists a $\tau_0 \in (0,b)$ such that both 
	\[-(2-\alpha)V_C(Q(\tau)) > 2\epsilon_0^2,\quad \text{ and, }\quad \const r^\alpha(\tau) < \epsilon_0^2,\]
	for all $\tau \in (\tau_0, b)$.  
	It follows that 
	\[ \nu'(\tau) \geq \epsilon_0^2 - \frac{2-\alpha}{2}\nu^2(\tau), \]
	for all $\tau \in (\tau_0, b)$.

	Now, suppose there is a value $\tau_1 \in (\tau_0, b)$ such that $\nu(\tau_1) \in \left[-\frac{\epsilon_0}{\sqrt{2-\alpha}}, \frac{\epsilon_0}{\sqrt{2-\alpha}} \right]$. Then $\nu'(\tau_1) \geq \tfrac12 \epsilon_0^2$. Furthermore, $\nu'(\tau) \geq \tfrac12 \epsilon_0^2$ will continue to hold for all $\tau \in (\tau_1,\tau_2)$ for some value $\tau_2$ at which $\nu(\tau_2) = \tfrac{1}{\sqrt{2-\alpha}} \epsilon_0$. For all $\tau > \tau_2$, we are forced to have  $\nu(\tau) > \tfrac{1}{\sqrt{2-\alpha}} \epsilon_0$ as the positivity of $\nu'$ when $\nu = \tfrac{1}{\sqrt{2-\alpha}} \epsilon_0$ ensures $\nu(\tau)$ can never decrease back below $\tfrac{1}{\sqrt{2-\alpha}} \epsilon_0$. In summary, if $\nu$ enters the strip $\left[-\frac{\epsilon_0}{\sqrt{2-\alpha}}, \frac{\epsilon_0}{\sqrt{2-\alpha}} \right]$ then necessarily there is a positive lower bound for $\nu$ if it leaves the strip. 
	
	Note that as long as we are inside the strip, $r$ decreases at most as $r(\tau) \geq r(\tau_1) \cdot e^{-\frac{\epsilon_0}{\sqrt{2-\alpha}} (\tau-\tau_1)}$, so the solution would be defined for arbitrary large $\tau$. However, in order for $\nu$ to cross the strip $\left[-\frac{\epsilon_0}{\sqrt{2-\alpha}}, \frac{\epsilon_0}{\sqrt{2-\alpha}} \right]$ with $\nu' \geq \frac{\epsilon_0^2}{2}$ it only takes a finite time. So the strip will eventually be crossed.
	
	However, a positive lower bound on $\nu$ is in contradiction with the assumption that $r(\tau) \to 0$. The contradiction arises because $r' = r \nu$ forces $\nu(\tau) < 0$ for $\tau$ arbitrarily close $b$. Thus, it must hold that $\nu \leq -\frac{\epsilon_0}{\sqrt{2-\alpha}} =: -\nu_1$ for $\tau > \tau_0$. Without loss of generality, we assume $\tau_0 = 0$ by shifting the origin of time if need be.  

	As $\nu(\tau) \leq -\nu_1$ it follows that $r'(\tau) \leq - \nu_1 r(\tau)$, thus 
	\[ r(\tau) \leq r_0 \exp(-\nu_1 \tau). \]
	In other words, $r(\tau)$ decays at least as fast as an exponential.
		
	Now, to establish that $r(\tau)$ is no more than exponentially contracting, it is sufficient to establish a lower bound for $\nu(\tau)$. Using the fact above that $\nu' \geq -\const r^\alpha$, we have,
	\begin{align*}
		\nu'(\tau)  \geq -\const r_0^\alpha \exp(-\nu_1\alpha \tau).
	\end{align*}
	Integrating the inequality yields
	\begin{align*}
		\nu(\tau) &\geq  \nu(0) + \const \frac{r_0^\alpha}{\nu_1 \alpha}\left( \exp(-\nu_1 \alpha \tau) -1  \right) \geq \nu(0)-  \const \frac{r_0^\alpha}{\nu_1 \alpha}=: -\nu_2
	\end{align*}
	This is a lower bound for all times $\tau$. It follows that there is $c_1,c_2 >0$ such that
    \[ c_1 e^{-\nu_2 \tau} \leq r(\tau) \leq c_2 e^{-\nu_1 \tau}. \]
    In particular, the radius converges at most exponentially to 0 and the solution exists for $\tau \in (0,\infty)$ as desired.

    We now show that $\nu(\tau)$ has a definitive limit $\nu_0$ as $\tau \to \infty$. Indeed, from \eqref{eq:EstimatenuC2Prime} and\eqref{eq:boundedEnergyAndForceTerm} we obtain
    \[ \frac{2-\alpha}{2} \int_0^\tau \norm{\omega(s)}_C^2\,ds \leq \nu(\tau) - \nu(0) + \int_0^\tau \const r^\alpha(s)\, ds. \]
    Then, the fact that $\nu(\tau)$ is bounded and $r(\tau)$ converges exponentially to zero gives 
    \[ \int_0^\infty \norm{\omega(s)}_C^2\,ds < \infty. \]
    Finally, this implies 
    \[ \int_0^\infty |\nu'(\tau)|\,d\tau \leq  \frac{2-\alpha}{2} \int_0^\infty \norm{\omega(s)}_C^2\,ds + \int_0^\infty \const r^\alpha(\tau) < \infty.\]
    Thus, $\nu$ must have a finite limit (as it has bounded variation).

	To show the omega-limit set is a nonempty compact subset of the set of rest points on the collision manifold, we first show that the solution is confined to a compact region of phase space that contains part of the collision manifold. Together with the fact that $r(\tau) \to 0$, this shows that the omega-limit set is a compact and non-empty subset of the collision manifold. Then, \cref{lem:LjapunovFunction_nuC1} will force the omega-limit set to be a subset of the rest points on the collision manifold. In particular $\omega(\tau)\to 0$, $V_C(Q(\tau)) \to -\frac{1}{2} \nu_0^2$ and $Q(\tau)$ approaches the set of central configurations.
	
	To show that the orbit is indeed confined to a compact region of phase space containing $\operatorname{CM}_C$, note that $\nu$ stays within a bounded set. By assumption, $Q$ stays within $U$, which is a compact set. As the potential $V_C$ is bounded on this set, from the energy relation we can also deduce that $\omega$ stays within a bounded set and this implies the claim.
\end{proof}

\begin{remark} \label{rmk:LimitingCollisionExample}
	Here we show the necessity of the assumption on the limit of $V_C(q)$ as $t\to T^+$ in \cref{prop:OmegaLimitSet}. 
	Consider the dynamics of 3 bodies of unit mass on the line $\R$ with potential
	\[ V(q_1,q_2,q_3) = \frac{-1}{|q_1 - q_2|} + \frac{-1}{|q_2 - q_3|} + \frac{4}{|q_1 - q_3|} \]
	So we have taken $Z_{12} = -1, Z_{23} = -1, Z_{31} = 4$. 
	Set initial conditions 
	\[ q^0 = (q_1^0,q_2^0,q_3^0) = (-1,0,1),\quad p^0 = - q^0 = (1,0,-1). \]
	Observe that 
	\[
		\nabla V(q^0) = 0.
	\]
	This means that $q^0$ is a central configuration with $\lambda = V(q^0) = 0$. Hence, there is a homothetic solution of the form $q(t) = \rho(t) q^0$. Indeed, with $p^0 = -q^0$ the radial function is $\rho(t) = (1-t)$. There is a total collision at $T^+ = 1$. Remarkably, however, the momentum is bounded (in fact it is constant) and the solution $q(t)$ extends analytically as a curve through the collision even though it is not a classical solution at $t=T^+$. In particular, $r(t)$ does not approach zero exponentially as in \cref{prop:OmegaLimitSet}. More concretely,
    \[ r(\tau) \sim \frac{2}{\tau^2},\qquad \nu(\tau) \to 0^-. \]
\end{remark}

\begin{remark}\label{rmk:potentialLimit}
	Here we show the necessity in \cref{prop:OmegaLimitSet} of the assumption that $Q(\tau)$ remains bounded away from $\Delta \cap S_{C}$ in order to establish the omega-limit set is a non-empty compact subset of the set of rest points in the collision manifold. Consider five bodies in the plane $\R^2$ of unit mass interacting under the potential 
	\[ V(q_1,q_2,q_3,q_4,q_5) = V_H(q_1, q_2, q_3) + V_V(q_3, q_4, q_5) \]
	with 
	\begin{align*}
		V_H(q_1,q_2,q_3) &= \frac{4}{|q_1 - q_2|}+ \frac{-1}{|q_2-q_3|}+\frac{-1}{|q_1-q_3|} \\
		V_V(q_3,q_4,q_5) &= \frac{1}{|q_4-q_5|} +\frac{-\tfrac12}{|q_4-q_3|} + \frac{-\tfrac12}{|q_5-q_3|}.
	\end{align*}
	This system models five bodies of equal mass and charges given by 
	\[ Z_1 = +2,\quad Z_2 = +2,\quad Z_3 = -\tfrac12,\quad Z_4 = +1,\quad Z_5 = +1. \]
	However, in this system, bodies 1,2 do not experience any force from bodies 4,5 directly; they only influence one another through body 3. 
	
	An explicit particular solution for this potential can be found by imposing additional symmetry on the initial conditions. Position body 3 at the origin, bodies 1 and 2 equidistant from the origin on the horizontal line through the origin, and bodies 4 and 5 equidistant on the vertical line through the origin. See Figure~\ref{fig:LimitingCollisionSubProblem}. More concretely, set 
	\[\begin{aligned}
		q_1 &= (-x_1,0),\quad q_2=(x_1,0),\quad q_3=(0,0),\quad q_4 = (0,x_2),\quad q_5 = (0,-x_2)
	\end{aligned} \]
	If the initial momenta are given by 
	\[ p_1 = (-y_1,0),\quad p_2=(y_1,0),\quad p_3 = (0,0),\quad p_4 = (0,y_2),\quad p_5 = (0,-y_2) \]
	then these symmetric configurations form an invariant sub-problem with dynamics given by 
	\[ \begin{aligned}
		\dot{x}_1 &= y_1,\qquad \dot{y}_1 = 0 \\
		\dot{x}_2 &= y_2,\qquad \dot{y}_2 =  -\frac{1}{4}\frac{1}{|x_2|^3}x_2
	\end{aligned} \]
	The dynamics of $x_1$ and $x_2$ are uncoupled and are given by two distinct Hamiltonian systems. The dynamics for $x_2$ is given by the 1D Kepler problem 
	\[ H_V(x_2,y_2) = \tfrac12 y_2^2 - \frac{1/4}{|x_2|}.  \]
	As is well-known in the Kepler problem, if body 4,5 are initially at rest ($y_2(0) = 0$), then $x_2 = 0$ (a triple collision between bodies 3,4, and 5) will occur at some time $T^+ > 0$ that is dependent on the initial condition $x_2(0)$. Moreover, close to this collision 
	\[ x_2(t) \sim \const (T^+ - t)^{2/3}. \] 
	
	The dynamics for $x_1$ is given by the 1D free particle  $H_H(x_1,y_1) = \tfrac12 y_1^2$. If $y_1(0) = -T^+ x_1(0)$ then the solution is simply
	\[ x_1(t) = x_1(0) \frac{T^+-t}{T^+}, \]
	indicating a collision between bodies 1,2, and 3 occuring at $t=T^+$. 
	
	\begin{figure}[ht]
		\centering
		\begin{tikzpicture}[scale=1.2,>=stealth]
			
			\def\xa{2.8}
			\def\xb{2.0}
			\def\arr{0.9}
			
			
			\fill (-\xa,0) circle (2pt) node[below=2pt] {$q_1=(-x_1,0)$};
			\fill ( \xa,0) circle (2pt) node[below=2pt] {$q_2=(x_1,0)$};
			\fill (0,0) circle (2pt) node[below right=2pt] {};
			\fill (0,\xb) circle (2pt) node[above right=2pt] {$q_4=(0,x_2)$};
			\fill (0,-\xb) circle (2pt) node[below right=2pt] {$q_5=(0,-x_2)$};
			
			\draw[->, thick] (-\xa,0) -- ++(\arr,0) node[midway, above] {};
			\draw[->, thick] ( \xa,0) -- ++(-\arr,0) node[midway, above] {};
			\draw[->, thick] (0,\xb) -- ++(0,-\arr) node[midway, right] {};
			\draw[->, thick] (0,-\xb) -- ++(0,\arr) node[midway, right] {};

			\node[above=6pt] at (-\xa,0) {$+2$};
			\node[above=6pt] at ( \xa,0) {$+2$};
			\node[above left=2pt] at (0,0) {$-\tfrac12$};
			\node[left=6pt] at (0,\xb) {$+1$};
			\node[left=6pt] at (0,-\xb) {$+1$};

			\draw[dashed] (-1.5*\xa,0) -- (0,0);
			\draw[dashed] (0,0) -- (1.5*\xa,0);
			\draw[dashed] (0,1.5*\xb) -- (0,0);
			\draw[dashed] (0,0) -- (0,-1.5*\xb);
			
		\end{tikzpicture}
		\caption{The symmetric subproblem described in Remark~\ref{rmk:LimitingCollisionExample}.}
		\label{fig:LimitingCollisionSubProblem}
	\end{figure}

	Hence, with appropriate initial momenta, this particular solution experiences a total collision of all five bodies at $t=T^+$.  As $\tfrac12 r^2 = x_1^2 + x_2^2$, the asymptotics of $x_1, x_2$ imply that $r \sim \sqrt{2} x_2(\tau)$. Moreover, along this solution, $V_H = 0$ and $V_V = -\frac{1}{2x_2}$. Since $V(q) = r^{-1} V(Q)$, we have
    \[ V(Q) = r V(q) = - \frac{r(\tau)}{2 x_2(\tau)} \to -\frac{\sqrt{2}}{2} < 0. \] 
    This shows that the collision orbit satisfies the first hypothesis of \cref{prop:OmegaLimitSet}.
    However, it does not satisfy the second hypothesis. Indeed, the asymptotics of $x_1(t), x_2(t)$ implies 
	\[ \lim_{t \uparrow T^+} Q(t) = \left((0,0),(0,0),(0,0),(0,1/\sqrt{2}),(0,-1/\sqrt{2})\right). \]
	That is, the limiting shape of the configuration involves a triple collision of bodies 1,2,3. 

    It follows that the solution $r(\tau)$ still has exponential contraction, yet, $Q(\tau)$ approaches the collision set $\Delta\cap S_{C}$. Consequently, we cannot conclude that the omega-limit set is a nonempty compact subset of the regular collision manifold, and the Lyapunov argument for convergence to rest points cannot be applied.
\end{remark}

From the above remarks, we see that there are two possible problems for determining the behaviour of the limit to the collision set. The first one is if the potential is ultimately not negative (as in \cref{rmk:LimitingCollisionExample}). The second problem (as in \cref{rmk:potentialLimit}) is when the potential is ultimately negative but the potential has only a finite limit. In this case, the level sets of the potential are not compact on $S_C\setminus\Delta$ and oscillation between different level sets of $V_C$ might be possible.  

However, if all pair potentials are negative (such as when $Z_{ij} = - m_i m_j$ in the typical $n$-body problem) then any approach to collision must force $V_C(Q^I(t)) \to -\infty$. This turns out to prevent oscillation between different level sets of $V_C$, as shown in the following proposition.

\begin{definition}
	A cluster potential $V_C$ is said to be an \emph{attractive pair potential} provided $Z_{ij} < 0$ for all $i\neq j\in C$.
\end{definition}

\begin{proposition}
	Let $C$ be a cluster and suppose that $(q^I(t),p^I(t))$ is a solution of \eqref{eq:Motion} defined on a maximal interval $t \in [0, T^{+})$ such that $T^{+} < \infty$ 
	and $\lim\limits_{t \uparrow T^{+}} J_{C}^{I}(t) = 0$.
	If $V_C$ is an attractive pair potential then:
	\begin{enumerate}[(i)]
		\item $\limsup_{t \uparrow T^+} V_C(Q^I(t)) < 0 $.
		\item $\limsup_{t \uparrow T^+} -V_C(Q^I(t)) < \infty $, i.e. the potential is bounded.
		\item $Q^{I}(t)$ remains in a compact subset of $S_C$ away from collision.
	\end{enumerate}
	Consequently, all statements in \cref{prop:OmegaLimitSet} hold for 
\end{proposition}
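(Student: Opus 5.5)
Parts (i) and (ii)/(iii) are quite different in difficulty, so I will handle them separately. For (i), note that on $S_C$ every mutual distance is bounded above: $\norm{Q_i-Q_j}\le c_0$ with $c_0$ depending only on the masses, since $\scalprod{Q}{\mathcal M_C Q}=1$. Because every $Z_{ij}<0$, each term of $V_C(Q)$ is negative. Hence
\[ V_C(Q)\le \max_{i<j}Z_{ij}\, c_0^{-\alpha}<0 \qquad\text{uniformly on } S_C\setminus\Delta, \]
which gives (i) at once.

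Next I will record two elementary facts about attractive pair potentials, and then reduce (ii) to (iii). Since all terms of $V_C$ have the same sign, $-V_C(Q)\to\infty$ on $S_C$ if and only if $\min_{i\ne j}\norm{Q_i-Q_j}\to0$. Therefore (ii) and (iii) are equivalent, because sublevel sets $\{V_C\ge -M\}\cap S_C$ are compact and collision free. The second fact is the gradient bound $\norm{\nabla V_C(Q)}\lesssim (-V_C(Q))^{(\alpha+1)/\alpha}$, which holds because the closest pair dominates both quantities. With (i) in hand, parts (i)--(ii) of \cref{prop:OmegaLimitSet} apply. So for large $\tau$, $\nu$ is bounded and converges to $-\nu_0$, $r$ decays exponentially, $\int^\infty\norm{\omega}_C^2\,d\tau<\infty$, and $\nu'$ is integrable. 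By \cref{thm:EnergyBound}, the energy relation $\tfrac12\norm{\omega}_C^2=r^\alpha h^I-\tfrac12\nu^2-V_C(Q)$ shows that $\norm{\omega}_C^2$ is large exactly when $-V_C(Q)$ is large. Then $\nu'=\tfrac{2-\alpha}{2}\norm{\omega}_C^2+O(r^\alpha)$ is large at the same times.

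The main obstacle is ruling out short excursions of $Q$ towards $\Delta$. Suppose $-V_C(Q(\tau_k))=W_k\to\infty$. A naive estimate combines $|W'|\lesssim W^{(\alpha+1)/\alpha+1/2}$ with the lower bound on $\nu'$. It only shows that each excursion raises $\nu$ by roughly $W_k^{1/2-1/\alpha}\to0$, which does not contradict convergence of $\nu$, since a fast near-binary passage is cheap. I would therefore argue by induction on $|C|$, following Sperling's subcluster argument via the Graf-partition machinery of \cref{sec:vonZeipel}.

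Along $\tau_k$, choose a Graf decomposition of $Q(\tau_k)$ at a small scale. Some proper subcluster $D\subsetneq C$ has internal size $\ll1$ in $Q$-units, while the clusters of that decomposition stay separated at scale $\rho^E$. In original time and coordinates, $D$ is then a subcluster whose interaction with $C\setminus D$ is of order $r^{-\alpha-1}$. After rescaling space by $r$ and time by $r^{1+\alpha/2}$ (the $\tau$-clock), this becomes a bounded force. So $D$ obeys a system of the type \eqref{eqn:ClusterEquation} in rescaled variables.

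I would then apply \cref{lemma:hKEstimates} and the energy bound to $D$ on $\tau$-intervals of fixed length. The goal is to show that the rescaled internal energy of $D$ is controlled by $\int\norm{\omega}_C^2$ over the interval, and that quantity tends to $0$. If the rescaled internal moment of inertia of $D$ goes to $0$, the Lagrange--Jacobi identity for $D$ forces its rescaled kinetic energy to be of size $W_k$ over a $\tau$-interval of length bounded below. This contradicts $\int^\infty\norm{\omega}_C^2<\infty$ and yields (ii), hence (iii). The final clause of the statement then follows directly from \cref{prop:OmegaLimitSet} with $U=\{V_C\ge -M\}\cap S_C$.
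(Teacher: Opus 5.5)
The gap is in (ii). Your arguments for (i), for the equivalence of (ii) and (iii), and for the final clause are fine. Your uniform bound $V_C\le\max_{i<j}Z_{ij}\,c_0^{-\alpha}<0$ on $S_C\setminus\Delta$ is in fact more explicit than the paper's one-line justification.

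The Graf-partition/induction route you substitute for (ii) is not a proof as it stands:
\begin{enumerate}
\item After the McGehee rescaling, a subcluster $D$ does not satisfy a system of the form \eqref{eqn:ClusterEquation} with bounded force. The $\omega'$-equation contains $-\tfrac12(2-\alpha)\nu\omega$ and $(\norm{\omega}_C^2+\alpha V_C(Q))Q$. These act on the relative coordinates of $D$ and are not uniformly bounded when $W=-V_C(Q)$ is large.
\item \cref{lemma:hKEstimates} and \cref{thm:EnergyBound} concern the approach to a singular time. They are not uniform estimates on $\tau$-windows of fixed length.
\item The induction hypothesis is never actually used.
\item Most importantly, you claim the kinetic energy of $D$ stays of size $W_k$ on a $\tau$-interval of length bounded below. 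This is unjustified and contradicts your own scaling: a near-binary passage at depth $W$ lasts only a $\tau$-time of order $W^{-1/\alpha-1/2}$, which is exactly the obstacle you identified.
\end{enumerate}
So no contradiction with $\int^\infty\norm{\omega}_C^2\,d\tau<\infty$ is obtained.

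The obstacle is not real. It disappears once you measure the cost of an excursion in a \emph{fixed} band rather than near its moving peak, which is what the paper does.
\begin{enumerate}
\item First, $\liminf(-V_C(Q))<\infty$. Otherwise $\norm{\omega}_C^2\ge -2V_C(Q)-\const\to\infty$, contradicting $\int^\infty\norm{\omega}_C^2\,d\tau<\infty$.
\item Fix a large $B>\liminf(-V_C(Q))$. If $\limsup(-V_C(Q))=\infty$, there are infinitely many disjoint $\tau$-intervals on which $-V_C(Q)$ runs from $B$ to $2B$ while staying in $[B,2B]$.
\item The level sets $\{V_C=-B\}$ and $\{V_C=-2B\}$ in $S_C$ are compact (by attractivity) and disjoint, hence at some distance $\eta(B)>0$.
\item On such an interval, the energy relation and \cref{thm:EnergyBound} give $\norm{\omega}_C^2\le 4B+\const$. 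So the interval has length at least $\eta(B)/(\const\sqrt B)$.
\item On the same interval, \eqref{eq:EstimatenuC1Prime} gives $\nu'\ge(2-\alpha)B-\const\ge cB$. Thus every crossing raises $\nu$ by at least $c\,\eta(B)\sqrt B>0$, an amount independent of the excursion.
\item Since $\int^\infty|\nu'|\,d\tau<\infty$ (shown in the proof of \cref{prop:OmegaLimitSet}), only finitely many crossings occur. Hence $-V_C(Q)\le 2B$ eventually.
\end{enumerate}
Your own naive estimate $|W'|\lesssim W^{(\alpha+1)/\alpha+1/2}$, applied in this band, already yields a fixed gain $\gtrsim B^{1/2-1/\alpha}$ per crossing instead of $W_k^{1/2-1/\alpha}$. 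No induction on $|C|$ and no Graf partitions are needed.
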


\begin{proof}
	(i) is immediate, as $V_C$ is negative on $S_{C} \setminus \Delta$ and the potential diverges when approaching $\Delta$. Statement (iii) will follow from Statement (ii). Indeed, define 
	\begin{align*}
		a:= \sup_{t \in (0, T^{+})} -V_C(Q^{I}(t))< \infty.
	\end{align*}
    Since $V_C(Q) \to -\infty$ as $Q\to \Delta$ there exists a compact set $U$ such that the complement $U^c$ is a neighbourhood of $S_C \cap \Delta$ and such that $V_C(Q) < -a -1$ for all $Q\in U^c\setminus \Delta$. Since $V_C(Q(t)) \geq -a$, then $Q(t)$ must remain in $U$.
	
    What is left to prove is the boundedness of $-V_C(Q^I)$. Let $\nu_1, \nu_2$ be the bounds for $\nu$ proven in \cref{prop:OmegaLimitSet}. 
	
	First of all, note that $\liminf_{t \uparrow T^+} -V_C(Q^I(t)) =\infty$ is impossible, as in this case $\nu'$ would be large and positive after some time, which contradicts the negative upper bound for $\nu$. So the only possibility for $\limsup_{t \uparrow T^+} -V_C(Q^I(t)) = \infty$ is some oscillatory behaviour. \\
	Let $B \gg 0$ be a fixed large value with $B > \liminf_{t \uparrow T^+} -V_C(Q^I(t))$ that might depend on $\nu_1, \nu_2$ and all previously chosen values from \cref{prop:OmegaLimitSet}. 
	First, note that the level sets $V_{C}^{-1}(\{-B\})$ and $V_{C}^{-1}(\{-2B\})$ are two compact disjoint sets, hence they have a positive distance $\eta>0$. If $V_C(Q(\tau))$ were unbounded and oscillating, then the distance $\eta$ must be overcome by $Q(\tau)$ on some interval $[\tau_{-B}, \tau_{-2B}]$ with $V_{C}(Q(\tau_{-B})) = -B$ and $V_{C}(Q(\tau_{-2B})) = -2B$; there would be even infinitely many such intervals. We can estimate the traveled distance in the following way:
	\begin{align*}
		\norm{Q(\tau_{-2B}) - Q(\tau_{-B})} & \leq \int_{\tau_{-B}}^{\tau_{-2B}} \norm{\omega(s)} \diff s.
	\end{align*}
	But from the internal energy relation we can obtain that for times $s$ in this interval
	\begin{align*}
		\norm{\omega(\tau)}_{C}^2 \leq \const B
	\end{align*}
	has to hold, where the constant depends on all terms from above. Due to the norm equivalence on finite-dimensional vector spaces this allows us to deduce that
	\begin{align*}
		\norm{\omega(\tau)} \leq \const \sqrt{B},
	\end{align*}
	where the constant in this equation depends on all of the above quantities and the masses of the particles.
	Thus we get the following estimates:
	\begin{align*}
		\eta \leq \norm{Q(\tau_{-2B}) - Q(\tau_{-B})} & \leq \int_{\tau_{-B}}^{\tau_{-2B}} \norm{\omega(s)} \diff s \leq \left(\tau_{-2B}- \tau_{-B}\right) \const \sqrt{B}
		\intertext{thus}
		\tau_{-2B}- \tau_{-B}                                             & \geq \frac{\eta}{ \const \sqrt{B}}.
	\end{align*}
	As all other quantities in the estimate \eqref{eq:EstimatenuC1Prime} are bounded, we can assume that $B$ is large enough that for all $\tau$ with $V_{C}(Q(\tau)) \leq -B$ we have
	\begin{align*}
		\nu' \geq \frac{B}{2}.
	\end{align*}
	It follows that 
	\begin{align*}
		\nu(\tau_{-2B}) - \nu(\tau_{-B}) & \geq \frac{B}{2} \left(\tau_{-2B}- \tau_{-B}\right) \geq \frac{\eta \sqrt{B}}{ 2\const}.
	\end{align*}
	Consequently, each passage from $-B$ to $-2B$ of the potential would increase $\nu$ by a positive value. However, it has already been established that $\nu$ has bounded variation. Hence, if there were infinitely many such passages, then $\nu$ would not be bounded above, a direct contradiction with the established upper bound for $\nu$. It follows that after some time $V_{C}(Q(\tau))$ has to be bounded below by $-2B$. 
\end{proof}

The following corollary shows how one may combine the geometric approach afforded by the McGehee coordinates to conclude asymptotic results about various cluster quantities.
Recall the asymptotic notation from \cref{def:AsymptoticNotation}.

\begin{corollary}\label{cor:DynamicalEstimates}
	Let $C$ be a cluster and suppose that $(q^I(t),p^I(t))$ is a solution of \eqref{eq:Motion} defined on a maximal interval $t \in [0, T^{+})$ such that $T^{+} < \infty$ 
	and
	\begin{align*}
		\lim\limits_{t \uparrow T^{+}} J_{C}^{I}(t) &= 0, \\
		\limsup_{t\uparrow T^+} V_C(Q) &= -a < 0.
	\end{align*}
    Set $\nu_0$ the limiting value $\nu(\tau) \to -\nu_0 < 0$ guaranteed by \cref{prop:OmegaLimitSet}.
	The following asymptotics hold.
	\begin{enumerate}[(i)]
        \item $ r(t) \sim \left(\frac{2+\alpha}{2} \nu_0\right)^{\frac{2}{2+\alpha}} (T^+ - t)^{\frac{2}{2+\alpha}}.$
		\item $ J^I(t) \sim \frac{1}{2}\left(\frac{2+\alpha}{2} \nu_0\right)^{\frac{4}{2+\alpha}} (T^+ - t)^{\frac{4}{2+\alpha}}$ 
	\end{enumerate}
    Additionally, if for all sufficiently large $\tau$, the normalised configuration $Q(\tau)$ remains in a compact collision-free subset $U \subset S_C\setminus \Delta$ then
    \begin{enumerate}[(i)]
        \setcounter{enumi}{2}
        \item $ h^I(T^+) = \lim_{t\uparrow T^+} h^I(t)$ exists and $|h^I(t) - h^I(T^+)| \lesssim  (T^+-t)^{\frac{2}{2+\alpha}}$.
        \item for all $i,j\in C,\, i\neq j$ it holds that $\norm{q_i(t) - q_j(t)} \asymp (T^+ - t)^{\frac{2}{2+\alpha}}$
        \item $V_C(q^I(t))\sim -\frac{1}{2} \nu_0^2 r(t)^{-\alpha} \sim -\frac{1}{2}\nu_0^2\left(\frac{2+\alpha}{2} \nu_0\right)^{-\frac{2\alpha}{2+\alpha}} (T^+ - t)^{-\frac{2\alpha}{2+\alpha}}.$ 
        \item $K_C^I(p(t)) \sim - V_C(q^I(t))$
        \item $\norm{\frac{p_i(t)}{m_i} - \frac{p_j(t)}{m_j}} \asymp r(t)^{-\alpha/2} \asymp (T^+ - t)^{-\frac{\alpha}{2+\alpha}}$ and also $\norm{\frac{p_i(t)}{m_i} - \frac{p_j(t)}{m_j}} \asymp \norm{q_i(t) - q_j(t)}^{-\alpha/2}.$
        \item $L_C^I(t) = o\left( (T^+ - t)^{\frac{2-\alpha}{2+\alpha}} \right)$. In particular, $L_C^I(t) \to 0$. We can strengthen this statement to $L_C^I(t) = O\left( (T^+ - t)^{\frac{4+\alpha}{2+\alpha}} \right)$.
    \end{enumerate}
	In particular, if $V_C$ is an attractive pair potential then all of the above asymptotics hold.
\end{corollary}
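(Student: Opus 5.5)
The plan is to convert the $\tau$-asymptotics of \cref{prop:OmegaLimitSet} into $t$-asymptotics through the time change $dt = r^{\frac{\alpha}{2}+1}d\tau$. Everything else then reduces to algebra in McGehee coordinates.

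\textbf{Items (i) and (ii).} In original time $\dot r = r^{-\alpha/2}\nu$, hence
\[ \frac{d}{dt} r^{\frac{2+\alpha}{2}} = \tfrac{2+\alpha}{2}\, r^{\alpha/2}\dot r = \tfrac{2+\alpha}{2}\,\nu .\]
By \cref{prop:OmegaLimitSet}(i),(ii), $\tau\to\infty$ corresponds to $t\uparrow T^+$, $r\to0$ and $\nu\to-\nu_0$. Integrating from $t$ to $T^+$ and using $r(T^+)=0$ gives
\[ r(t)^{\frac{2+\alpha}{2}} = \tfrac{2+\alpha}{2}\int_t^{T^+}(-\nu)\,ds \sim \tfrac{2+\alpha}{2}\nu_0 (T^+-t), \]
which is (i), since $\nu$ has a limit. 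Item (ii) follows from $J^I = \tfrac12 r^2$.

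\textbf{Items (iii)--(vii).} From now on assume $Q\in U$ compact and collision-free. By \cref{prop:OmegaLimitSet}(iii), $\omega\to0$ and $V_C(Q)\to-\frac12\nu_0^2$, and $\nu,\omega$ stay bounded.
\begin{enumerate}[(i)]
\setcounter{enumi}{2}
\item By \cref{lemma:hKEstimates}, $|\dot h^I|\lesssim\sqrt{K^I}$. From \eqref{eqn:clusterQuantitiesMcGehee}, $\sqrt{K^I} = r^{-\alpha/2}\sqrt{(\norm{\omega}_C^2+\nu^2)/2}\lesssim r^{-\alpha/2}\asymp (T^+-t)^{-\frac{\alpha}{2+\alpha}}$. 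This is integrable at $T^+$, so $h^I(T^+)$ exists and the tail integral gives the rate $(T^+-t)^{\frac{2}{2+\alpha}}$.
\item Write $q_i-q_j = r(Q_i-Q_j)$. On the compact collision-free set $U$, the quantity $\norm{Q_i-Q_j}$ is bounded above and below, so (i) applies.
\item Use $V_C(q^I) = r^{-\alpha}V_C(Q)$ together with $V_C(Q)\to-\frac12\nu_0^2$ and (i).
\item We have $K^I = h^I - V_C$, where $h^I$ is bounded by \cref{thm:EnergyBound} and $V_C\to-\infty$ by (v).
\item Write $\frac{p_i}{m_i}-\frac{p_j}{m_j} = r^{-\alpha/2}\big(\omega_i-\omega_j+\nu(Q_i-Q_j)\big)$. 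Since $\omega\to0$ and $\nu\to-\nu_0\neq0$, this equals $r^{-\alpha/2}\big(-\nu_0(Q_i-Q_j)+o(1)\big)$. As in (iv), $\norm{Q_i-Q_j}$ is bounded below, so the norm is $\asymp r^{-\alpha/2}$. Combining with (i) and (iv) gives the other forms.
\end{enumerate}

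\textbf{Item (viii).} Substituting the McGehee coordinates gives
\[ L^I = \sum_i q^I_i\wedge p^I_i = r^{1-\alpha/2}\sum_i m_i Q_i\wedge(\omega_i+\nu Q_i) = r^{1-\alpha/2}\sum_i m_iQ_i\wedge\omega_i .\]
Since $\omega\to0$, this is $o(r^{\frac{2-\alpha}{2}}) = o\big((T^+-t)^{\frac{2-\alpha}{2+\alpha}}\big)$, and in particular $L^I\to0$. For the sharper bound, rotation invariance of $V_C$ kills the internal torque, leaving $\dot L^I = \sum_i q^I_i\wedge\tilde F_i$. This is $O(r) = O\big((T^+-t)^{\frac{2}{2+\alpha}}\big)$. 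Since $L^I(T^+)=0$, integrating from $t$ to $T^+$ gives $O\big((T^+-t)^{\frac{4+\alpha}{2+\alpha}}\big)$.

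The final sentence of the statement follows by quoting the preceding proposition on attractive pair potentials, which supplies both hypotheses.

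\textbf{Main obstacle.} The delicate step is (i): we must justify that $t\uparrow T^+$ corresponds exactly to $\tau\to\infty$, so that the limit $\nu\to-\nu_0$ can be used inside the $t$-integral. This follows from $T^+ - t = \int_\tau^\infty r^{1+\alpha/2}\,d\tau'<\infty$ together with the exponential bounds on $r$ from \cref{prop:OmegaLimitSet}(ii). Once (i) is in place, the remaining items are bookkeeping.
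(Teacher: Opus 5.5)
Your proposal is correct and follows essentially the same route as the paper. You convert the McGehee $\tau$-asymptotics of \cref{prop:OmegaLimitSet} into $t$-asymptotics, read off (iii)--(viii) from $q^I=rQ$, $p^I=r^{-\alpha/2}\mathcal{M}_C(\omega+\nu Q)$ together with $\omega\to0$ and $\nu\to-\nu_0$, and integrate $\dot L^I_C=\sum_i q^I_i\wedge\tilde F_i$ to get the sharper angular momentum bound. The remaining differences are cosmetic: you integrate $\frac{d}{dt}r^{\frac{2+\alpha}{2}}=\frac{2+\alpha}{2}\nu$ directly where the paper uses L'H\^opital in $\tau$, and you bound $\dot h^I$ in $t$-time via \cref{lemma:hKEstimates} instead of using $h^{I\prime}(\tau)=r\langle\omega+\nu Q,\tilde F\rangle$, which gives the same rate.
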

\begin{proof}
	Since $t'(\tau) = r^{\frac{2+\alpha}{2}}$  and $r'(\tau) = \nu r$, then from L'Hopital we have 
    \[ \lim_{\tau \to \infty} \frac{T^+ - t(\tau)}{r(\tau)^{\frac{2+\alpha}{2}}} = \lim_{\tau \to \infty} \frac{-r(\tau)^{\frac{2+\alpha}{2}}}{\frac{2+\alpha}{2}\nu(\tau) r(\tau)^{\frac{2+\alpha}{2}}} = \frac{2}{\nu_0(2+\alpha)}.\]
    Hence, $r(t)^{\frac{2+\alpha}{2}} \sim (\frac{2+\alpha}{2} \nu_0) (T^+-t)$ and (i) follows. Part (ii) follows immediately from the definition $J^I = \tfrac12 r^2$.
    
	Henceforth, assume in addition that $Q(\tau)$ is in a collision-less compact set $U$ of $S_C$. Then from \cref{prop:OmegaLimitSet} we must have $\omega(\tau) \to 0$, $V_C(Q(\tau)) \to -\frac{1}{2} \nu_0^2$. The derivative of the intrinsic energy is given by,  
	\[ {h^I}^\prime(\tau) =  r(\tau) \left\langle \omega(\tau) + \nu(\tau) Q(\tau), \tilde{F}(\tau)  \right\rangle.\]
	As $\omega$, $\nu$, $Q$ and $\tilde{F}$ are bounded, then $\left\langle \omega(\tau) + \nu(\tau) Q(\tau), \tilde{F}(\tau)  \right\rangle$ is uniformly bounded for $\tau$ sufficiently large. Hence, 
    $|{h^I}^\prime(\tau)| \lesssim r(\tau)$ and therefore $h^I_C(T^+) := \lim_{t\to\infty} h^I_C(\tau)$ exists. Using the estimate for $r(\tau)$ proven in (i), the asymptotics for $h^I_C$ claimed in (iii) follows.

	Since $U$ is compact and disjoint from $\Delta$, there exists constants $d^-$ and $d^+$ so that
	\[ 0 < d^- \leq \norm{Q_{i}(\tau) - Q_{j}(\tau)} \leq d^+ < \infty, \]
	uniformly for all pairs with $i\neq j$. Thus 
	\[ \norm{q_i(t)-q_j(t)} = r(t)\norm{Q_i(t) - Q_j(t)} \asymp r(t) \]
	and (iv) follows from (i). 

	By homogeneity $V_C(q^I) = r^{-\alpha} V_C(Q)$. Then the fact that $V_C(Q) \to -\frac{1}{2} \nu_0^2$ and from the asymptotics of $r$ from (i) we have 
	\[ V_C(q^I(t)) \sim -\frac{1}{2}\nu_0^2 r^{-\alpha}(t) \] 
	and so (v) follows from (i).

	As the energy is bounded by (iii), the kinetic energy must balance the potential energy in the limit $t\to T^+$. This establishes (vi).

	We have 
	\[ \frac{p_i}{m_i} - \frac{p_j}{m_j} = r^{-\alpha/2}\left((\omega_i - \omega_j) + \nu(Q_i - Q_j)\right). \]
	From the fact that $\omega \to 0$, $\nu \to -\nu_0$, and $d^+ \geq \norm{Q_i - Q_j} \geq d_i$, it also holds that 
	$ \norm{(\omega_i - \omega_j) + \nu (Q_i - Q_j)} $
	is bounded from above and below. Consequently,
	$ \norm{\frac{p_i}{m_i} - \frac{p_j}{m_j}} \asymp r(t)^{-\alpha/2} $
	and the first part of (vii) follows. The second part of (vii) follows from the estimate of the distances in (iv).

	Finally, 
	\[ L^I_C := \sum_{i\in C} q_i^I\wedge p_i^I  = r^{1-\alpha/2} \sum_{i\in C} m_i Q_i \wedge \omega_i.  \]
	As $\omega_i \to 0$ and $Q_i$ is bounded (being on a sphere), then $L^I_C(t) = o\left( r^{1-\alpha/2}(t)\right)$ and the result follows from (i).

    On the other hand, we can even improve this result. Note that
    \begin{align*}
         \dot{L}^I_C = \sum_{i\in C} q_i^I\wedge \dot{p}_i^I = \sum_{i\in C} q_i^I\wedge \tilde{F}_i^{C}.
    \end{align*}
    So we can estimate
    \begin{align*}
        \norm{L_C^{I}(t)} &= \norm{\int_{t}^{T^+} \dot{L}^I_C(s) ds} \leq \const \int_{t}^{T^+} \sum_{i=1}^{n} \norm{q_i^{I}(s)} ds \leq \const \int_{t}^{T^+} (T^+-s)^{\frac{2}{2+\alpha}} ds \\ & = \const (T^+-s)^{\frac{4+\alpha}{2+\alpha}}.
     \end{align*}
\end{proof}

\begin{remark}
The estimates above recover, in our more general setting, 
that the angular momentum vanishes as collision is approached (at least when all hypothesis are met). This agrees with the early work on the no-infinite-spin problem of Saari and Hulkower \cite{saari1981manifolds} and ElBialy \cite{elbialy1990collision}. However, as pointed out by Moeckel and Montgomery \cite{moeckelNoInfiniteSpin2024}, the vanishing of the angular momentum alone does imply the colliding bodies do not have infinite spin on their way to collision. Indeed, one still needs to show the accumulated geometric rotation is finite.

In \cite{elbialy1990collision}, Elbialy establishes that the angular momentum for the Newtonian $n$-body problem decays as $O\left((T^+-t)^{7/3}\right)$.
The stronger decay in Elbialy's setting stems from a symmetry of the external forces $F_i$ on the cluster (a consequence specifically of the Newtonian problem). However, if one only assumes bounded external forces, the estimate on the angular momentum is essentially sharp.
\end{remark}

\section{Application to improbability of collisions}

As a small application of the above asymptotic theory, we give a simplified proof of the improbability of collision orbits in the $n$-body problem with $-\alpha$-homogeneous potentials for $\alpha \in (0,2)$ and attractive pair potentials. The first one to prove such a result was Saari \cite{saari1971improbability}, but for a smaller range of the exponent $\alpha$. Then, Fleischer and Knauf in \cite{fleischer2019improbability} proved the statement for a larger class of potentials, using the \textbf{Poincaré surface method} from \cite{fleischerImprobabilityWanderingOrbits2019}.However, the sequence of Poincaré surfaces they use is rather complicated. We can give a simpler sequence in our setting and the construction can be used in other situations as well, e.g. for orbits with simultaneous collision and non-collision singularities.

The structure of this section is as follows: we will first recall the Poincar\'e surface method in \cref{thm:PoincareSurfaceMethod}. In order to define the sequence of Poincar\'e surfaces for our purposes, we need two ingredients for the different subclusters at the collision time. One is a set of finite volume in which all particles of the cluster have to be close to the collision time. This will be given in \cref{prop:SetFiniteVolume}. Then, for one particular cluster, we will define a sequence of hypersurfaces in \cref{prop:DefinitionPoincareSurfaceOneSubsystem}, so that for each trajectory all but finitely many will be hit. Finally, in \cref{cor:ImprobabilityCollisions} we will use an exhaustion over the final cluster decomposition and the bounds of some quantities and prove the improbability of all collision orbits.

We use the following version of the Poincaré surface method, which is a special case from the result in \cite{fleischerImprobabilityWanderingOrbits2019}.

\begin{theorem}[Poincar\'e Surface Criterion \cite{fleischerImprobabilityWanderingOrbits2019}] \label{thm:PoincareSurfaceMethod}
	Let $\Omega$ be a volume form on a manifold $M$ and suppose that $X$ is a $C^1$ vector field preserving $\Omega$. Let $(\Sigma_k)_{k\in\N}$ be pairwise disjoint codimension-one submanifolds (possible with boundary) transverse to $X$ such that 
	\begin{enumerate}[(i)]
		\item each $\Sigma_k$ has finitie flux volume; $\mu_k(\Sigma_k) := \int_{\Sigma_k} |\iota_X \Omega| < \infty$, and,
		\item the flux volume shrinks to zero; $\mu_k(\Sigma_k) \to 0$ as $k\to \infty$.
	\end{enumerate}
	Then the set of points $x\in M$ that are wandering in forward time and intersect all but finitely many $\Sigma_k$ has $\Omega$-measure zero. In particular, the set 
    \[\mathcal{T}_{\operatorname{Sing}} := \{x\in M \mid T^+(x) < \infty,\ \mathcal{O}^+(x)\cap \Sigma_k \neq \emptyset \text{ for all sufficiently large $k$}\} \]
    of points $x\in M$ that have finite-time singular orbits and intersect all but finitely many $\Sigma_k$ has $\Omega$-measure zero.
\end{theorem}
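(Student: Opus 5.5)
The plan is the standard flux-counting argument for volume-preserving flows. Let $\Phi^t$ be the (local) flow of $X$ and, for each $k$, let $W_k$ be the set of points $x\in M$ whose forward orbit meets $\Sigma_k$. Set $W := \bigcap_{K}\bigcup_{k\ge K} W_k$ restricted to the forward-wandering points, or rather the set $A$ of forward-wandering points meeting all but finitely many $\Sigma_k$. For fixed $K$, $A \subseteq A_K := \{x \text{ wandering} : \mathcal{O}^+(x)\cap\Sigma_k\neq\emptyset \ \forall k\ge K\}$. It suffices to show $\Omega(A_K \cap B)=0$ for every compact (or finite-volume) set $B$, by taking a countable exhaustion.

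The key step is the flux formula: the map $\Psi_k:\{(y,s): y\in\Sigma_k,\ s\in(0,T(y))\}\to M$, $(y,s)\mapsto\Phi^{-s}(y)$ (backward flow from the surface), is a local diffeomorphism by transversality, with $\Psi_k^*\Omega = ds\wedge \iota_X\Omega$ by invariance of $\Omega$ under the flow ($\mathcal{L}_X\Omega=0$). Hence the set of points that reach $\Sigma_k$ within forward time $\le S$ has measure at most $S\,\mu_k(\Sigma_k)$. For the wandering set one avoids the time cutoff: I would use the wandering hypothesis to restrict to points whose forward orbit hits $\Sigma_k$ at a \emph{first} hitting time and, by a Poincaré-recurrence type argument, note that the map $x\mapsto$ (first hit point, hitting time) is injective onto its image. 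Then consider, for $x\in A_K\cap B$, the first hit of $\Sigma_k$; the subset of $A_K\cap B$ hitting within time $S$ has measure $\le S\mu_k(\Sigma_k)\to0$ as $k\to\infty$. Thus the part of $A_K\cap B$ hitting $\Sigma_k$ within time $S$ for all large $k$ is null.

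The main obstacle is controlling the hitting times, since points may hit $\Sigma_k$ only after long time. For the singular set $\mathcal{T}_{\operatorname{Sing}}$ this is trivial: decompose by $T^+(x)\le S$, $S\in\N$; every hit happens before $T^+(x)\le S$, so $\Omega(\mathcal{T}_{\operatorname{Sing}}\cap\{T^+\le S\}\cap A_K)\le \Omega(\{x: \mathcal{O}^+(x)\cap\Sigma_k\ne\emptyset \text{ within time } S\})\le S\mu_k(\Sigma_k)$ for every $k\ge K$, hence zero; taking countable unions over $S,K$ gives measure zero. For the general wandering statement, I would reduce to this by the Fleischer–Knauf trick: by flow invariance, $\Omega(A_K\cap B)=\Omega(\Phi^{-t}(A_K\cap B))$, and using disjointness of the $\Sigma_k$ together with wandering (orbits eventually leave $B$ and cannot return) one can push the set forward so that its hits occur within bounded time, again bounding by $S\mu_k$; since only the singular-set statement is needed later, I would present that case in full and cite \cite{fleischerImprobabilityWanderingOrbits2019} for the general wandering version.
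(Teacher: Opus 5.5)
The paper does not prove this theorem; it quotes it as a special case of Fleischer--Knauf, and the later improbability corollary uses only the $\mathcal{T}_{\operatorname{Sing}}$ clause. Your argument for that clause is correct and self-contained. With $\Psi_k(y,s)=\Phi^{-s}(y)$ one has $\Psi_k^*\Omega=\pm\, ds\wedge\iota_X\Omega$. So the set of points reaching $\Sigma_k$ before time $S$ has measure at most $S\,\mu_k(\Sigma_k)$, and the exhaustion over $S,K\in\N$ finishes the proof.

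A few remarks on that argument:
\begin{itemize}
\item It needs neither disjointness of the $\Sigma_k$ nor any wandering property. It also needs no injectivity or ``recurrence'' argument, because $\Omega(\Psi_k(D))\le\int_D|\Psi_k^*\Omega|$ holds for any local diffeomorphism. So in this case define your sets without the word ``wandering''.
\item This direct route is more robust than deducing the clause from the wandering statement. A singular point need not have a wandering neighbourhood: in the planar Kepler problem, a negative-energy collision point is a limit of periodic points.
\end{itemize}

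For the general wandering clause, however, your sketch does not work as stated. Localising to an arbitrary compact or finite-volume $B$ gives no control. Orbits may return to $B$, and the first hitting times of $\Sigma_k$ for points of $B$ may range over an unbounded set, so no single forward shift puts them in a bounded window. Disjointness of the $\Sigma_k$ plays no role here either.

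The missing step is to localise to the sets $W_{U,T_0}:=\{x\in U:\Phi^t(x)\notin U \text{ for all } t\ge T_0\}$, with $U$ from a countable basis and $T_0\in\N$; these cover all forward-wandering points. Suppose $y\in\Sigma_k$ and $s_1<s_2$ satisfy $\Phi^{-s_1}(y)\in U$ and $\Phi^{-s_2}(y)\in W_{U,T_0}$. Then $s_2-s_1<T_0$. Hence each backward orbit from $\Sigma_k$ spends total time at most $T_0$ in $W_{U,T_0}$, and the same flow-box integral gives $\Omega\bigl(\{x\in W_{U,T_0}:\mathcal{O}^+(x)\cap\Sigma_k\neq\emptyset\}\bigr)\le T_0\,\mu_k(\Sigma_k)\to 0$.

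Equivalently, in your push-forward language: slice by first hitting time into windows of length $T_0$ and flow the $j$-th slice forward by $jT_0$. The wandering condition is exactly what makes the shifted slices pairwise disjoint, so together they have measure at most $T_0\,\mu_k(\Sigma_k)$.

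Citing Fleischer--Knauf for this part, as the paper does, is acceptable. But the mechanism you describe would not supply it.
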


We want to apply this theorem to a collision orbit in the $n$-body problem with (possibly) multiple simultaneous collisions. In order to define the sequence of Poincaré surfaces, we need two kinds of sets for the different colliding clusters. For this, we first consider again one final cluster $C$ having a total collision with the dynamics as in \cref{sec:AsymptoticsClustersCollisions}. For convenience, relabel the indices so that $C = \{1,\dots, n\}$ and consider the following coordinates:
\begin{definition}[Jacobi Coordinates on $C$]\label{def:suitableCoords2} \quad \\
    Define the coordinates $(x,y)$ that are connected to the $(q,p)$-coordinates via
    \begin{align*}
		x&=\left(q_1-q_2, q_{\{1,2\}}-q_3, \ldots, q_{\{1,\ldots, n-1\}}-q_n, q_C \right), \\
		y&=\left(\frac{m_2 p_1 - m_1 p_2}{m_1+m_2}, \frac{m_3 p_{\{1,2\}}-m_{\{1,2\}} p_3}{m_{\{1,2\}}+m_3}, \ldots, \frac{m_n p_{\{1,\ldots, n-1\}} - m_{\{1,\ldots, n-1\}} p_n }{m_{\{1,\ldots, n-1\}}+m_{n}}, p_C\right).
	\end{align*}
    In fact, these are canonical coordinates that diagonalise the kinetic energy,
    \begin{align}
        K(p) = \frac{1}{2} \sum_{i=1}^{n-1} \frac{m_{\{1, \ldots, i+1\}}}{m_{\{1, \ldots, i\}} \cdot m_{i+1}} \norm{y_i}^2 + \frac{1}{2} \frac{\norm{y_n}^2}{m_{\{1,\ldots, n\}}}. \label{eq:KJacobiCoordinates}
    \end{align}
\end{definition}

The first step is now to find a set of finite volume in which a colliding subsystem has to stay. This is done in the following proposition.

\begin{proposition} \label{prop:SetFiniteVolume}
	Let $A>0$ be given. Consider a system of $n$ particles $(q,p)$ evolving according to \eqref{eq:Motion}  with attractive pair potentials and bounded external forces satisfying
	\[
	\sup_{t\in(0,T^+),\,j\in C} \norm{F_j(t)} 	\leq A .
	\]
	Then there is a set of finite volume $\mathcal{S}$ (depending on $A$), such that any solution satisfying
	\begin{equation}\begin{split}
			\lim_{t\uparrow T^+} J_C^I(t) = 0,\quad
			\limsup_{t\uparrow T^+}\norm{q_C(t)} \leq A,\quad
			\limsup_{t\uparrow T^+}\norm{p_C(t)} \leq A,\\
			\sup_{t\in(0,T^+)} |h_C^I(t)| \leq A,\quad 
			\limsup_{t\uparrow T^+} -V_C(Q(t))  \leq A
		\end{split}
	\end{equation}
	will stay in this set close to the escape time.
\end{proposition}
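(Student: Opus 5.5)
We will construct $\mathcal{S}$ explicitly in the Jacobi coordinates of \cref{def:suitableCoords2}, which are canonical, so the Liouville volume $dq\,dp$ equals $dx\,dy$ up to a constant Jacobian. The idea is to bound the momentum variables by a power of the position variables, using the asymptotics of \cref{cor:DynamicalEstimates} together with the energy and potential bounds in the hypotheses, and then check that the resulting region has finite volume because the power is integrable near collision.

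\textbf{Step 1: external coordinates.} Split the coordinates into external ones $(x_n,y_n)=(q_C,p_C)$ and internal ones $(x_1,\dots,x_{n-1},y_1,\dots,y_{n-1})$. The hypotheses give, near $T^+$, $\norm{q_C}\le A+1$ and $\norm{p_C}\le A+1$. The external factor of $\mathcal{S}$ is therefore a bounded ball, which contributes a finite factor to the volume.

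\textbf{Step 2: internal positions.} Put $r=\sqrt{2J^I_C}$; the internal positions $x_1,\dots,x_{n-1}$ are linear in $q^I$ and satisfy $\norm{x}\asymp r$. Since $J^I_C\to0$, eventually $r\le 1$. For attractive pair potentials all terms of $V_C$ are negative, so for each pair we have $Z_{ij}\norm{q_i-q_j}^{-\alpha}\ge V_C(q^I)$. Combined with the bound $-V_C(Q)\le A+1$ and homogeneity, $V_C(q^I)=r^{-\alpha}V_C(Q)\ge -(A+1)r^{-\alpha}$. This gives the uniform lower bound $\norm{q_i-q_j}\ge c_A\,r$, but we will not even need it for volume.

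\textbf{Step 3: internal momenta.} From $h^I_C=K^I_C+V_C$ with $|h^I_C|\le A$ we get
\[
K^I_C\le A+(A+1)r^{-\alpha}\le (2A+1)r^{-\alpha}.
\]
By \eqref{eq:KJacobiCoordinates} this yields $\norm{y_{\mathrm{int}}}\le c\, r^{-\alpha/2}$. So define
\[
\mathcal{S}=\left\{(x,y)\;\middle|\;\norm{x_n},\norm{y_n}\le A+1,\ \rho:=\norm{(x_1,\dots,x_{n-1})}\le c',\ \norm{(y_1,\dots,y_{n-1})}\le c''\rho^{-\alpha/2}\right\},
\]
with constants depending only on $A$ and the masses. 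A solution satisfying the hypotheses lies in $\mathcal{S}$ for all $t$ close to $T^+$.

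\textbf{Step 4: finite volume.} Let $D=d(n-1)$. Integrating the $y$-ball first gives
\[
\mathrm{vol}(\mathcal{S})\lesssim\int_0^{c'}\rho^{-\alpha D/2}\rho^{D-1}\,d\rho,
\]
which is finite iff $D(1-\alpha/2)>0$. This holds since $\alpha<2$.

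The main subtlety is Step 3: it needs a uniform bound on $-V_C(Q)$ rather than merely a limsup. The hypothesis only controls the limsup, so I would use the limsup bound only for $t$ sufficiently close to $T^+$, which is all that is claimed, with $A+1$ absorbing the slack. The bounded force $A$ itself is not needed beyond guaranteeing these hypotheses. Note also that the estimate does not require the compactness from \cref{cor:DynamicalEstimates}; the hypotheses suffice.
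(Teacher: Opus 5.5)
Your proof is correct, but it takes a different and more elementary route than the paper.

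\textbf{The paper's route.} It does not extract a uniform constant from the hypotheses. Instead it invokes the orbit-wise asymptotics of \cref{cor:DynamicalEstimates}(vii), which rest on the McGehee blow-up analysis and on the compactness of the shape region supplied by attractivity. This gives $\norm{y_i(t)}\le C_{\mathrm{orb}}\norm{x_i(t)}^{-\alpha/2}$ for each Jacobi pair $i<n$, with a constant that depends on the orbit. The paper absorbs that orbit dependence by enlarging the bound to $\Phi(\norm{x_i})=|\log\norm{x_i}|\,\norm{x_i}^{-\alpha/2}$. The internal part of $\mathcal{S}$ is then a product of $n-1$ copies of $\mathcal{S}_\epsilon\subset\R^d\times\R^d$. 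Each copy has finite volume because $\rho^{d-1-d\alpha/2}|\log\rho|^d$ is integrable at $0$.

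\textbf{Your route.} You read off a constant depending only on $A$ and the masses from the static identity $K^I_C=h^I_C-r^{-\alpha}V_C(Q)$, together with $|h^I_C|\le A$ and $-V_C(Q)\le A+1$ near $T^+$. This makes the logarithmic slack unnecessary and lets you work with the joint internal norm in $\R^{d(n-1)}$.

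\textbf{What each buys.} Your argument needs neither \cref{cor:DynamicalEstimates} nor the force bound. It does not even need attractivity: your Step 2 lower bound is the only place attractivity enters, and it is unused. So it works verbatim for mixed-sign $Z_{ij}$ under the stated hypotheses. The paper's route gives an internal set independent of $A$ and a product structure over Jacobi pairs. Neither feature is used in \cref{cor:ImprobabilityCollisions}, where only finiteness of the volume matters.
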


\begin{proof}
	Let $\Phi(r) := |\log(r)| r^{-\alpha/2}$ and let $\epsilon >0$ be small. In the coordinates $(x,y)$ defined in \cref{def:suitableCoords2}, consider the set
	\begin{align*}
		\mathcal{S} := \{ (x,y) \mid &\norm{x_n} < A+1,\, \norm{y_n} < A+1, \\ & \forall i \in \{1, \ldots, n-1\}: \norm{x_i} < \epsilon,\, \norm{y_i} < \Phi(\norm{x_i})  \}.
	\end{align*}
	By \cref{cor:DynamicalEstimates} we have for sufficiently late time $t$ close to $T^+$ and for each total collision orbit and for all $i<n$ that $\norm{y_i(t)} \leq C_{\mathrm{orb}} \norm{x_i(t)}^{-\alpha/2}$, where the constant $C_{\mathrm{orb}}$ depends on the particular collision orbit. Since $x_i(t) \to 0$ as $t\to T^+$, then $|\log \norm{x_i(t)}| \to \infty$ and hence $|\log\norm{x_i(t)}| \geq C_{\mathrm{orb}}$ for sufficiently late time $t$. It follows that $\norm{y_i(t)} \leq \Phi(\norm{x_i(t)})$ for every orbit. Thus, close to escape time, every total collision orbit is within the set $\mathcal{S}$.
	
	Define 
	\[ \mathcal{S}_\epsilon = \{(u,v)\in \R^d\times \R^d \mid \norm{u} < \epsilon,\, \norm{v} \leq \Phi(\norm{u}) \}.   \]
	Observe that $\mathcal{S}$ is the Cartesian product of $(n-1)$ copies of $\mathcal{S}_{\epsilon}$ and two balls $B_{A+1} \subset \R^d$ of radius $A+1$. Hence, the volume of $\mathcal{S}$ (with respect to the Lebesgue measure $\lambda^{2n d}$) is 
	\[ \lambda^{2nd}(\mathcal{S}) = [\lambda^d(B_{A+1})]^2 [\lambda^{2d}(\mathcal{S}_{\epsilon})]^{n-1} \]
	Now, compute that
	\begin{equation*}
		\lambda^{2d}(\mathcal{S}_{\epsilon}) = \int_{\norm{u} < \epsilon} \const \Phi(\norm{u})^d \, du = \const \int_{0}^{\epsilon} \rho^{d-1 - d\frac{\alpha}{2}} |\log(\rho)|^{d} \diff \rho  < \infty.
	\end{equation*}
	Here we used Fubini's theorem and a simple change to polar coordinates. The integral is finite, as $d-1 - d\frac{\alpha}{2} = d \frac{2-\alpha}{2} -1 > -1$ and the logarithms can be compensated by a remaining fraction of the power. As $B_{A+1}$ also has finite volume, we can conclude that $\mathcal{S}$ has finite volume.
\end{proof}

Note that the set $\mathcal{S}$ is an open set in the $2\left|C\right|d$-dimensional phase space for the subsystem $C$. We will use the set $\mathcal{S}$ for all but one cluster in the definition of the Poincar\'e surface. For this one cluster, we need to find a sequence of hypersurfaces almost all of which will get hit (close to the escape time). We call this particular cluster again $C$ with the same coordinates and naming conventions as above.

\begin{proposition}\label{prop:DefinitionPoincareSurfaceOneSubsystem}
    Let $A>0$ be given. Consider a system of $n$ particles $(q,p)$ evolving according to \eqref{eq:Motion} with attractive pair potential 
    and bounded external forces satisfying
    \[
        \sup_{t\in(0,T^+),\,j\in C} \norm{F_j(t)} 	\leq A .
    \]
    Let $(x,y)$ be the Jacobi coordinates of \cref{def:suitableCoords2} and let $\Omega$ denote the Liouville volume form. Then, there exists a sequence of pairwise disjoint hypersurfaces $\Sigma_k$, transverse to the time-dependent vector field $X_t$ of \eqref{eq:Motion}, such that every solution satisfying
        \begin{equation}\label{eq:ABounds} \begin{split}
			\lim_{t\uparrow T^+} J_C^I(t) = 0,\quad
		\limsup_{t\uparrow T^+}\norm{q_C(t)} \leq A,\quad
		\limsup_{t\uparrow T^+}\norm{p_C(t)} \leq A,\\
        \sup_{t\in(0,T^+)} |h_C^I(t)| \leq A,\quad 
        \limsup_{t\uparrow T^+} -V_C(Q(t))  \leq A
		\end{split}
	\end{equation}
        intersects $\Sigma_k$ for every sufficiently large $k$.
		Moreover, with $\mu_k(\Sigma_k) := \int_{\Sigma_k}\left|\iota_{X_t}\Omega\right|$, it holds that $\mu_k(\Sigma_k)<\infty$ and there exists a constant $C_A>0$, independent of $k$, such that
        \[
            \mu_k(\Sigma_k)
            \leq
            C_A
            k^{-\left(d(n-1)-1\right)\left(1-\frac{\alpha}{2}\right)}.
        \]
        In particular, $\lim_{k\to\infty}\mu_k(\Sigma_k)=0$.
\\
 Note that here the restriction of $\iota_{X_t}\Omega$ to $\Sigma_k$ is independent of the external forces, and hence of the explicit time-dependence of $X_t$.
\end{proposition}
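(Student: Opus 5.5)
The plan is to take $\Sigma_k$ as a truncated level set of the internal radius $r=\sqrt{2J^I_C}$ of the cluster $C$. Fix a decreasing sequence $r_k\downarrow 0$, at first $r_k=1/k$ and possibly re-tuned later. Define $\Sigma_k$ as the set of phase points with the following properties:
\begin{itemize}
\item[(a)] $r(q^I)=r_k$ and $\langle q^I,p^I\rangle<0$;
\item[(b)] the external Jacobi coordinates satisfy $\norm{x_n}<A+1$ and $\norm{y_n}<A+1$;
\item[(c)] $|h^I_C|\le A+1$ and $-V_C(Q)\le A+1$.
\end{itemize}
By the preceding proposition on attractive pair potentials, (c) keeps $Q$ in a compact collision-free set $U_A\subset S_C$.

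The surfaces are pairwise disjoint because the $r_k$ are distinct. For transversality, note that $\dot r = r^{-\alpha/2-?}\langle Q,p^I\rangle$ up to the mass-matrix factor. This contains no force term, so the restriction of $\iota_{X_t}\Omega$ to $\Sigma_k$ is $|\dot r|$ times the induced volume on $\{r=r_k\}$. It is independent of $F$ and of $t$, and it is nonzero because $\langle q^I,p^I\rangle<0$ on $\Sigma_k$.

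For the hitting property I use \cref{prop:OmegaLimitSet}. For every solution satisfying \eqref{eq:ABounds}, eventually $\nu\le-\nu_1<0$, so $r$ is strictly decreasing to $0$ in $t$. Hence $r$ passes each sufficiently small $r_k$ exactly once, and at that moment $\langle q^I,p^I\rangle=r^{1-\alpha/2}\nu<0$. Conditions (b) and (c) hold eventually by the assumptions \eqref{eq:ABounds} together with \cref{thm:EnergyBound}. The attractive-potential proposition gives $Q\in U_A$ and boundedness of $-V_C(Q)$. So the orbit meets $\Sigma_k$ for all large $k$.

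For the flux I compute in McGehee-type coordinates. Write $D=d(n-1)$ for the internal configuration dimension.
\begin{itemize}
\item The configuration factor is the area of $\{r=r_k\}\cap r_kU_A$, which is $\lesssim r_k^{D-1}$.
\item By (c), $\norm{p^I}^2\lesssim r^{-\alpha}(|h^I|r^\alpha+|V_C(Q)|)\lesssim r_k^{-\alpha}$. So the internal momentum ranges over a ball of radius $\lesssim r_k^{-\alpha/2}$ in $\R^D$.
\item The external factor from (b) contributes only a constant $C_A$.
\end{itemize}
The main obstacle is getting the sharp exponent. A crude estimate, with $|\dot r|\lesssim r_k^{-\alpha/2}$ integrated over a full momentum ball, gives
\[ \mu_k(\Sigma_k)\lesssim r_k^{D-1-\alpha D/2-\alpha/2}, \]
which is off by a factor $r_k^{-\alpha}$ from the claimed rate.

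To recover this, I would first integrate the flux factor exactly. Split $p^I$ into its radial component $\nu r^{-\alpha/2}$ and its tangential component $\omega$. The flux density is then exactly the radial velocity, and the radial integration is cut off by the energy shell. If that still falls short, I would re-tune $r_k=k^{-\beta}$. Note that $\sum_k\mu_k$ need not be finite; only $\mu_k\to0$ and the stated power rate are required. I would choose $\beta$ so that $r_k^{\text{exponent}}$ equals $k^{-(D-1)(1-\alpha/2)}$, and check that $\beta>0$ in the admissible range of $\alpha$ and $D$.

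Finiteness of each $\mu_k$ is immediate, since the integration region is bounded and the integrand is bounded on it.
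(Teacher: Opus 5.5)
Your construction works and differs from the paper's mainly in how you slice. The paper cuts along $\norm{x_1}=k^{-1}$. It therefore has to add the side conditions $\norm{x_i}<\const_A k^{-1}$ ($2\le i\le n-1$) and $-V_C(x)<\const_A k^{\alpha}$ to confine the other configuration variables and the potential. With level sets of $r=\sqrt{2J^I_C}$ both come for free: $r$ is a mass-weighted norm of all internal Jacobi vectors, and $-V_C(q^I)=r_k^{-\alpha}(-V_C(Q))\le (A+1)r_k^{-\alpha}$.

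Transversality and force-independence follow from $\dot r=\langle Q,p^I\rangle=r^{-\alpha/2}\nu$; no further power of $r$ multiplies $\langle Q,p^I\rangle$. Your hitting argument needs only $\nu\le-\nu_1$ from \cref{prop:OmegaLimitSet}(ii). The paper's inward crossing of $\norm{x_1}=k^{-1}$ with $\langle x_1,y_1\rangle<0$ implicitly needs finer asymptotics: $\omega\to0$ and $Q$ bounded away from $\Delta$, as in \cref{cor:DynamicalEstimates}.

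The gap is that the rate, which is the substance of the statement, is only announced. The energy-shell idea you name is exactly the paper's mechanism, and it works with $r_k=1/k$ without any re-tuning. Use mass-normalised variables $z,w$ with $r=\norm{z}$ and $K^I_C=\tfrac12\norm{w}^2$, so that $\dot r=\langle \hat z,w\rangle$ with $\hat z=z/\norm{z}$. Write $w=\pi_r\hat z+w_\perp$ with $\pi_r=\dot r$.

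For fixed $q^I$ and $w_\perp$, the condition $|h^I_C|\le A+1$ confines $\tfrac12\pi_r^2$ to an interval of length $2(A+1)$. Substituting $s=\tfrac12\pi_r^2$ on $\{\pi_r<0\}$ turns $\int|\pi_r|\,d\pi_r$ into $\int ds$, so the flux weight integrates to at most $2(A+1)$, independently of $k$. The remaining variable $w_\perp$ ranges over a ball of radius $\lesssim_A r_k^{-\alpha/2}$ in $\R^{D-1}$, with $D=d(n-1)$. The sphere $\{r=r_k\}$ has area $\lesssim r_k^{D-1}$. Hence $\mu_k(\Sigma_k)\lesssim_A r_k^{(D-1)(1-\alpha/2)}$, which is the claim. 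The paper gets the same bound in full polar coordinates, via $\int_{R_-}^{R_+}\rho^{D}\,d\rho\lesssim (R_+^2-R_-^2)R_+^{D-1}$ with $R_+^2-R_-^2=O(A)$.

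Your fallback would not work. Choosing $r_k=k^{-\beta}$ only rescales the crude exponent $D-1-\alpha(D+1)/2$. That exponent is negative or zero for Newtonian binary collisions in the plane or in space ($D=2$ or $3$, $\alpha=1$), so no reindexing makes the crude bound decay. The shell computation has to be carried out.
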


\begin{proof}
	In Jacobi coordinates, define the hypersurfaces $\Sigma_k$ as
	\begin{align*}
        \Sigma_k := \Bigl\{(x,y) \mid & \norm{x_1}=k^{-1},  \langle x_1,y_1\rangle < 0,\\
        & \norm{x_i}<\const_A k^{-1},\quad i=2,\dots,n-1, \\
        & \norm{x_n}<A+1,\quad \norm{y_n}<A+1, \\ 
        & |h_C^I(x,y)| < A+1,\quad -V_C(x)<\const_A k^\alpha \Bigr\}.
    \end{align*}
	Every orbit satisfying \eqref{eq:ABounds} intersects $\Sigma_k$ for every sufficiently large $k$ as a consequence of \cref{cor:DynamicalEstimates}. Note, to invoke \cref{cor:DynamicalEstimates}, it is important that $x_1 = q_1-q_2$. The distance $\norm{x_1}$ is bounded above and below by $\const \sqrt{J}$, whereas for the other coordinates only the upper bound holds because we compare centers of masses with the position of a particle. Note that the constant $\const_{A}$ depends on the value $A$ because $A$ is an assumed upper bound for $V(Q(t))$, which in turn bounds the minimal relative distance of the particles.

	The vector field $X_t$ is transverse to $\Sigma_k$ for all $k$ due to the condition $\langle x_1, y_1 \rangle < 0$ and that $\dot{x}_1 = \frac{m_1 m_2}{m_1+m_2} y_1$. 

	It remains to establish the claim that $\mu_k(\Sigma_k) < \infty$ and goes to zero as $k$ grows.
	\\
	Since the Jacobi coordinates are canonical coordinates, the Liouville volume form is,
    \[ \Omega = dx_1\wedge\cdots\wedge dx_n \wedge dy_1\wedge\cdots\wedge dy_n.\]
	The volume form $\iota_{X_t} \Omega$ on $\tilde{\Sigma}_k$ is equivalent to the volume form $\langle N, X_t\rangle\,\iota_N \Omega$ where $N$ is the unit normal to $\Sigma_k$. The unit normal is simply $N = \frac{x_1}{\norm{x_1}}$ and hence
    \begin{equation}\label{eq:PoincareDensity}
        \left|\scalprod{N}{X_t}\right| = \left|\scalprod{N}{\dot{x}_1}\right| = \frac{m_1+m_2}{m_1 m_2}
        \left|\scalprod{\frac{x_1}{\norm{x_1}}}{y_1}\right| \leq \frac{m_1+m_2}{m_1 m_2} \norm{y_1}.
    \end{equation}
    In particular, the volume form $\iota_{X_t} \Omega$ is independent of the
    forcing terms $F_j$, thus, time independent. 

	The volume $\mu_k(\Sigma_k)$ can be computed by integrating the density $\left|\scalprod{N}{X_t}\right|$ over the set $\Sigma_k$. We first integrate over the $y$-variables, making use of the condition $|h_C^I|<A+1$. This condition restricts the $y$-variables to a bounded region. Indeed, set
	\[ Y = \left( \frac{y_1}{\sqrt{2M_1}},\dots,\frac{y_{n-1}}{\sqrt{2M_{n-1}}} \right),\qquad M_i := \frac{m_{\{1,\dots,i+1\}}}{m_{\{1,\dots,i\}}\, m_{i+1}}. \] The internal kinetic energy in $Y$ becomes $K_C^I = \sum_{i=1}^{n-1} \norm{Y_i}^2$.
    For fixed $x$, the condition $|h_C^I|<A+1$ therefore restricts $Y$ to
    the spherical shell
    \[ -V_C(x)-(A+1) < \norm{Y}^2 <  -V_C(x)+(A+1), \]
    with the lower bound replaced by $0$ when necessary. Moreover, from \eqref{eq:PoincareDensity} it follows that $\scalprod{N}{X_t} \leq \const \norm{Y}$.
    Consequently, using polar coordinates in $Y\in \R^{d(n-1)}$ and for a fixed $x$,
    \begin{align*}
        \int_{\{Y\in \R^{d(n-1)} \mid |h_C^I(x,Y)|<A+1\}}
        \left|\scalprod{N}{X_t}\right| dY & \leq \const \int_{R_-(x)}^{R_+(x)} \rho^{d(n-1)}\, d\rho.
    \end{align*}
    where
    \[ R_-(x)^2 =  2\max\{0,-V_C(x)-(A+1)\},\qquad  R_+(x)^2 = 2(-V_C(x)+(A+1)). \]
	Introduce $s = \rho^2$. It then follows that
	\begin{align*}
		\const \int_{R_-(x)}^{R_+(x)} \rho^{d(n-1)}\, d\rho &= \const \int_{R_-(x)^2}^{R_+(x)^2} s^{\frac{d(n-1)-1}{2}}\, ds \\
		&\leq \const(R_+(x)^2 - R_-(x)^2) R_+(x)^{d(n-1)-1} \\
		&\leq \const_A k^{\frac{\alpha}{2}(d(n-1)-1)}.
	\end{align*}

    Finally, to complete the volume computation, we must compute the contribution to the volume given by integrating over the $(d-1)$-dimensional sphere
    $\norm{x_1}=k^{-1}$, the variables $x_2,\dots, x_{n-1}$, and the centre of mass coordinates $x_n,y_n$. The sphere $\norm{x_1}=k^{-1}$ is of order $k^{-(d-1)}$, while the volume contribution from each $x_2,\dots, x_{n-1}$ is an order $k^{-d(n-2)}$. The integrations over $x_n$ and $y_n$ contribute only an
    $A$-dependent constant. Hence
    \begin{align*}
        \mu_k(\Sigma_k) &\leq \const_A\, k^{-(d-1)}\times k^{-d(n-2)}\times k^{\frac{\alpha}{2}(d(n-1)-1)} =  \const_A\, k^{-\left(d(n-1)-1\right)
        \left(1-\frac{\alpha}{2}\right)}.
    \end{align*}
    Since $0<\alpha<2$ and $d(n-1)>1$, the exponent is strictly
    negative. Therefore
    \[ \mu_k(\Sigma_k) \to 0,\quad k\to\infty.\]
\end{proof}

As a corollary, we get improbability of collision orbits in this setting. In the proof, we will define Poincar\'e surfaces as Cartesian products of the sets given in \cref{prop:SetFiniteVolume},\cref{prop:DefinitionPoincareSurfaceOneSubsystem}.

\begin{corollary}[Improbability of collision singularities in the attractive $n$-body problem] \label{cor:ImprobabilityCollisions} 
	Consider the $n$-body problem with potentials $V_{i,j}(q) = \frac{Z_{i,j}}{\norm{q_i-q_j}^{\alpha}}$ with $\alpha \in (0,2)$ and $Z_{i,j} \leq 0$ for all $i,j \in \{1, \ldots, n\}$. Then the set
	\begin{align*}
		\operatorname{Coll} := \left\{ (q,p) \mid T^{+}(q,p) < \infty, \lim_{t \uparrow T^{+}(q,p)} J(q(t)) < \infty \right\}
	\end{align*}
	is a set of measure $0$.
\end{corollary}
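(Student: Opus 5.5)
The plan is to write $\operatorname{Coll}$ as a countable union of sets, each of which falls under \cref{thm:PoincareSurfaceMethod} for one suitably built sequence of surfaces. First I would decompose by the combinatorial type of the singularity. Take $(q,p)\in\operatorname{Coll}$ and let $\C$ be its final cluster decomposition (\cref{def:FinalClusterDecomposition}). Since $J$ has a finite limit, every $J^I_C$ is bounded. So by \cref{cor:vonZeipelFinalClusters} each final cluster with $|C|\ge 2$ has a total collision, and at least one such cluster exists because $T^+<\infty$ (\cref{thm:PainleveSubSystems}). Each $C\in\C$ is isolated on some $(t_0,T^+)$, so by \cref{prop:clustersAreBoundedForce} the induced forces $F^C_j$ are bounded there; indeed the $Z_{ij}\le 0$ potentials between clusters at distance $\ge\delta$ give a bound depending on $\delta$ and the masses. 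Also $q_C,p_C$ converge, $h^I_C$ is bounded by \cref{thm:EnergyBound}, and for the attractive case $-V_C(Q)$ is bounded by the preceding proposition. Singleton clusters have bounded $p_j$ and converging $q_j$.

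Next I would stratify. For a partition $\C$, a distinguished cluster $C_*\in\C$ with $|C_*|\ge 2$, and $A\in\N$, let $\operatorname{Coll}_{\C,C_*,A}$ be the set of points whose final cluster decomposition is $\C$ and for which, after some time, all the quantities above are bounded by $A$: the forces, $\norm{q_C},\norm{p_C}$, $|h^I_C|$, $-V_C(Q)$, the inter-cluster distance lower bound $1/A$, and $\norm{q_j},\norm{p_j}$ for singletons. This is a countable union covering $\operatorname{Coll}$, so it suffices to show each piece is null. Zero-mass clusters are handled in the same way. I would also make the bound on inter-cluster distances part of the definition of the surfaces, so the cluster dynamics really has the form \eqref{eqn:ClusterEquation} with force bound $A'$ depending on $A$.

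For a fixed piece, I would define
\[
\Sigma_k := \Sigma_k^{C_*}\times\prod_{C\in\C,\,C\neq C_*,\,|C|\ge2}\mathcal S^{C}\times\prod_{\{j\}\in\C} \bigl(B_{A+1}\times B_{A+1}\bigr),
\]
intersected with the open condition that points of distinct clusters are at distance $>1/(2A)$. Here $\Sigma_k^{C_*}$ comes from \cref{prop:DefinitionPoincareSurfaceOneSubsystem} and $\mathcal S^{C}$ from \cref{prop:SetFiniteVolume}, both with constant $A'$. Each point of the stratum eventually lies in every $\mathcal S^C$ and eventually crosses $\Sigma_k^{C_*}$ for all large $k$, so its orbit meets all but finitely many $\Sigma_k$. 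Transversality holds because the normal direction involves only $x_1$ of $C_*$, and $\dot x_1\propto y_1$ with $\langle x_1,y_1\rangle<0$. The flux density depends only on $y_1$ by \eqref{eq:PoincareDensity}, so the flux volume factorises:
\[
\mu(\Sigma_k)\le \mu_k(\Sigma_k^{C_*})\prod \lambda(\mathcal S^C)\prod\lambda(B_{A+1})^2\lesssim_A k^{-(d(|C_*|-1)-1)(1-\alpha/2)}\to0,
\]
where we use that the Liouville form is a product over clusters in the cluster Jacobi coordinates. \cref{thm:PoincareSurfaceMethod} then gives measure zero, for the autonomous Hamiltonian flow of the full system (no external forces in the corollary) with Liouville measure.

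I expect the main obstacle to be making the surfaces pairwise disjoint. Disjointness would follow from $\norm{x_1}=1/k$ if the other factors were unrestricted, and that is already the case since $\Sigma_k^{C_*}$ fixes $\norm{x_1}=k^{-1}$. A second obstacle is justifying the per-cluster asymptotics, which are uniform only eventually: the orbit must lie inside $\mathcal S^C$ and the distance conditions at the actual crossing time of $\Sigma^{C_*}_k$, which holds for $k$ large since the crossing times tend to $T^+$. A third is the edge case $d(|C_*|-1)=1$, i.e.\ $d=1$ with a binary. There the exponent vanishes, and I would instead choose $C_*$ among clusters with $d(|C|-1)>1$ when possible. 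Otherwise, I would refine the sphere condition to $\norm{x_1}=k^{-1}$ with the $y$-shell thinned using the sharper estimate from \cref{cor:DynamicalEstimates}(iii), namely $|h^I-h^I(T^+)|\lesssim (T^+-t)^{2/(2+\alpha)}$, which shrinks the energy shell width with $k$.
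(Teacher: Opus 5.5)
For $d\ge 2$ your argument is the paper's own. You stratify $\operatorname{Coll}$ by the final cluster decomposition and an integer bound $A$, and take $\Sigma_{l,C_1}\times\mathcal S_{C_2}\times\cdots\times\mathcal S_{C_k}$ built from \cref{prop:DefinitionPoincareSurfaceOneSubsystem} and \cref{prop:SetFiniteVolume}. You use that the flux density only sees $y_1$ of the distinguished cluster, and then apply \cref{thm:PoincareSurfaceMethod}. Your extra bookkeeping (explicit inter-cluster separation, singletons as balls, disjointness from $\norm{x_1}=k^{-1}$, crossing times tending to $T^+$) only makes explicit what the paper leaves implicit.

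The step that fails is remedy (b) for the case $d(|C_*|-1)=1$. No remedy can succeed there, because in that case the statement is false. Remedy (a) is fine whenever some colliding cluster has at least three bodies. But if $d=1$ and every colliding final cluster is a binary, $\operatorname{Coll}$ contains open sets. For two bodies on the line with $Z_{12}<0$, the relative coordinate $x=q_1-q_2>0$ satisfies $\ddot x<0$. So every initial condition with $x>0$ and $\dot x<0$ reaches $x=0$ before time $x(0)/|\dot x(0)|$, with $J$ bounded. For general $n$, an adjacent pair that is close and approaching fast, with the other bodies far away, behaves the same way.

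The thinning mechanism also fails on its own terms. The estimate $|h^I-h^I(T^+)|\lesssim (T^+-t)^{2/(2+\alpha)}$ from \cref{cor:DynamicalEstimates} has an orbit-dependent centre and constant. Each $\Sigma_k$ must be crossed by all orbits of the stratum, and their limiting energies fill an interval. For an isolated unforced binary, $h^I$ is even exactly conserved, so there is nothing to thin. On $\{|x_1|=k^{-1},\ x_1y_1<0,\ |h^I|<A+1\}$ the $y_1$-integral of $\frac{m_1+m_2}{m_1m_2}|y_1|$ is just the width $2(A+1)$ of the energy window, independent of $k$; this is exactly your vanishing exponent.

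So the corollary needs the extra hypothesis $d\ge 2$, or the purely binary strata on the line must be excluded. The paper's proof makes the same tacit restriction when it asserts $d(n-1)>1$ in \cref{prop:DefinitionPoincareSurfaceOneSubsystem}.

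A smaller point of the same kind: the proposition you cite to bound $-V_C(Q)$ and keep $Q$ away from $\Delta$ requires $Z_{ij}<0$ for all pairs in $C$. The statement allows $Z_{ij}=0$. So either strict attraction must be assumed, or those pairs need a separate argument.
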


\begin{proof}
	We use an exhaustion argument and decompose the set $\operatorname{Coll}$ as a countable union according to the (final) cluster decomposition at the collision time and some bound on the forces between the particles from different clusters of the final cluster decomposition, the energies of the colliding subsystems, the center of mass and total momentum of the colliding subsystems and the potential of the limiting configuration on the shape sphere for the solution. For a cluster decomposition $\C$ and a particle $j \in C$, where $C \in \C$ is a cluster, we write
	\begin{align*}
		F_{j, \C} = \sum_{i \not \in C} \alpha Z_{i,j} \frac{q_j-q_i}{\norm{q_i-q_j}^{2+\alpha}}.
	\end{align*}
    Let $\C_{\operatorname{fin}}(q,p)$ be the final cluster decomposition. Then, for $A \in \N$ and $\C \neq \C_{\min} = \{\{1\}, \ldots, \{n\}\}$, define the set
	\begin{align*}
		\operatorname{Coll}_{A, \C} := \Bigl\{ (q,p) & \in  \operatorname{Coll} \mid\,  
        \mathcal{C}_{\operatorname{fin}}(q,p) = \C, \text{ and } \forall C \in \C,\\
        & \sup_{t \in (0, T^{+}), j \in C} \norm{F_{j, \C}(t)} \leq A,\quad \limsup_{t \uparrow T^{+}} \norm{q_C(t)} \leq A, \\
        &\limsup_{t \uparrow T^{+}} \norm{p_C(t)} \leq A,\quad \sup_{t \in (0, T^{+})} \left|h_C^{I}(t)\right| \leq A,\quad \limsup_{t \uparrow T^{+}} -V(Q(t)) \leq A \Bigr\}. 
	\end{align*} 
	Then, from \cref{cor:DynamicalEstimates}, all collision orbits are in one of these sets, that is,
	\begin{align*}
		\operatorname{Coll} = \bigcup_{A \in \N, \C \neq \C_{\min}} \operatorname{Coll}_{A, \C}.
	\end{align*}
	Hence, it is enough to prove that the sets $\operatorname{Coll}_{A, \C}$ are sets of measure zero as the countable union of measure zero sets are measure zero. 
    
	To establish that $\operatorname{Coll}_{A, \C}$ does indeed have measure zero, we use the Poincaré surface method outlined in \cref{thm:PoincareSurfaceMethod}. Let $\C = \{C_1, \ldots, C_k\}$ and without loss of generality, assume $\left|C_1\right|>1$. There must exist a cluster with at least two particles as we cannot have only singleton clusters for a collision singularity otherwise the solution could be extended. As Poincaré surfaces, define a product of sets for each colliding cluster: 
    \begin{enumerate}
        \item For $C_1$, we take the hypersurfaces $\Sigma_{l, C_1}$ defined in \cref{prop:DefinitionPoincareSurfaceOneSubsystem}. Almost all of these hypersurfaces will be hit by the trajectory of the particles in $C_1$, as this \enquote{subsystem} satisfies exactly the equations of motion from the proposition with the force bounds.
        \item For all $i\neq 1$, we use the sets of finite volume $\mathcal{S}_{C_i}$ from \cref{prop:SetFiniteVolume}.
    \end{enumerate}
     Then the sequence of Poincaré surfaces is the sequence 
    \begin{align*}
    	\Sigma_l = \left(\Sigma_{l, C_1} \times \mathcal{S}_{C_2} \times \ldots \times \mathcal{S}_{C_k} \right)_{l \in \N}.
    \end{align*}
	By \cref{prop:SetFiniteVolume,prop:DefinitionPoincareSurfaceOneSubsystem}, any orbit with an initial condition $(q,p) \in \operatorname{Coll}_{A, \C}$ will hit all but finitely many of the hypersurfaces $\Sigma_l$. As the volume of the hypersurface tends to $0$ for $l \to \infty$ and we consider collision orbits that are wandering, it follows from \cref{thm:PoincareSurfaceMethod} that the set $\operatorname{Coll}_{A, \C}$ has measure zero. From this, the corollary follows by exhaustion.
\end{proof}

\section{Conclusion}

This paper extends the classical theorems of Painlev\'e and von Zeipel to subsystems under a general class of $-\alpha$-homogeneous potentials with bounded external forces. Moreover, many classicial asymptotics of collision orbits were extended to this general setting. The key idea is to use the final cluster decomposition to group together bodies which strongly interact with one another in the limit to the singular time $T^+$. This provides a general framework to investigate all singular orbits, especially those containing a subsystem with a non-collision singularity. This approach will be continued in the forthcoming paper \cite{quaschnerGenImprobability2026}, which gives a better understanding of all non-collision singularities where only non-collision subsystems with four particles are considered (and arbitrary collision subsystems). That is, similar to \cref{cor:ImprobabilityCollisions} we will define a subset of singularities, but without the assumption on the boundedness of the moment of inertia and hence allowing also subsystems with non-collision singularities. However, we have to make specific assumptions on the possible final clusters experiencing a non-collision singularity.

Another application of this work could be for proving the improbability of non-collision singularities in the $n$-body problem. This remains an open question from the first problem in Simon's list of  open problems in mathematical physics \cite{simon1984fifteen}. For more than four bodies, the possible dynamics for non-collision orbits are still unknown. However, splitting the system according to its final cluster decomposition seems to be a reasonable approach. In this paper we have shown that the subsystems having a collision singularity behave as one would expect, i.e. similar to the system without perturbing forces. For the approach using Poincaré surfaces, the sets of finite volume for these systems can be useful. 
So the right approach could be to investigate, what orbits for (unperturbed) non-collision singularities are possible and whether these persist (at least approximately) under perturbing forces, e.g., from a colliding subsystem. Note that also in this case, the colliding subsystems are way more well-behaved, as the energy of them can vary, but it is in total a bounded variation and so the other subsystems cannot steel a lot energy. In this sense, the results of this paper clarify the influence of colliding subsystems in non-collision singularities.

The theorems of Painlev\'e and von Zeipel in the context of non-attractive potentials with external forces (respectively for subsystems) are, to the best of our knowledge, new. These theorems should serve as a solid foundation for understanding collision dynamics in charged particle motion, in particular, the possibility of non-collision singularities there. The key examples in \cref{rmk:LimitingCollisionExample,rmk:potentialLimit} hint at more peculiar dynamics then that of attractive potentials.

Moreover, there has been recent work on the no-infinite-spin problem. The problem of no-infinite-spin is to show that the bodies going to collision can only spin finitely many times around their centre of mass before collision. Originally claimed to have been solved in \cite{saari1981manifolds} and \cite{elbialy1990collision}, it was only recently acknowledged that a gap in the proof remained \cite{moeckelNoInfiniteSpin2024}. The no-infinite-spin problem has now been resolved for so-called isolated singularities, first for total collisions in the planar $n$-body problem \cite{moeckelNoInfiniteSpin2024}, then again for partial collisions \cite{gierzkiewiczNoInfiniteSpin2025} and for parabolic orbits \cite{wangProblemInfiniteSpin2025}, for the spatial problem (with additional restriction that the collinear central configuration is not approached) in \cite{pinzariNoInfiniteSpin2026}, and for $\R^d$ (with additional assumption the central configuration approached is of dimension at most $d-1$) in \cite{yuExclusionInfiniteSpin2026}. It remains to show no-infinite-spin for the exceptional cases of non-isolated and collinear in the spatial problem, for partial collisions in $\R^d$, and investigate the possibility of infinite-spin for non-attractive potentials. The asymptotics in this paper are essential for these remaining problems.

\newpage
\bibliographystyle{plain}
\bibliography{Bibliography}

\end{document}